\documentclass[a4paper,10pt]{article}

\usepackage[utf8]{inputenc}
\usepackage{a4wide}
\usepackage{mathtools,amssymb,amsmath,amsthm}
\usepackage{xcolor}

\usepackage[ocgcolorlinks,colorlinks=true,linkcolor=black,urlcolor=blue,citecolor=green!50!black]{hyperref}
\usepackage{authblk}

\usepackage[capitalize]{cleveref}

\usepackage{csquotes}

\usepackage{navigator}
\embeddedfile{maplemw}{NoyauDoubleCerle.mw}
\embeddedfile{maplepdf}{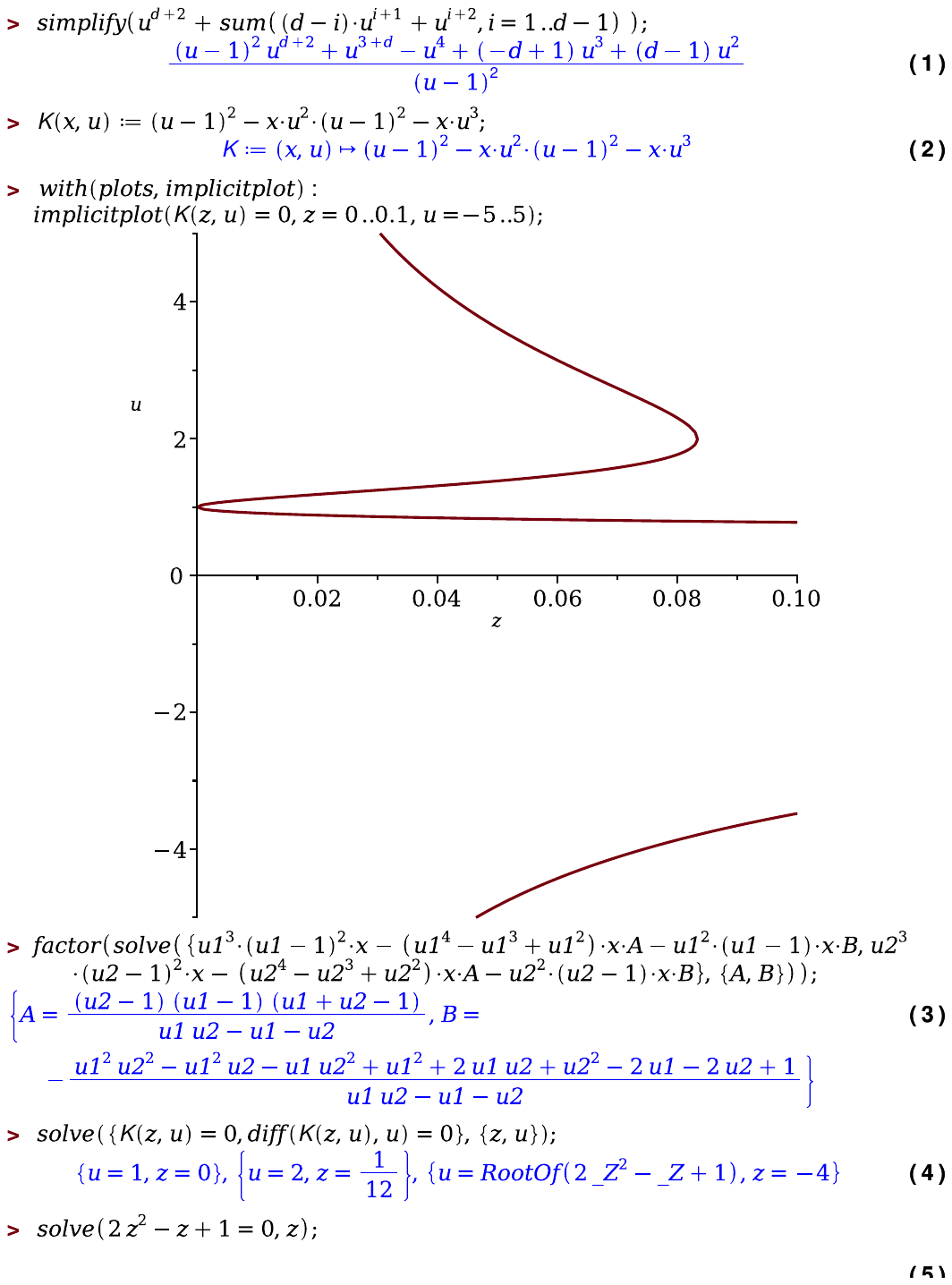}

\newtheorem{definition}{Definition}[section]

\newtheorem{proposition}[definition]{Proposition}
\newtheorem{corollary}[definition]{Corollary}
\newtheorem{lemma}[definition]{Lemma}
\newtheorem{theorem}[definition]{Theorem}
\newcommand{\R}[0]{\mathbb{R}}
\newcommand{\C}[0]{\mathbb{C}}
\newcommand{\RR}[0]{\mathbb{R}}
\newcommand{\eqdef}{\stackrel{\text{def}}=}
\newcommand{\pth}[1]{\left(#1\right)}
\renewcommand{\cal}{\mathcal}
\theoremstyle{remark}
\newtheorem{remark}[definition]{Remark}

\newcommand{\wchi}{\widetilde{\chi}}

\newcommand{\p}{\mathfrak{p}}

\DeclareMathOperator{\DC}{DC}
\DeclareMathOperator{\vect}{Vect}

\title{Large chirotopes with computable numbers of triangulations}
\date{}

\author[1]{Mathilde Bouvel}

\affil[1,3]{Université de Lorraine, CNRS, INRIA, LORIA, F-54000 Nancy, France}
 
\author[2]{Valentin Féray}
 \affil[2]{Université de Lorraine, CNRS, IECL, F-54000 Nancy, France}
 
 \author[3]{Xavier Goaoc}
 \author[4]{Florent Koechlin}
 \affil[4]{CNRS, Université Sorbonne Paris Nord, LIPN, F-93430 Villetaneuse, France}
 
\begin{document}

\maketitle

\begin{abstract}
{Chirotopes are a common combinatorial abstraction of (planar) point sets. In this paper we investigate decomposition methods for chirotopes,} and their application to the problem of counting the number of triangulations supported by a given planar point set.
In particular, we generalize the convex and concave sums operations defined by Rutschmann and Wettstein for a particular family of chirotopes (which they call chains),
and obtain a precise asymptotic estimate for the number of triangulations of the double circle, using a functional equation and the kernel method.
\end{abstract}

\section{Introduction}

Oriented matroids are combinatorial structures that can be used to describe a variety of objects, from acyclic orientations of graphs to the mutual position of points, vectors or hyperplanes in $\R^d$. Here we consider the oriented matroids that arise from point configurations in the plane, given by their {\em chirotope}, that is the function mapping each ordered triple of points to its orientation. Formally, the {\em chirotope of a set} ${\cal P} = \{\p_\ell\}_{\ell
  \in X}$ of points in general position labeled by $X$ is the function
\[ \chi_{\cal P}: \left\{\begin{array}{rcl}
(X)_3 & \to & \{-1,+1\}\\
(x,y,z) & \mapsto & \left\{\begin{array}{rl}
+1& \text{ if $\p_x,\p_y,\p_z$ are in counterclockwise order,}\\
-1 &\text{ if $\p_x,\p_y,\p_z$ are in clockwise order.}\\
\end{array}\right.
\end{array}\right.\]
Here, $(X)_3$ is the set of triples $(x,y,z)$ of distinct elements in $X$. 
This function encodes the {\em labeled order type}~\cite{GP83} of the
point set. We say that chirotopes of point sets are {\em realizable}
to distinguish them from their combinatorial (abstract)
generalization ({see} \cref{s:context}).

In this paper, we investigate a decomposition method for chirotopes, that is, a process to describe large rooted chirotopes in terms of smaller parts. (A {\em rooted} chirotope is a chirotope with a distinguished label.) This method is based on two operations, {\em joins} and {\em meets}, which generalize similar operations on chains defined by Rutschmann and Wettstein~\cite{rutschmann2023chains}. We make the following contributions:
\begin{itemize}
\item We prove that the join and meet operations, which are natural for point sets, can be defined at the abstract level of (rooted) chirotopes. In particular, the join/meet of two realizable chirotopes is again a realizable chirotope (Propositions~\ref{prop:join_realizable} and~\ref{prop:meet_realizable}). 
\item {We associate to each rooted chirotope a bivariate polynomial such that (i) the number of triangulations of the chirotope can be easily recovered from this bivariate polynomial (Lemma~\ref{lem:identifying_true_triangulations}){, and (ii)} these bivariate polynomials can be recursively computed along join and meet 
operations (Proposition~\ref{prop:triangulation_join}).}
\item As an application, we obtain precise asymptotic formulas for the number of triangulations of the double circle $\DC_n$ with $n$ external (and thus $n$ internal) points (\cref{thm:triangulations_double_circle}).
\end{itemize}

\noindent
Throughout the paper, all point sets are finite, planar and in general position, meaning that {\bf no three points are ever aligned} (for simplicity). In other words, all (realizable) chirotopes we consider are {\bf simple}.

\subsection{Context and related work}\label{s:context}

 \paragraph{(Abstract) chirotopes.}
 
The chirotopes of sets of labeled points are called {\em realizable chirotopes} and are special cases of purely combinatorial objects which we now define. Given a finite set $X$, we write $(X)_3$ for the set of triples $(x,y,z)$ of distinct elements in $X$. A {\em chirotope} on a finite set $X$ is a function $\chi: (X)_3 \to \{-1,1\}$ that
satisfies the following properties:\medskip
\begin{description}
\item[(symmetry)] for any distinct $x, y, z \in X$,
  \begin{equation}\label{eq:symmetry_condition}\chi(x,y,z)=\chi(y,z,x)=\chi(z,x,y)=-\chi(z,y,x)=-\chi(y,x,z)=-\chi(x,z,y);\end{equation}
\item[(interiority)] for any distinct $t, x, y, z \in X$,
  \begin{equation} \chi(t,y,z)=\chi(x,t,z)=\chi(x,y,t)=1 \quad \Rightarrow \quad \chi(x,y,z)=1;
  \end{equation}
\item[(transitivity)] for any distinct $s,t, x, y, z \in X$,
  \begin{equation} \chi(t,s,x)=\chi(t,s,y)=\chi(t,s,z) = \chi(x,y,t) = \chi(y,z,t)=1 \quad \Rightarrow \quad \chi(x,z,t) =1.\end{equation}
\end{description}
Actually, such a map is a {\em simple} chirotope since no triple is mapped to $0$, but we drop the word ``simple'' throughout the paper. One can check by elementary geometric arguments that every realizable chirotope is indeed a chirotope. We may use the term {\em abstract chirotope} to emphasize that a chirotope is not necessarily realizable. 
Finally, functions from $(X)_3$ to $\{-1,1\}$ satisfying the symmetry axiom but not necessarily the interiority and transitivity ones will be referred to as {\em sign functions}.

\paragraph{Geometric properties for abstract chirotopes.}

Geometric notions that can be expressed only through orientations of triples of points generalize from planar point sets to abstract chirotopes. Let us give two important example. First, an element $x \in X$ is {\em extreme} in a chirotope $\chi$ on $X$ if there exists $y \in X\setminus\{x\}$ such that $\chi(x,y,z)$ is the same for all $z \in X\setminus \{x,y\}$. With this definition, extreme elements of realizable chirotopes correspond precisely to the vertices of their convex hulls. Second, given a point set $P=\{\p_\ell\}_{\ell \in X}$, two segments $\p_x\p_y$ and $\p_z\p_t$ with distinct endpoints in $P$ cross if and only if $\chi_P(x,y,z) = - \chi_P(x,y,t)$ and $\chi_P(z,t,x) = -\chi_P(z,t,y)$. We can therefore define the crossing of segments for abstract chirotopes: a {\em segment} in a chirotope $\chi$ on $X$ is a pair of elements of $X$, and the segments $xy$ and $zt$ {\em cross in $\chi$} if they satisfy the above condition (with $\chi$ instead of $\chi_P$). 
xe

\paragraph{Numbers of triangulations}

A {\em triangulation} of a chirotope $\chi$ is an inclusion-maximal family of segments such that no two cross in $\chi$. This naturally extends the notion of triangulation of a point set. Let $\mathfrak{T}(\chi)$ denote the set of triangulations of the chirotope $\chi$. When $\chi$ is the chirotope of $n$ points in convex position, $|\mathfrak{T}(\chi)|$ is the $(n-2)$th Catalan number. 
{In general, the problem of maximizing or minimizing the number of triangulations among all possible sets of $n$ points is open (even at an asymptotic level).
Indeed,} for every realizable chirotope $\chi$ of size $n$, it is proven that $|\mathfrak{T}(\chi)|$ is at most $30^n$~\cite{countingT} and is at least $\Omega(2.63^n)$~\cite{min}. 
{By comparison,} the known construction with the largest number of triangulations is the so-called {\em Koch chain} $K_s$, with $n=2^s+1$ points and $|\mathfrak{T}(K_s)| = \Omega(9.08^n)$~\cite{rutschmann2023chains} (more below). 
{As for the lower bound,} the known construction with the smallest number of triangulations is the double circle $\DC_{n}$, with $2n$ points and $|\mathfrak{T}(\DC_{n})| = (12+o(1))^{n}$~\cite{hurtado1997double-circle}.\footnote{\label{footnote1}To be more specific, this result is often mentioned in the literature with reference to~\cite{hurtado1997double-circle}; however, we could not find it precisely in~\cite{hurtado1997double-circle}, neither can we derive it simply from their signed explicit formula. Nevertheless, we shall compute the precise asymptotic of $|\mathfrak{T}(\DC_{n})|$ in \cref{sec:DC}, confirming this statement.}
The double circle is actually conjectured to minimize the number of triangulations among $n$-point realizable chirotopes.

\paragraph{(Koch) chains, convex and concave sums.}

Let us outline some of the ideas behind the construction of the Koch chain (in a language better suited for our purpose). A segment in a chirotope $\chi$ is {\em unavoidable} if it belongs to every triangulation of $\chi$. A {\em chain} {is a chirotope $\chi$ on $X$ such that the elements of $X$ can be arranged into a sequence $x_1,x_2, \ldots, x_n$ where} each pair $\{x_i,x_{i+1}\}$ is an unavoidable segment in $\chi$. Ruschmann and Wettstein~\cite{rutschmann2023chains} defined two operations (shown in \cref{fig:sums}), called {\em convex sum} ($\vee$) and concave sum ($\wedge$), that combine two smaller chains into a larger one. {They also introduce a {\em flip} operation $\bar \cdot$, which simply consists in reversing all orientations.}

\begin{figure}[ht]
\[\includegraphics{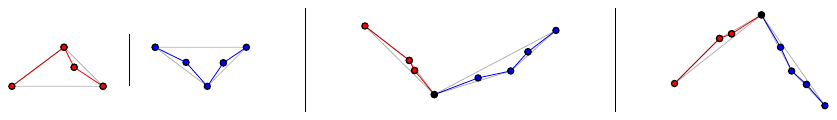}\]
\caption{Two chains (left), their convex sum (center) and their concave sum (right).}
\label{fig:sums}
\end{figure}

\noindent
{An important simplification in considering chains is that every triangulation can be decomposed as a lower part and an upper part, separated by the unavoidable edges. Hence one can study the lower and upper part separately.
Introducing appropriate generating polynomials for the lower and upper parts, it is possible to
compute the number of triangulations by a recursive procedure}~\cite[Lemmas~22 and~24]{rutschmann2023chains}.

We now define the Koch chain, following~\cite{rutschmann2023chains}. Let $E$ denote the chirotope consisting of two points. The Koch chain is defined recursively, by alternating concave and convex sums as follows:

\[ K_2 = E \qquad \text{and} \qquad K_{s+2} = \pth{K_s \wedge K_s} \vee \pth{K_s \wedge K_s}.\]

\noindent
The chirotope $K_{2s}$ is thus a chain with $2^{2s}$ segments, and thus a chirotope on $2^{2s}+1$ elements. As discussed above, for large sizes, this is the known family of chirotopes with the larger number of triangulations.

\paragraph{Relation to modular decomposition of chirotopes.}

This work is in some sense a sequel of~\cite{sidmapapiercombi}, which adapts modular decomposition techniques to chirotopes. Both joins/meets and modular decompositions allow to perform computations on some subclass of (large) chirotopes by breaking them down into smaller chirotopes. This is showcased on the problem of counting triangulations. Also, both techniques open the way to the application of methods from analytic combinatorics. Note however that the classes of decomposable chirotopes are distinct: the double circle with one internal point removed is indecomposable for the modular decomposition but decomposable for joins/meets, and conversely for the chirotope shown in~\cite[Figure~1]{sidmapapiercombi}. Let us stress that unlike for modular decomposition, we do not explore {here} structural properties of joins/meets such as the existence of canonical decompositions or the proportion of indecomposables.

\subsection{Our results}\label{s:results}

In this paper we work with rooted chirotopes. A {\em rooted chirotope} is a pair $(\chi,u)$, where $\chi$ is a chirotope and $u$ is an {\em extreme element} of $\chi$. In the case where $\chi$ is realizable, we say that $(\chi,u)$ is a {\em rooted realizable chirotope}. Given $(\chi,u)$, we denote by $X$ the set of non-root elements of $\chi$, i.e.~$X$ does not contain $u$ and $\chi$ is a chirotope on $X \cup \{u\}$. We also denote by $u^+$ and $u^-$, respectively, the successor and predecessor  of $u$, in counterclockwise order, on the convex hull of $\chi$. Here is a summary of our contributions.

\paragraph{Join and meet operations.}

We first define generalizations of the convex and concave sums, which we call the join and meet operations; they act not only on chains, but on arbitrary rooted chirotopes.

\begin{figure}[ht]
\[\includegraphics{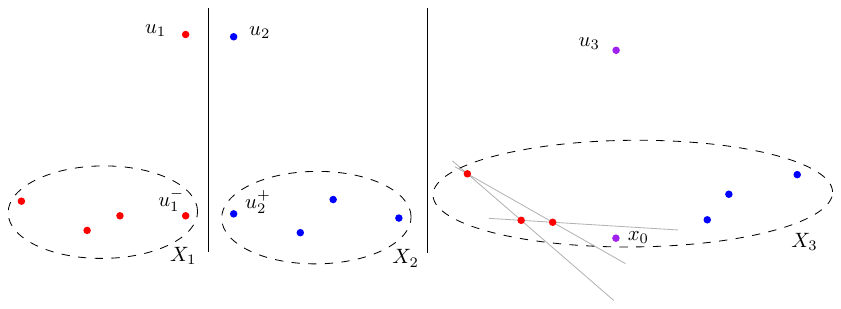}\]
\caption{Example of join of two (realizable) chirotopes.}
\label{fig:join}
\end{figure}

To give a geometric intuition of the join operation, consider two point sets $X_1\cup\{u_1\}$ and $X_2\cup\{u_2\}$ with $u_i$ extreme in $X_i$ for $i=1,2$. Their join is, informally, a merge $X_3\cup\{u_3\}$ of $X_1\cup\{u_1\}$ and $X_2\cup\{u_2\}$ where, as illustrated in Figure~\ref{fig:join}:
\begin{itemize}
	\item $u_1$ and $u_2$ are identified as $u_3$,
	\item $u_1^{-}$ and $u_2^{+}$ are identified as $x_0$. 
	\item no line through two points of $X_1$ separates $u_3$ from any point of $X_2\setminus\{u_2^+\}$, and, symetrically, no line through two points of $X_2$ separates $u_3$ from any point of $X_1\setminus\{u_1^-\}$.
\end{itemize}

\begin{figure}[ht]
\[\includegraphics{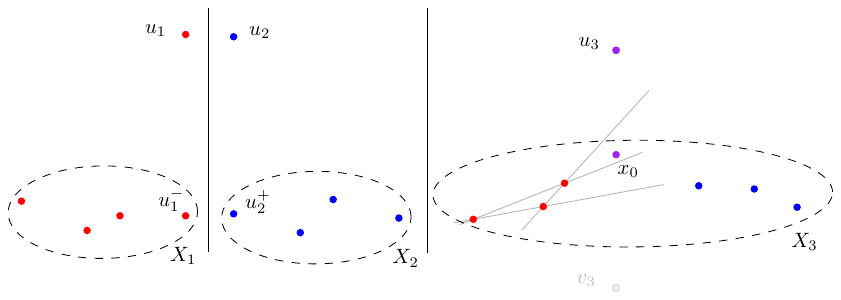}\]
\caption{Example of meet of two (realizable) chirotopes.}
\label{fig:meet}
\end{figure}

Similarly, as illustrated in \cref{fig:meet}, in the meet $X_3\cup\{u_3\}$ of $X_1\cup\{u_1\}$ and $X_2\cup\{u_2\}$ we have that:

\begin{itemize}
	\item $u_1$ and $u_2$ are identified as $u_3$,
	\item $u_1^{-}$ and $u_2^{+}$ are identified as $x_0$. 
	\item every line through two points of $X_1$ separates $u_3$ from every point of $X_2\setminus\{u_2^+\}$, and, symetrically, every line through two points of $X_2$ separates $u_3$ from every point of $X_1\setminus\{u_1^-\}$.
\end{itemize}

\noindent
In Section~\ref{sec:Join_Meet}, we give formal definitions and prove that the join or meet of (realizable) rooted chirotopes is a (realizable) rooted chirotope (Propositions~\ref{prop:join_realizable} and~\ref{prop:meet_realizable}). Comparing Figures~\ref{fig:join} and~\ref{fig:meet} with Figure~\ref{fig:sums}, it should not come as a surprise that the join of two chains is their convex sum, and the meet of two chains is their concave sum; see \cref{rmk:join=convexSum} for detail.
A difficulty in proving the realizability of joins and meets for general chirotopes compared to the case of chains is that we need to construct appropriate realizations of our chirotopes {(where, informally, $u_1$, resp.~$u_2$, lies ``on top'' of all points in $X_1$, resp.~$X_2$)} before combining them into a single point set. In the case of chains, these appropriate realizations are given by definition.

\paragraph{Counting triangulations of joins and meets} 
We can associate to a rooted chirotope $(\chi,u)$ a polynomial, called its triangulation polynomial and denoted $Q_{(\chi,u)}$, whose coefficients are the numbers of triangulations of $(\chi,u)$ counted by degree of the root $u$. Accordingly, we typically also use $u$ to denote the variable of this polynomial $Q_{(\chi,u)}$. The polynomial $Q_{(\chi,u)}(u)$ encodes many informations about the triangulations of $(\chi,u)$; in particular, its evaluation at $u=1$ gives the total number of triangulations of $\chi$. We explain  in \cref{sec:triangulations_join} how to compute the triangulation polynomial of the join of two rooted chirotopes from their respective triangulation polynomials. More precisely, we prove that it is of the form
\[
 Q_{{(\chi_1,u_1)} \vee {(\chi_2,u_2)}}(u) = \sum_{d_1 \geq 2} \sum_{d_2 \geq 2} [u^{d_1}] Q_{{(\chi_1,u_1)}}(u) \cdot [u^{d_2}] Q_{{(\chi_2,u_2)}}(u) \cdot N_{d_1,d_2}(u),
\]
where $N_{d_1,d_2}(u)$
 is a polynomial in $u$  depending only on $d_1$ and $d_2$ (see \cref{corol:Qjoin} for an exact formula).  
In particular, this makes it possible to compute the {number of} triangulations of rooted chirotopes built from successive iterated joins (or similarly iterated meets).

In order to compute the {number of} triangulations of chirotopes built using both meet and join operations together, it is necessary to generalize the triangulation polynomial. For this, we intuitively add an extra fantom element $v$ to the rooted chirotope, that acts as the opposite of $u$ in terms of orientations (see for instance the point $v_3$ in \cref{fig:meet}). We can then define a bivariate triangulation polynomial, following the degrees of both $u$ and $v$ in triangulations avoiding the edge $uv$ (called weak triangulations). This allows to compute the triangulation polynomials of any chirotope built from join and meet operations.

{In the special case of chains, our bivariate triangulation polynomial essentially reduces to the product
of the lower triangulation polynomial in one variable and the upper triangulation polynomial in the other variable.
Our recursive formula to compute the bivariate polynomial through join and meet then corresponds
to the recursive formulas of Ruschmann and Wettstein to compute the lower/upper triangulation polynomials through convex and concave sums. See Remark~\ref{rmk:WeakTforChains=Upper*Lower} for details.
Note however, that for general rooted chirotopes, triangulations cannot be split in upper and lower
parts and thus, we cannot define lower/upper triangulation polynomials. 
This explains why we need in general a bivariate polynomial, instead of a pair of univariate polynomials.}

\paragraph{The number of triangulations of the double circle}

Recall that $\DC_{k}$ denotes the double circle with $2k$ points. As mentioned earlier, it is conjectured that $\DC_{k}$ minimizes the number of triangulations among all  realizable chirotopes with $2k$ points. Although we do not prove this conjecture, we compute in Section~\ref{sec:DC} a precise asymptotics for the number $|\mathfrak{T}(\DC_{k})|$ of triangulations of $\DC_{k}$, namely (see \cref{thm:triangulations_double_circle}): 
\[ |\mathfrak{T}(\DC_k)|  \sim_{k \to \infty}   
\frac{54} {7 \sqrt{\pi}} \, \frac{\sqrt{21}(5-\sqrt{21})}{(7-\sqrt{21})^2}\, 12^{k-2} \, k ^{-3/2}.\]
This is much finer than the estimate $|\mathfrak{T}(\DC_{k})| = (12+o(1))^{k}$ attributed to~\cite{hurtado1997double-circle} (see footnote~\ref{footnote1}).

To obtain this result, we first relate the number of triangulations of $\DC_{k}$ to the number of triangulations of some variants of the double circle (see \cref{lem:nbTriDoubleCircle}), 
{one of which can be recursively constructed via join operations}. 
{Building on this,} 
{we obtain that} $|\mathfrak{T}(\DC_k)|$ can be expressed thanks to a particular polynomial $Q_{k-1}(u)$, and these polynomials satisfy a recurrence (on $k$) derived from the results of Section~\ref{sec:triangulations_join} (see \cref{lem:induction_Qk}). Introducing the bivariate generating function $F(z,u)=\sum_{k \ge 1} Q_k(u) z^k$, this recurrence rewrites as a functional equation of $F(z,u)$ (see \cref{lem:functional_equation_F}). This functional equation is then amenable (with some effort) to the techniques of analytic combinatorics, and in particular the kernel method. From this study, we derive an expansion of $F(z,1)$ near its radius of convergence $1/12$, which yields (through the transfer theorem of analytic combinatorics) an asymptotic estimates for $Q_{k-1}(1)$. In addition, we have a similar statement for $\tfrac{\partial F(z,u)}{\partial u}|_{u=1}$, and combining these two gives access to our asymptotic estimate for $|\mathfrak{T}(\DC_k)|$. 

\paragraph{Numerical experiments}

Counting the triangulations of a planar point set is a classical and challenging problem (see the discussion in~\cite{marx_et_al:LIPIcs.SoCG.2016.52}). Our recursive formula allows to count efficiently and exactly the triangulations of chirotopes built from joins and meets of small chirotopes. We implemented it and used it to try finding chirotopes with more triangulations than the Koch chains. We report in Section~\ref{sec:numeric} on these (unsuccessful) attempts.

\paragraph{Important note} During the preparation of the present paper, we learned about the work of Bui~\cite{bui2025counting} which has a significant intersection with ours (the currently available version of Bui's work unfortunately does not contain proofs of the stated result). In particular, Bui extends Ruschmann--Wettstein's convex and concave sums to a family of point sets called {\em near-edges} and argues that any point set is equivalent to a near-edge. Though the presentation is different, this is essentially equivalent to the construction that we make in \cref{sec:Join_Meet}. Bui also associates bivariate polynomials to chirotopes and provides recursive formulas to compute these polynomials when performing joins and meets; again, the presentation is different but may be roughly equivalent to our results in Section~\ref{sec:triangulations_join}. On the other hand, precise asymptotics for the number of  triangulations of the double circle is not considered in Bui's work. Finally, Bui also performed numerical computations to look for chirotopes with more triangulations than the Koch chain. Here, we used different approaches: while we compared the number of triangulations of chirotopes with the same number of points (namely, 257), Bui uses a conjectural formula for their growth constants. The two approaches complement each other and both suggest that we cannot construct chirotopes with more triangulations than the Koch chain 
with the generalized join and meet operations.

\section{Joins and meets of rooted chirotopes}
\label{sec:Join_Meet}

In this section we formalize the notions of joins and meets  that were described at an intuitive level for realizable chirotopes in Section~\ref{s:results}. More precisely, we define join and meets for rooted chirotopes (not only realizable ones), and we prove that these constructions are well-founded.

\subsection{Join of rooted chirotopes}

\begin{definition}
The join of two rooted chirotopes $(\chi_1,u_1)$ and $(\chi_2,u_2)$ is the pair $(\chi_3, u_3)$, denoted  
$(\chi_3, u_3) := (\chi_1, u_1)\vee (\chi_2, u_2)$, where $(i)$ $\chi_3$ is a sign function with ground set $X_3\cup\{u_3\}$, $(ii)$ $X_3$ is the disjoint union of $X_1$ and $X_2$ in which 
 we identify $u_1^-$ and $u_2^+$, and $(iii)$ $\chi_3$ satisfies 
 \begin{eqnarray}
 	\forall x,y,z \in X_3,\ \chi_3(x,y,z) &=&   \label{eq:def_chi3}
  \begin{cases}
   \chi_1(x,y,z) \text{ \quad if } x,y,z \in X_1, \\
   \chi_2(x,y,z) \text{ \quad if } x,y,z \in X_2, \\
   \chi_1(x,y,u_1) \text{ \quad if } x,y \in X_1 \text{ and } z \in X_2\setminus\{u_2^+\}, \\
   \chi_2(u_2,y,z) \text{ \quad if } x \in X_1\setminus\{u_1^-\} \text{ and } y,z \in X_2;
   \end{cases}\\
   \forall x,y\in X_3,\ \chi_3(x,y,u_3) &=& \label{eq:def_chi3_u3}
  \begin{cases}
   \chi_1(x,y,u_1) \text{ \quad if } x,y \in X_1, \\
   \chi_2(x,y,u_2) \text{ \quad if } x,y \in X_2, \\
    1 \text{ \quad if } x \in X_1\setminus\{{u_1}^-\} \text{ and } y \in X_2\setminus \{{u_2}^+\}.
   \end{cases}
 \end{eqnarray}
\end{definition}

Together with the fact that $\chi_3$ is a sign function, the above equations define
$\chi_3$ for all possible triples of distinct elements in $X_3\cup\{u_3\}$ (the three arguments are either all in $X_1$ or all in $X_2$, or two are in $X_1$ and one in $X_2$ or conversely, or one argument is $u_3$ and then we consider the possible repartition of the two other arguments in $X_1$ and $X_2$).
Let us denote by $x_0$ the element $u_2^+=u_1^-$. When $x_0$ is one of the arguments, several conditions may be fulfilled (up to permuting the arguments) but the corresponding formulas are consistent.
For example,
if $x \in X_1\setminus\{{u_1}^-\}$, $y=x_0$ and $z \in X_2\setminus \{{u_2}^+\}$, then we can use $x_0=u_1^-$
and the third case of \eqref{eq:def_chi3} to obtain
$\chi_3(x,x_0,z) =\chi_1(x,u_1^-,u_1) =+1$,
or we can use $x_0=u_2^+ \in X_2$ 
and the fourth case of \eqref{eq:def_chi3} to obtain
$\chi_3(x,x_0,z)= \chi_1(u_2,u_2^+, z) =+1$. Altogether the join operation is well-defined.

\begin{proposition}
\label{prop:join_realizable}
The join  of two rooted realizable chirotopes 
is a rooted realizable chirotope.
\end{proposition}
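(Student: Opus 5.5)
Given realizations $P_1$ of $\chi_1$ and $P_2$ of $\chi_2$, we build an explicit point set realizing $\chi_3$. We first normalize each realization so that the root sits ``at infinity in a fixed direction'', then glue the two normalized pictures along $x_0$.

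\textbf{Normalization.} Take a realization $\{\p_\ell\}$ of $\chi_1$. Because $u_1$ is a vertex of the convex hull, a projective transformation sending a line that separates $\p_{u_1}$ from the other points to the line at infinity preserves all orientations of triples in $X_1$. More convenient is the equivalent affine picture. We want a realization in which $\p_{u_1}$ lies very far up, say at $(0,R)$ with $R\to\infty$, and all points of $X_1$ lie in a thin horizontal box near the origin with distinct $x$-coordinates. Then $\chi_1(x,y,u_1)$ is $+1$ iff $x$ lies left of $y$ in this abscissa order, where the sign convention is fixed by the hull orientation. Also $u_1^-$ and $u_1^+$ are the extreme points of the abscissa order. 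Under the convention forced by $\chi_1(x,u_1^-,u_1)=+1$ for all $x$, $u_1^-$ is the rightmost point of $X_1$. Symmetrically we normalize $P_2$ so that $u_2^+$ is its leftmost point.

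We can then apply an affine map $(a,b)\mapsto(a,\varepsilon b)$ with $\varepsilon$ small. This squashes each $X_i$ close to a horizontal segment without changing any orientation. It makes every line through two points of $X_i$ nearly horizontal, i.e.\ of small slope.

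\textbf{Gluing.} Translate $P_1$ so that $\p_{u_1^-}$ is at the origin, and translate $P_2$ so that $\p_{u_2^+}$ is at the origin. Then $X_1\setminus\{u_1^-\}$ lies to the left and $X_2\setminus\{u_2^+\}$ to the right of the origin. To enforce the "no separating line" condition we bend the two pieces upward slightly, by applying to $P_1$ and $P_2$ rotations by small angles $+\delta$ and $-\delta$ around the origin. We take $\delta$ much larger than the squashing slopes $\varepsilon$ but still small, so that the joined set looks like a "V" opening upward. Finally, put $u_3$ at $(0,R)$ with $R$ huge.

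\textbf{Checking the four cases of \eqref{eq:def_chi3} and the cases of \eqref{eq:def_chi3_u3}.}
\begin{itemize}
\item Triples inside one $X_i$ are unchanged, since only rigid motions and a squash were applied.
\item For $x,y\in X_1$ and $z\in X_2\setminus\{x_0\}$, the line $xy$ has slope about $\delta\pm O(\varepsilon)$ and passes near the left arm. The point $z$ is on the right arm, rotated down by $\delta$. So $z$ lies on the same side of line $xy$ as the far-up point $u_3$ would, which gives $\chi_3(x,y,z)=\chi_1(x,y,u_1)$. This is exactly the condition ``no line through two points of $X_1$ separates $u_3$ from $X_2$''.
\item The symmetric case with two points in $X_2$ is the same argument.
\item Triples with $u_3$ reduce to the abscissa order. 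The whole set is ordered left to right as $X_1\setminus\{x_0\}$, then $x_0$, then $X_2\setminus\{x_0\}$, and this gives value $1$ in the mixed case with the chosen orientation convention.
\end{itemize}
Finally, $u_3$ is extreme, so $(\chi_3,u_3)$ is a rooted realizable chirotope.

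\textbf{Main obstacle.} The delicate step is the normalization lemma. We must justify that any realization can be transformed, preserving the chirotope on $X_i\cup\{u_i\}$, into one where $u_i$ is effectively ``at vertical infinity'' and the other points have distinct abscissae ordered consistently with $\chi_i(\cdot,\cdot,u_i)$.

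I would do this with a projective transformation. It maps a line $\ell$ strictly separating $\p_{u_i}$ from the rest to infinity, arranged so that $\p_{u_i}$ goes to a point far along the positive vertical axis. Points on the same side of $\ell$ keep orientations. Then the abscissa order is the angular order around $u_i$, which is a total order because $u_i$ is extreme. The rest is a careful but routine choice of $\varepsilon\ll\delta\ll 1\ll R$, together with checking that the extra cases involving $x_0$ are consistent, as already noted after the definition.
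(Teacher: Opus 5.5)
Your overall plan is the paper's: normalize each $\chi_i$ projectively so that the root sees $X_i$ from vertical infinity, squash vertically, then glue at $x_0$ after small opposite rotations, with the squash parameter much smaller than the angle (your $\varepsilon,\delta$ are the paper's $\delta,\alpha$). However, the normalization you propose for the step you rightly call the crux does not work.

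\textbf{The separating-line normalization realizes the twist, not $\chi_i$.} For a projective map $\mathfrak p\mapsto\overline{A\hat{\mathfrak p}}$, the orientation of a triple is multiplied by $\operatorname{sign}(\det A)$ and by the signs of $(A\hat{\mathfrak p})_z$ at its three points. These signs differ exactly for points on opposite sides of the line sent to infinity. If that line \emph{separates} $\mathfrak p_{u_i}$ from $X_i$, every triple $(x,y,u_i)$ is flipped relative to the triples inside $X_i$. You therefore obtain a realization of the twist $\neg(\chi_i,u_i)$ (up to a global mirror), not of $\chi_i$: the root is moved from the top to the bottom. No rigid motion or reflection can undo this, since those act uniformly on all triples.

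The fix is to send to infinity a line that leaves $\mathfrak p_{u_i}$ on the same side as $X_i$. For instance, take a line just beyond $\mathfrak p_{u_i}$ that misses the convex hull and is closer to $\mathfrak p_{u_i}$ than every line spanned by two points of $X_i$. Then all orientations are preserved. A short segment from $\mathfrak p_{u_i}$ to that line crosses no line spanned by $X_i$, so it becomes a ray from the image of $\mathfrak p_{u_i}$ avoiding the whole arrangement. Turning this ray vertical is exactly what justifies putting $u_i$ at $(0,R)$ and reading $\chi_i(x,y,u_i)$ off the abscissa order. This is what Lemma~\ref{lem:extreme_in_unbounded_cell} provides: there the line sent to infinity passes through a point $\mathfrak q_\infty$ just beyond $\mathfrak q_u$ on the line $(\mathfrak q_u\mathfrak q_{u^+})$ and does not split the point set.

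\textbf{Your rotation signs produce the meet, not the join.} You turn $P_1$ (on the left) by $+\delta$ and $P_2$ by $-\delta$. Then a line spanned by two points of $X_1$ has slope about $+\delta$ and passes within $O(\varepsilon)$ of the origin. A point $z\in X_2\setminus\{x_0\}$ with abscissa $a>0$ sits at height about $-\delta a$. So $z$ lies below that line while $u_3=(0,R)$ lies above it, giving $\chi_3(x,y,z)=-\chi_1(x,y,u_1)$. That is the third case of \eqref{eq:def_chi3_bis}, i.e.\ the meet rule.

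This contradicts your own picture of a V opening upward, which needs the opposite rotations: $-\delta$ on the left piece and $+\delta$ on the right one, as the paper does with $\pm\alpha$. With that correction the margin is about $2\delta a-O(\varepsilon)\ge 2\delta g-O(\varepsilon)$, where $g>0$ is the least horizontal distance from $x_0$ to the other points. So you need $\varepsilon\ll\delta g$. Note that $g$ is unaffected by the squash; this is the role of the paper's $\varepsilon$ in right-$(\delta,\varepsilon)$-squeezings.

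Once these two points are fixed, your argument is essentially the paper's proof.
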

We defer the proof of \cref{prop:join_realizable} to \cref{sec:proofjoinrealizable}.

\begin{corollary}\label{cor:join}
	The join of two rooted chirotopes 
is a rooted chirotope.
\end{corollary}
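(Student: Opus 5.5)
The plan is to deduce \cref{cor:join} from \cref{prop:join_realizable} by a \emph{locality} argument. The chirotope axioms (symmetry, interiority, transitivity) each involve at most five elements. Symmetry holds because $\chi_3$ is a sign function by definition. So it suffices to check interiority and transitivity on every subset $S \subseteq X_3\cup\{u_3\}$ of size at most $5$. A sign function all of whose restrictions to subsets of size $\le 5$ are chirotopes is itself a chirotope. Moreover, every chirotope on at most $5$ elements is realizable (small rank-3 oriented matroids are realizable; indeed all rank-3 oriented matroids on at most $8$ elements are). Hence it is enough to show that each restriction $\chi_3|_S$ coincides with the restriction of the join of two \emph{realizable} rooted chirotopes, and then invoke \cref{prop:join_realizable}.

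The key step is to show that the join commutes with restriction. Fix $S$ with $|S|\le 5$ and write $S_i = S\cap X_i$. I would set $T_1 = S_1\cup\{u_1,u_1^-\}$ and $T_2 = S_2\cup\{u_2,u_2^+\}$, adding $x_0$ on both sides even if it is not in $S$. The restriction $\chi_i|_{T_i}$ is a chirotope on at most $7$ elements, and $u_i$ is still extreme in it, since extremality is witnessed by $u_i^\pm$ and the orientations involving them are unchanged. In the restriction, the hull neighbours of $u_1$ and $u_2$ are still $u_1^-$ and $u_2^+$. Looking at the defining formulas \eqref{eq:def_chi3} and \eqref{eq:def_chi3_u3}, each value $\chi_3(x,y,z)$ only uses $\chi_1$ or $\chi_2$ on the arguments together with $u_1$ or $u_2$. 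Therefore $(\chi_1|_{T_1},u_1)\vee(\chi_2|_{T_2},u_2)$ restricted to $S\cup\{u_3\}$ equals $\chi_3|_{S\cup\{u_3\}}$. The extra elements $x_0$ and $u_3$ make the ground set have up to $7$ elements rather than $5$, which is still small.

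This reduces the problem to realizability of each $\chi_i|_{T_i}$, which has at most $7$ elements. Here I would use the classical fact that every (simple, rank 3) chirotope on at most $8$ elements is realizable. \cref{prop:join_realizable} then says the join of the two restricted rooted chirotopes is realizable, hence a chirotope. Its restriction to $S$, which equals $\chi_3|_S$, therefore satisfies interiority and transitivity. Since $S$ was arbitrary, $\chi_3$ is a chirotope.

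It remains to check that $u_3$ is extreme in $\chi_3$, using $y=x_0$ or $y=u_2^+$ as witness. For $z\in X_1$, the value $\chi_3(u_3,x_0,z)$ is $\chi_1(u_1,u_1^-,z)$, which is constant by the extremality of $u_1$ witnessed by $u_1^-$. For $z\in X_2$, it is $\chi_2(u_2,u_2^+,z)$, constant with the same sign by the orientation conventions. The main obstacle I expect is the realizability input: I am relying on an external theorem (Goodman–Pollack/Richter: rank-3 oriented matroids with $\le 8$ elements are realizable). If the restriction size is a concern, I would instead check the two axioms directly by a finite case analysis on how the up to five elements split between $X_1$, $X_2$ and $u_3$, using the formulas for $\chi_3$.
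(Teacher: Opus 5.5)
Your verification of the chirotope axioms is exactly the paper's argument: the join commutes with restriction to $S_1\cup\{u_1,u_1^-\}$ and $S_2\cup\{u_2,u_2^+\}$ (hull neighbours being preserved). These restrictions have at most $8$ elements and are therefore realizable, \cref{prop:join_realizable} makes the small join a chirotope, and the axioms only involve $5$ elements. That part is sound.

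The extremality step, however, fails. The element $x_0$ is \emph{not} a witness for $u_3$, and the claim that the two values have ``the same sign by the orientation conventions'' is false. You correctly obtain $\chi_3(u_3,x_0,z)=\chi_1(u_1,u_1^-,z)$ for $z\in X_1\setminus\{x_0\}$ and $\chi_3(u_3,x_0,z)=\chi_2(u_2,u_2^+,z)$ for $z\in X_2\setminus\{x_0\}$. But $u_1^-u_1$ and $u_2u_2^+$ are counterclockwise hull edges, so $\chi_1(u_1^-,u_1,z)=+1$ and $\chi_2(u_2,u_2^+,z)=+1$. This gives $-1$ on $X_1\setminus\{x_0\}$ and $+1$ on $X_2\setminus\{x_0\}$; it is the same computation as the consistency check after the definition of the join.

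Geometrically, in the realization built in the proof of \cref{prop:join_realizable}, $u_3=(0,1)$, $x_0=(0,0)$, and the line $x=0$ separates $X_1$ from $X_2$. So $u_3x_0$ is the gluing diagonal, not a hull edge. Compare \cref{rem:multiple_edges}: $ux_0$ lies in a weak triangulation only when $Y_1=Y_2=\emptyset$, which cannot happen for an uncrossable hull edge.

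The correct witnesses are the new hull neighbours $u_2^-$ and $u_1^+$. For $z\in X_2\setminus\{u_2^-\}$, one gets $\chi_3(u_3,u_2^-,z)=\chi_2(u_2,u_2^-,z)=-1$. For $z\in X_1\setminus\{u_1^-\}$, the third case of \eqref{eq:def_chi3_u3} gives $\chi_3(u_3,u_2^-,z)=-\chi_3(z,u_2^-,u_3)=-1$. If $X_2=\{x_0\}$, use $u_1^+$ symmetrically.

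Alternatively, follow the paper. Extremality is also a local property: by a Carath\'eodory-type lemma it suffices to check $4$-element restrictions containing $u_3$. It then follows from the same restriction-plus-realizability argument, since $u_3$ is extreme in each small realizable join.
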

\noindent One way to prove the statement is to deduce the properties for $\chi_3$ from the properties for $\chi_1, \chi_2$ via substitutions. This is straightforward but tedious (see for instance \cite[Prop. 3.1]{sidmapapiercombi} for a proof in this spirit
that a different operation on sign functions preserve chirotopes).
We provide here a less pedestrian alternative.
\begin{proof}
Let $(\chi_1,u_1)$ and $(\chi_2,u_2)$ be rooted chirotopes with ground sets $X_1 \cup \{u_1\}$ and $X_2 \cup \{u_2\}$ respectively, and let $(\chi_3,u_3)=(\chi_1,u_1)\vee (\chi_2,u_2)$ be their join. We have defined $(\chi_3,u_3)$ as a rooted sign function,
and want to prove that it is a rooted chirotope.
 
Given a sign function $\chi$ with ground set $X$, and $Y\subseteq X$, we denote by $\chi_{|Y}$ the restriction of the function $\chi$ to the triplets of $Y$. 
Remark that the join of two restrictions is the restriction of the join in the following sense: 
\[\forall Y_1\subseteq X_1,\forall Y_2\subseteq X_2 \qquad {\chi_1}_{|Y_1\cup\{u_1,u_1^-\} }\vee {\chi_2}_{|Y_2\cup\{u_2,u_2^+\}} = {\chi_3}_{|Y_1\cup Y_2\cup\{u_3,u_1^-,u_2^+\}}\ .\]
Now, a classical fact about chirotopes is that every chirotope on up to 8 points is realizable (this is essentially Theorem~1 from~\cite{goodman1980proof}). Hence for all $Y_1\subseteq X_1$ of size at most 6, and all $Y_2\subseteq X_2$ of size at most 6, ${\chi_1}_{|Y_1\cup\{u_1,u_1^-\} }$ and $ {\chi_2}_{|Y_2\cup\{u_2,u_2^+\}}$ are realizable chirotopes. \cref{prop:join_realizable} therefore ensures that ${\chi_3}_{|Y_1\cup Y_2\cup\{u_3,u_1^-,u_2^+\}}$ is a (realizable) chirotope. 
As chirotopes can be characterized by Knuth's axioms, each of which involves at most 5 elements, a sign function $\xi$ is a chirotope if and only if for every  five-element subset $Y$ of the ground set of $\xi$, the function ${\xi}_{|Y}$ is a chirotope. 
But we have proved that the latter holds for $\chi_3$, so $\chi_3$ is a chirotope.

It remains to prove that $u$ is an extreme element in $\chi_3$.
Again, it suffices to prove that $u$ is extreme in every restriction of $
\chi_3$ to a four-element subset that contains $u$. (We use here a standard extension of Carathéodory's theorem, see {\em e.g.}~\cite[Lemma 2.1]{sidmapapiercombi}.) 
But this follows from \cref{prop:join_realizable} (using again that such restrictions are necessarily realizable),
concluding the proof of the lemma. 
\end{proof}

\subsubsection{Proof of \cref{prop:join_realizable}}\label{sec:proofjoinrealizable}

We prove that the join of two rooted realizable chirotopes $(\chi_1,u_1)$ and $(\chi_2,u_2)$ is a rooted realizable chirotope by taking the union of suitable realizations of $\chi_1$ and $\chi_2$. We build these ``suitable'' realizations in three steps. 
The first step puts the root point and its neighbors on the convex hull in particular positions.  

\begin{lemma}\label{lem:extreme_in_unbounded_cell}
Every realizable rooted chirotope  $(\chi,u)$ with ground set $X$ has a realization $\mathcal P=\{\mathfrak p_x\}_{x \in X}$ such that $\mathfrak p_u = (0,1)$, $\mathfrak p_{u^+}=(0,0)$, $\mathfrak p_{u^-}=(1,0)$, and the half-line $\mathfrak
p_u+\RR_{\ge 0}(0,1)$ does not cross any line of the 
set $\big\{(\mathfrak p_x\mathfrak p_y) \colon x,y \in X \setminus \{u\},
x \neq y \big\}$.
\end{lemma}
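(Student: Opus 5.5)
Our plan is to start from an arbitrary realization of $\chi$ and modify it by a projective transformation, followed by an affine normalization. Both kinds of maps preserve orientations when chosen carefully.

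Let $\mathcal P$ be any realization, and let $L$ be the finite set of lines through two points of $X\setminus\{u\}$. Since $u$ is extreme, $\mathfrak p_u$ lies outside the convex hull of the other points. Hence $\mathfrak p_u$ sees the other points within an angular sector of opening less than $\pi$, delimited by the rays toward $\mathfrak p_{u^+}$ and $\mathfrak p_{u^-}$. We choose a line $\ell$ through $\mathfrak p_u$ that meets no other point of $\mathcal P$ and leaves all of $X\setminus\{u\}$ strictly on one side. Then we pick a line $\ell'$ parallel to $\ell$, on the far side of $\mathfrak p_u$ and very close to it, so that $\ell'$ does not separate any two points of $\mathcal P$.

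Apply a projective transformation $\varphi$ sending $\ell'$ to the line at infinity. Because all points of $\mathcal P$ lie in one open half-plane of $\ell'$, $\varphi$ preserves all orientations (composing with a reflection if needed), so $\varphi(\mathcal P)$ still realizes $\chi$. The point $\mathfrak p_u$, being very close to $\ell'$, is sent far away. The key claim is that, in the image, the lines of $\varphi(L)$ meet the direction of $\varphi(\mathfrak p_u)$ only in a bounded region. Concretely, each line $\lambda\in L$ meets $\ell'$ in a point $q_\lambda$ (or is parallel to it), and the images of these intersections are points at infinity, i.e.\ directions. The half-line from $\varphi(\mathfrak p_u)$ in a given direction $d$ avoids $\varphi(\lambda)$ whenever $d$ points into the side of $\varphi(\lambda)$ not containing $\varphi(\mathfrak p_u)$ but is not parallel to it.

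A cleaner alternative, which we intend to use, avoids projective maps at this point. In a realization where $u$ is extreme, take the finitely many intersection points of lines of $L$ with any fixed line through $\mathfrak p_u$. Choose a direction $d$ from $\mathfrak p_u$ pointing away from the convex hull of the other points such that the ray $\mathfrak p_u+\RR_{\ge 0}d$ is parallel to no line of $L$. Then move $\mathfrak p_u$ outward along this ray, within its cell of the arrangement of $L$ (which preserves $\chi$), to an unbounded cell. This is possible because $\mathfrak p_u$ lies in a cell of the arrangement $L$ that touches no other point. The region of points $p$ with the same orientations as $\mathfrak p_u$ relative to all pairs is exactly that cell, and since $u$ is extreme we expect this cell to be unbounded. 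This is the main obstacle: we must show that the cell of $\mathfrak p_u$ in the arrangement of $L$ is unbounded. If it is bounded, we first apply the projective transformation above to make it unbounded, which is exactly what that transformation buys us. Once $\mathfrak p_u$ lies in an unbounded cell, we pick $\mathfrak p_u$ far out so that some ray from it, say in direction $d$, stays in the cell forever. That ray crosses no line of $L$.

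Finally, we normalize with an affine map $A$ sending $\mathfrak p_{u^+}\mapsto(0,0)$, $\mathfrak p_{u^-}\mapsto(1,0)$ and $\mathfrak p_u\mapsto(0,1)$. Such an $A$ exists and is unique since the three points are not collinear. It has positive determinant because $(u^+,u^-,u)$ and $((0,0),(1,0),(0,1))$ are both counterclockwise, given that $u^-,u,u^+$ are consecutive in counterclockwise order on the hull. It preserves orientations and maps lines of $L$ to lines, but it sends the ray direction $d$ to $A d$, which need not be $(0,1)$. To fix this, before applying $A$ we choose $\mathfrak p_u$ along its ray so that $d$ becomes the correct direction. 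Equivalently, after normalizing we note that the set of good directions at $(0,1)$ is an open cone. We need $(0,1)$ in it, i.e.\ we need the unbounded cell to contain the upward ray. We achieve this by placing $\mathfrak p_u$ on the ray far enough that the direction from the midpoint-based frame aligns, which may require a final shear fixing the $x$-axis; the shear $(x,y)\mapsto(x+cy,y)$ moves $(0,1)$, so instead we choose the position of $\mathfrak p_u$ in the cell adapted to this constraint. Getting this last compatibility right is the delicate step.
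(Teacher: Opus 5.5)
\textbf{Gap: the final alignment step is never carried out.} The step you call delicate is the heart of the lemma, and it is left open. The affine normalization sends $\mathfrak p_u-\mathfrak p_{u^+}$ to $(0,1)$. So the half-line $(0,1)+\R_{\ge 0}(0,1)$ is the image of the \emph{specific} ray $\mathfrak p_u+\R_{\ge 0}(\mathfrak p_u-\mathfrak p_{u^+})$, which extends the hull edge from $u^+$ to $u$. What must be shown is that this particular ray misses every line of $L$.
\begin{itemize}
\item Finding some other good direction $d$ does not help.
\item The shear moves $\mathfrak p_u$ off $(0,1)$, as you notice.
\item ``Choosing the position of $\mathfrak p_u$ in the cell adapted to this constraint'' is an existence claim with no argument. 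It can be proved, e.g.\ via the recession cone of an unbounded cell when no two lines are parallel, but that argument is missing.
\end{itemize}

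The ``cleaner alternative'' also contains errors. The cell of $\mathfrak p_u$ in the arrangement of $L$ can be bounded although $u$ is extreme: put $u$ just above the short top side of a trapezoid whose legs, extended, meet above $u$. Also, $\mathfrak p_u$ cannot be moved \emph{within} its cell into a different, unbounded one. Your fallback, that $\varphi$ makes the cell unbounded, needs $\ell'$ to pass through that cell, e.g.\ by being closer to $\mathfrak p_u$ than every line of $L$. The non-separation condition you impose does not ensure this.

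\textbf{The missing observation.} Your first construction already settles everything.
\begin{itemize}
\item Because $\ell$ contains $\mathfrak p_u$ but not $\mathfrak p_{u^+}$, the line $\ell'$ meets the line $(\mathfrak p_u\mathfrak p_{u^+})$ at a point $q$ just beyond $\mathfrak p_u$.
\item Since $\mathfrak p_u$ lies on no line of $L$ (general position), you can take $\ell'$ so close that the segment $[\mathfrak p_u,q]$ meets no line of $L$.
\item $\varphi$ preserves collinearity and sends $q$ to infinity. Hence $\varphi([\mathfrak p_u,q))$ is the half-line from $\varphi(\mathfrak p_u)$ along the line through $\varphi(\mathfrak p_{u^+})$, pointing away from $\varphi(\mathfrak p_{u^+})$, and it meets no line of $\varphi(L)$.
\item Your affine normalization maps this half-line exactly onto $(0,1)+\R_{\ge 0}(0,1)$.
\end{itemize}
So no alignment issue arises, and the whole cell discussion becomes unnecessary.

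This is precisely the paper's proof. It picks $\mathfrak q_\infty$ on $(\mathfrak q_u\mathfrak q_{u^+})$ beyond $\mathfrak q_u$, with $[\mathfrak q_u,\mathfrak q_\infty]$ avoiding all lines through two non-root points. It then sends a non-separating line through $\mathfrak q_\infty$ to infinity, and folds the normalization into the same projective matrix $A$.
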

\begin{proof}
	This proof relies on the notion of projective transform, which we now recall; see {\em e.g.} the book of Samuel~\cite{samuel1988projective} or Richter-Gebert~\cite{richter2011perspectives} for a primer on projective geometry.
    For a point $\mathfrak p \in \R^2$, let $\hat{\mathfrak p}$ be the vector $\hat{\mathfrak p} = (p_x,p_y,1)$, and conversely, given $u \in \R^3$ with $u_z \neq 0$ let $\bar u \eqdef (u_x/u_z, u_y/u_z)$. A projective transform is a map $f:\mathfrak p \mapsto \overline{A\hat{\mathfrak p}}$ with $A \in \R^{3 \times 3}$ a nonsingular matrix. Such a transform is undefined on a line of $\R^2$ (those points $\mathfrak p$ such that $\pth{A\hat{\mathfrak p}}_z = 0$); this line is said to be ``sent to infinity''. Other lines of $\mathbb R^2$ are sent to lines in $\mathbb R^2$.\smallskip

    Let us fix some arbitrary realisation $\cal Q$ of $\chi$, and define two auxiliary points. The first point, which we denote by $\mathfrak q_{\infty}$, is chosen on the line $(\mathfrak q_u,\mathfrak q_{u^+})$, outside of the segment $[\mathfrak q_u,\mathfrak q_{u^+}]$ and close enough to $\mathfrak q_u$ so that the segment $[\mathfrak q_u,\mathfrak q_\infty]$ does not cross any line of the set $\big\{(\mathfrak q_x\mathfrak q_y) \colon x,y \in X \setminus \{u\}\big\}$. (In other words, we fix $\mathfrak q_{\infty} = \mathfrak q_u -\lambda (\mathfrak q_{u^+}-\mathfrak q_u)$ with $\lambda>0$ small enough.)  Since  $[\mathfrak q_u,\mathfrak q_{u^+}]$ is an edge of the convex hull of $\cal Q$,
 the point $\mathfrak q_{\infty}$ is outside this convex hull, and we can find a line going through $\mathfrak q_{\infty}$ which does not separate the point set $\cal Q$. We pick our second point $\mathfrak q_{\infty}'$ on that line, distinct from $\mathfrak q_{\infty}$.

 Now, let $A\in\RR^{3\times 3}$ be a nonzero matrix that satisfies 
 \[\begin{array}{ll}
	\text{(i) }  {A\widehat{\mathfrak q_u}}\in \vect((0,1,1)), \qquad&
	\text{(ii) }{A\widehat{\mathfrak q_{u^+}}}\in \vect((0,0,1)), \\
\text{(iii) } {A\widehat{\mathfrak q_{u^-}}}\in \vect((1,0,1)), \qquad&
	\text{(iv) } \{A\widehat{\mathfrak q_{\infty}},A\widehat{\mathfrak q'_{\infty}}\}\subset \{z=0\}.
\end{array}\]
 The existence of $A$ follows from the observation that for any given $\mathfrak q_u, \mathfrak q_{u^+}, \mathfrak q_{u^-}, \mathfrak q_{\infty}, \mathfrak q'_{\infty}$, conditions~(i-iv), when translated into equations for the coefficients of $A$, define a linear system of 8 equations and 9 variables. Hence there exists a nonzero solution.

 We claim (and prove below) that $A$ is actually nonsingular, so that $f:\mathfrak p \mapsto \overline{A\hat{\mathfrak p}}$ is a projective transform. We set $\mathcal P=f(\mathcal Q)$ and note that $\mathfrak p_u = (1,0)$ by (i), $\mathfrak p_{u^+}=(0,0)$ by (ii) and $\mathfrak p_{u^-}=(1,0)$  by (iii). Condition (iv) ensures that the line  $(\mathfrak q_{\infty} \mathfrak q'_{\infty})$ is the line sent to infinity by $f$.
 Moreover, since this line does not split $\cal Q$, the point sets $\cal P$ and $\cal Q$ have the same chirotope, i.e.~$\cal P$ is a realization of $\chi$. We now consider the
 image by $f$ of $[\mathfrak q_{u}, \mathfrak q_\infty]$;
 this is is a half-line $\rho$ originating from $\mathfrak p_u$. 
Recalling that projective transforms preserve alignments and incidences,
since $\mathfrak q_u,\mathfrak q_{u^+},\mathfrak q_{\infty}$ are aligned, $\rho$ is contained on the line $(\mathfrak p_u,\mathfrak p_{u^+})$. But $[\mathfrak q_{u}, \mathfrak q_\infty]$ does not contain $\mathfrak q_{u^+}$, so $\rho$ does not contain $\mathfrak p_{u^+}$. Hence $\rho$ is the half-line $\mathfrak p_u+\RR_{\ge 0}(0,1)$.
In addition, the segment $[\mathfrak q_{u}, \mathfrak q_\infty]$ avoids every line in  $\big\{(\mathfrak q_x\mathfrak q_y) \colon x,y \in X \setminus \{u\},
x \neq y \big\}$, so that $\rho$ avoids every line in $\big\{(\mathfrak p_x\mathfrak p_y) \colon x,y \in X \setminus \{u\},
x \neq y \big\}$. \smallskip

It remains to prove the claim that $A$ is nonsingular. It suffice to argue that none of $A\widehat{\mathfrak q_u}$, $A\widehat{\mathfrak q_{u^+}}$ and $A\widehat{\mathfrak q_{u^-}}$ is equal to $(0,0,0)$, as these three images are then linearly independent.  Assume for the sake of contradiction that one of them, say $A\widehat{\mathfrak q_u}$ is $(0,0,0)$.
Then the matrix $A$ sends all three vectors $\widehat{\mathfrak q_u}$, $\widehat{\mathfrak q_{\infty}}$ and $\widehat{\mathfrak q'_{\infty}}$ 
to the hyperplane $\{z=0\}$. Since the points $ \mathfrak q_u$, $\mathfrak q_{\infty}$
and $\mathfrak q'_{\infty}$ are not colinear (recall that the line $(\mathfrak q_{\infty}\mathfrak q'_{\infty})$ does not split the point set $\cal Q$, so in particular does not contain any point of $\cal Q$), the vectors $\widehat{\mathfrak q_u}$, $\widehat{\mathfrak q_{\infty}}$ and $\widehat{\mathfrak q'_{\infty}}$ are linearly independent. We conclude that the range of $A$ is entirely contained in the hyperplane $\{z=0\}$. But then conditions (ii') and (iii') then imply that $A\widehat{\mathfrak q_{u^+}}$ and $A\widehat{\mathfrak q_{u^-}}$ must be equal to $(0,0,0)$. We have found three linearly independent vectors  $\widehat{\mathfrak q_u}$,
$\widehat{\mathfrak q_{u^+}}$ and $\widehat{\mathfrak q_{u^-}}$ which are mapped to $(0,0,0)$ by $A$, implying that $A$ is the zero matrix and reaching a contraction.
We conclude that none of of $A\widehat{\mathfrak q_u}$, $A\widehat{\mathfrak q_{u^+}}$ and $A\widehat{\mathfrak q_{u^-}}$ is equal to $(0,0,0)$ so that $A$ is nonsingular. The proposition is proved.
\end{proof}

The second step consists in ``flattening'' the point set so that all points but the root
en up close to the horizontal axis. More precisely, we introduce a refinement of the notion of $\delta$-squeezing from \cite[Definition 6.4]{sidmapapiercombi} to formalize this flattening step.

\begin{definition}\label{def:deltaepsilonsqueezing}
  A \emph{right-$(\delta,\varepsilon)$-squeezing} of a rooted chirotope $(\chi,u)$ is a realization $\cal P$ of $\chi$
  such that $\mathfrak p_u = (0,1)$, $\mathfrak p_{u_+}=(0,0)$, $\mathfrak p_{u^-}=(1,0)$, ${\cal P}\setminus\{\mathfrak p_u,\mathfrak p_{u^+}\}\subseteq[\varepsilon,1]\times [-\delta,\delta]$, and every line going through two points of ${\cal P} \setminus \{\mathfrak p_u\}$ has slope in $[-\delta,\delta]$.
\end{definition}

\noindent
We now prove that such $(\delta,\varepsilon)$-squeezings always exist.

\begin{lemma}\label{lem:deltaepsilonsqueezing}
  For every rooted realizable chirotope $(\chi,u)$, there exists $\varepsilon>0$ such that for every $\delta>0$, there exists a  right-$(\delta,\varepsilon)$-squeezing of $(\chi,u)$.
 \end{lemma}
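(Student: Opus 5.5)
The idea is to start from the normalized realization given by \cref{lem:extreme_in_unbounded_cell}, move the root far up along the vertical half-line above it, and then rescale vertically. Let $\mathcal Q=\{\mathfrak q_x\}$ be a realization of $\chi$ as in \cref{lem:extreme_in_unbounded_cell}: $\mathfrak q_u=(0,1)$, $\mathfrak q_{u^+}=(0,0)$, $\mathfrak q_{u^-}=(1,0)$, and the half-line $\mathfrak q_u+\RR_{\ge0}(0,1)$ crosses no line through two non-root points.

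\textbf{Step 1: raise the root.} For $H\ge 1$, let $\mathcal Q_H$ be obtained from $\mathcal Q$ by replacing $\mathfrak q_u$ with $(0,H)$ and keeping all other points. I claim $\mathcal Q_H$ still realizes $\chi$.
\begin{itemize}
\item Triples not involving $u$ are unchanged.
\item For a triple $(u,x,y)$, its orientation only records on which side of the line $(\mathfrak q_x\mathfrak q_y)$ the root lies. As the root moves continuously along the half-line, it never meets such a line, so no orientation changes.
\end{itemize}

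\textbf{Step 2: rescale vertically.} Apply the linear map $(x,y)\mapsto(x,y/H)$. It has positive determinant, so it preserves the chirotope. It fixes $(0,0)$ and $(1,0)$ and sends the root to $(0,1)$. Every non-root point gets $y$-coordinate in $[-M/H,M/H]$, where $M=\max_{x\neq u}|q_{x,y}|$. Every line through two non-root points has its slope divided by $H$. Such a line is never vertical: a vertical line through two non-root points lies at some $x=c>0$, and the upward half-line from $(0,1)$ avoids it automatically, so this case needs a separate argument. Taking $H\ge \max(M,S)/\delta$, where $S$ is the largest absolute slope among these finitely many lines, gives the $\delta$-conditions.

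\textbf{Step 3: the $x$-window.} The $x$-coordinates are untouched throughout, so $\varepsilon$ can be chosen once, independently of $\delta$. Since $[\mathfrak q_u,\mathfrak q_{u^+}]$ is a convex hull edge lying on the line $x=0$, and by general position, every point other than $u,u^+$ has $x>0$. We can therefore take $\varepsilon=\min_{x\ne u,u^+} q_{x,x}>0$.

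\textbf{Main obstacle: the upper bound $x\le 1$.} The hull edge $[u,u^-]$ only gives $x+y<1$ in $\mathcal Q$, so a point with $y<0$ could have $x>1$. I would handle this before Step 1 by a further normalization of $\mathcal Q$. The first thing to try is an affine map fixing $(0,0)$, $(1,0)$ and the vertical direction, namely a shear $(x,y)\mapsto(x,y+\alpha x)$.
\begin{itemize}
\item This shear does not fix $(0,1)$, but after Step 1 the exact height of the root is irrelevant.
\item It preserves the vertical half-line condition, since the half-line lies on $x=0$, where the shear is the identity.
\end{itemize}
The goal is to make $\mathfrak q_{u^-}$ the rightmost non-root point. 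This should follow from $u^-$ being the hull neighbour of $u$: in the limit configuration where the root sits at vertical infinity, the hull edge from $u$ to $u^-$ becomes a vertical supporting line through $\mathfrak q_{u^-}$. So for $H$ large, every other point lies strictly to the left of $x=1$, up to an error that vanishes as $H\to\infty$. I expect the most delicate part to be making this limit argument rigorous, or alternatively choosing the initial projective transform in \cref{lem:extreme_in_unbounded_cell} so that $\max x=1$ holds from the start.
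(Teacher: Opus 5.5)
Your Steps 1--3 (sliding the root up the free half-line, rescaling by $(x,y)\mapsto(x,y/H)$, and taking $\varepsilon$ as the least $x$-coordinate of a point other than $u,u^+$) are exactly the paper's proof. However, you leave the bound $x\le 1$ unproved, which is a genuine gap. The repair you suggest cannot close it. The shear $(x,y)\mapsto(x,y+\alpha x)$ fixes every $x$-coordinate, so it can never bring a point with $x>1$ back into $[\varepsilon,1]$.

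What you are missing is that no extra normalization is needed: the half-line property you already use in Step 1 excludes $x>1$ in $\mathcal Q$ itself. Suppose a non-root point $\mathfrak q_w=(a,b)$ has $a>1$. Since $u^-u$ is a hull edge, $\chi(u^-,u,w)=+1$, which reads $a+b<1$. Hence $b<0$, and the line $(\mathfrak q_w\mathfrak q_{u^-})$ meets the axis $x=0$ at height $-b/(a-1)>1$. So it crosses $\mathfrak q_u+\RR_{\ge0}(0,1)$, a contradiction. Your limit heuristic reduces to the same one-liner. Each $\mathcal Q_H$ realizes $\chi$, so $u^-u$ is a hull edge of $\mathcal Q_H$, which gives $a+b/H<1$ for all $H\ge1$ and hence $a\le 1$; no delicate limit is involved.

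Separately, your remark on vertical lines contradicts itself and leaves that case open, and the case can occur. Two non-root points with equal $x$-coordinate keep an infinite slope under every vertical rescaling. The statement of \cref{lem:extreme_in_unbounded_cell} does not forbid this; for instance, $(0,1),(0,0),(1,0),(1,-\tfrac12)$ satisfies it. The case is excluded by the construction in that lemma's proof if $\lambda$ is taken small enough that the closed segment $[\mathfrak q_u,\mathfrak q_\infty]$ misses every line through two non-root points. This works because the projective map sends lines through $\mathfrak q_\infty$ to vertical lines. The paper relies on this tacitly when it asserts that all slopes tend to $0$.
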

\begin{proof}[Proof of \cref{lem:deltaepsilonsqueezing}]
  Let $\cal Q$ be a realization of $\chi$ as given by \cref{lem:extreme_in_unbounded_cell}. 
  Since $\mathfrak q_{u}\mathfrak q_{u^+}$ is an edge of the convex hull of $\cal Q$,
  all other points in $\cal Q$ have a positive $x$-coordinate.
  Let $\varepsilon >0$ be the smallest $x$-coordinate of a point from $\cal Q\setminus \{\mathfrak q_{u},\mathfrak q_{u^+}\}$. 
  Observe that no point $\mathfrak q$ of $\cal Q$ may have an $x$-coordinate strictly larger than $1$: otherwise, either $\mathfrak q_u \mathfrak q_{u^-}$ would not be an edge of the convex hull (if $\mathfrak q$ is above the horizontal axis), or $ \mathfrak q_{u^-}$ would not be extreme (if $\mathfrak q$ is not too far below the horizontal axis) or the line $(\mathfrak q, \mathfrak q_{u^-})$ would intersect the half-line $\mathfrak q_{u} + \R_{\ge 0} (0,1)$,
  contradicting \cref{lem:extreme_in_unbounded_cell} (if $\mathfrak q$ is too much below the horizontal axis).
 It follows that every point of $\mathcal Q\setminus \{\mathfrak q_u, \mathfrak q_{u^+} \}$ has its $x$-coordinate in $[\varepsilon,1]$.

  Let $t \in \R_{\ge 0}$ and let us define $\mathfrak q_t \eqdef \mathfrak q_u + t (0,1)$ and  $f_{t}: (x,y) \mapsto (x, \frac{1}{1+t} y)$. 
  The map $f_{t}$ preserves orientations. Since $\mathfrak q_t$ and $\mathfrak q_u$ are in the same cell of the arrangement of the set of lines $\big\{(\mathfrak q_x\mathfrak q_y) \colon x,y \in X \setminus \{u\},x \neq y \big\}$, the point set
  $\{\mathfrak q_t\} \cup \cal Q \setminus \{\mathfrak q_u\}$ has the same chirotope as  $\cal Q$, that is $\chi$.
 
  Now consider the point set $ {\cal P}(t) \eqdef \{f_{t}(\mathfrak q) \colon \mathfrak q \in  \{\mathfrak q_t\} \cup \cal Q \setminus \{\mathfrak q_u\}\}$. Since $f_{t}$ preserves orientations, ${\cal P}(t)$ is also a geometric realization of $\chi$. It contains in particular the three following point: $\mathfrak p_u(t) =f_t(\mathfrak q_t)= (0,1)$, $\mathfrak p_{u^+}(t)=f_t(\mathfrak q_{u^+})=(0,0)$, and $\mathfrak p_{u^-}(t)=f_t(\mathfrak q_{u^-})=(1,0)$. Also, the other points, i.e.~the elements of ${\cal P}(t)\setminus\{\mathfrak p_u(t),\mathfrak p_{u^+}(t),\mathfrak p_{u^-}(t)\}$, are all in $[\varepsilon,1]\times \RR$. As $t\to\infty$, the $y$-coordinates of each point in ${\cal P}(t)\setminus\{\mathfrak p_u(t)\}$ goes to $0$; similarly, as $t\to\infty$, the slope of every line going through two points of ${\cal P}(t) \setminus \{\mathfrak p_u(t)\}$ goes to $0$. Henceforth, for every $\delta>0$, for $t$ large enough, ${\cal P}(t)$ is a $(\delta,\varepsilon)$-squeezing of $(\chi,u)$.
\end{proof}

At this point, we would like to take the union of a right-$(\delta,\varepsilon_2)$-squeezing of $(\chi_2,u_2)$ and the mirror image (with respect to the line $x=0$) of a right-$(\delta,\varepsilon_1)$-squeezing of $-\chi_1$. This does not quite prove \cref{prop:join_realizable} yet: we still need, for instance, to ensure that the lines spanned by pairs of points of $\chi_1$ other than $\{u_1,u_1^-\}$ leave all points of $\chi_2$ other than $u_2^+$ on the same side. It turns out that taking $\delta$ small enough does not suffice to ensure this, but combining it with small rotation is enough; this is the third and final step of the construction.

\begin{proof}[Proof of \cref{prop:join_realizable}]
  Let $(\chi_1,u_1)$ and $(\chi_2,u_2)$ be two rooted realizable chirotopes. We want to prove that their join is a rooted realizable chirotope.

  \medskip

  By \cref{lem:deltaepsilonsqueezing}, there exist $\varepsilon_2>0$ and $\varepsilon_1>0$ such that for every $\delta>0$, there exists a right-$(\delta,\varepsilon_2)$-squeezing ${\cal P}^2$ of $(\chi_2,u_2)$ and a right-$(\delta,\varepsilon_1)$-squeezing ${\cal P}^{-1}$ of $(-\chi_1,u_1)$. We let ${\cal P}^1$ denote the mirror image of ${\cal P}^{-1}$ with respect to the line $x=0$. Note that both ${\cal P}^1$ and  ${\cal P}^{2}$ depend on $\delta$, as all constructions from here. We let $\varepsilon = \min(\varepsilon_1,\varepsilon_2)$ and we note that ${\cal P}^1$ is a realization of $\chi_1$. 

  \medskip
  
  We let $B^2$ denote the bounding box $[\varepsilon,1] \times [-\delta,\delta]$ of ${\cal P}^2\setminus\{\p_{u_2}^2,\p_{{u_2}^+}^2\}$ and $L^2$ denote the set of lines with slope in $[-\delta,\delta]$ that intersect $B^2$. Note that every line spanned by two points of ${\cal P}^2 \setminus \{\mathfrak p_{u_2}^2\}$ is in~$L^2$. Similarly, we let $B^1 = [-1,-\varepsilon] \times [-\delta,\delta]$ and let $L^1$ denote the set of lines with slope in $[-\delta,\delta]$ that intersect $B^1$. Note that for $\delta$ small enough, $(0,1)$ is above every line in $L^1 \cup L^2$.
 
\begin{figure}[ht]
\begin{center}
 \includegraphics[width=.98\linewidth,keepaspectratio]{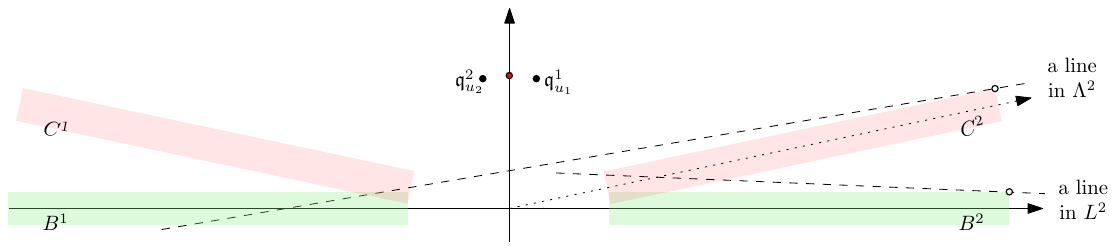}\vspace{-2mm}
\end{center}
  \caption{The bounding boxes.}
\end{figure}

  Let us, momentarily, fix $\delta = \delta_0$ small enough such that $(0,1)$ is above every line in $L^1 \cup L^2$.
We let $\alpha>0$ be some small angle to be defined shortly. We apply the rotation with center $(0,0)$ and angle $\alpha$ to ${\cal P}^2$, $B^2$ and $L^2$, and the rotation with center $(0,0)$ and angle $-\alpha$ to ${\cal P}^1$, $B^1$ and $L^1$. We denote by ${\cal Q}^i$, $C^i$ and $\Lambda^i$ the images of ${\cal P}^i$, $B^i$ and $L^i$, respectively. For $\alpha$ small enough, $(0,1)$ remains above every line of $\Lambda^1 \cup \Lambda^2$. Up to reducing $\alpha$, we can therefore ensure that $\mathfrak q_{u_1}^1$ and $\mathfrak q_{u_2}^2$ are also above every line of $\Lambda^1 \cup \Lambda^2$. We let ${\cal R}^i$ denote ${\cal Q}^i$ in which $\mathfrak q_{u_i}^i$ has been replaced by $\mathfrak p_{u_i}^i=(0,1)$, and note that ${\cal R}^i$ realizes $\chi_i$.

  \medskip

  We now fix $\alpha$ and let $\delta$ vary again. An important observation is that for all $\delta<\delta_0$ the property that ${\cal R}^i$ realizes $\chi_i$ remains. Finally, observe that as  $\delta$ goes to $0$, for every line $\ell \in \Lambda^1$, the slope of $\ell$ goes to $\alpha$ and the distance from $\ell$ to the origin goes to $0$. Since $\varepsilon>0$ and $C^2$ is compact, for $\delta>0$ small enough no line in $\Lambda^1$ intersect $C^2$. Similarly, for $\delta>0$ small enough no line in $\Lambda^2$ intersect $C^1$. This ensures that the union ${\cal R}^1 \cup {\cal R}^2$ with the point $(0,0)$ relabeled $x_0$ and the point $(0,1)$ relabeled $u$ is a realization of the join of $(\chi_1,u_1)$ and $(\chi_2,u_2)$. This proves that $\chi$ is realizable chirotope.
  
  It remains to check that $u$ is an extreme element in $\chi$,
   to ensure that $(\chi,u)$ is a rooted chirotope.
   But this follows immediately from the fact that $(0,1)$ is above all points of ${\cal R}^1 \cup {\cal R}^2$. This completes the proof of Proposition~\ref{prop:join_realizable}.
\end{proof}

\subsection{Twists and meets}

We now want to define the meet operator. Intuitively, it is similar to a join operator where the root is moved from ``the top'' to ``the bottom''. To avoid redefining it from scratch (and reproving its properties), we introduce another operator, called twist, that formalizes this displacement of the root.

\begin{definition}\label{def:twist}
  Let $(\chi, u)$ be a rooted chirotope with ground set $X \cup \{u\}$.
  We define its twist $\neg (\chi, u)$ as the rooted chirotope $(\wchi,v)$,
  where $\wchi$ has ground set $X \cup \{v\}$ and 
\[
\wchi (x,y,z) =\begin{cases}
  \chi(x,y,z) &\text{ if }x,y,z \in X;\\
  -\chi(x,y,u) & \text{ if }z=v.
\end{cases}
\]
\end{definition}

\begin{proposition}\label{prop:twist_realizable}
  In the above definition, $\wchi$ is indeed a chirotope. Moreover, if $\chi$ is realizable,
  then $\wchi$ is realizable as well.
\end{proposition}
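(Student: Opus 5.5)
The plan is to handle the realizable case by an explicit geometric construction, and then deduce the abstract case exactly as in the proof of \cref{cor:join}, by restricting to small subsets and using that small chirotopes are realizable. Throughout we also check that $v$ is extreme in $\wchi$, so that $\neg(\chi,u)$ is indeed a rooted chirotope.

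\emph{Realizable case.} Let $(\chi,u)$ be realizable. By \cref{lem:deltaepsilonsqueezing}, fix some $\delta>0$ (say $\delta=1$) and take a right-$(\delta,\varepsilon)$-squeezing $\mathcal P$ of $(\chi,u)$. This realization has three useful properties.
\begin{itemize}
\item Every line through two points of $\mathcal P\setminus\{\p_u\}$ has slope in $[-\delta,\delta]$, so in particular none of these lines is vertical.
\item All these lines meet the vertical axis $x=0$ at heights in a bounded interval, say $[-M,M]$.
\item The point $\p_u=(0,1)$ lies above each such line.
\end{itemize}
The third property holds because the upward half-line from $\p_u$ crosses none of these lines; alternatively, one can simply take $\delta$ small enough. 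Now choose $T>M$ large and put $\p_v=(0,-T)$. For every pair $x,y\in X$, the point $\p_v$ lies strictly below the non-vertical line $(\p_x\p_y)$, while $\p_u$ lies strictly above it. Hence
\[
\chi_{\mathcal P\setminus\{\p_u\}\cup\{\p_v\}}(x,y,v)=-\chi(x,y,u),
\]
and the orientations of triples inside $X$ are unchanged. So the modified point set realizes $\wchi$. Taking $T$ larger still, $\p_v$ lies below all points of $X$. Hence the horizontal direction through $\p_v$ gives a supporting line, and $v$ is extreme (concretely, its neighbour on the hull plays the role of the $y$ in the definition of extremeness).

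\emph{Abstract case.} Let $(\chi,u)$ be an arbitrary rooted chirotope. Since $\wchi$ visibly satisfies the symmetry axiom, it is a sign function. It is a chirotope as soon as each restriction $\wchi_{|Y}$ to a five-element subset $Y$ is one. If $v\notin Y$, then $\wchi_{|Y}=\chi_{|Y}$ and there is nothing to prove. Otherwise, set $Y'=(Y\setminus\{v\})\cup\{u,u^+,u^-\}$, which has at most $7\le 8$ elements. Then $u$ is still extreme in $\chi_{|Y'}$: extremeness of $u$ is witnessed by its hull neighbour $u^+$, and the constant-sign condition survives restriction. Hence $(\chi_{|Y'},u)$ is a rooted realizable chirotope by the result of~\cite{goodman1980proof} used in the proof of \cref{cor:join}. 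Its twist is $\wchi_{|(Y'\setminus\{u\})\cup\{v\}}$, which contains $\wchi_{|Y}$ as a restriction. By the realizable case this twist is realizable, hence a chirotope, and so $\wchi_{|Y}$ is a chirotope. Extremeness of $v$ is handled by the Carathéodory-type criterion~\cite[Lemma 2.1]{sidmapapiercombi}: it suffices to check it on four-element subsets containing $v$, and these are treated by the same enlargement and realization argument.

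\emph{Main obstacle.} The delicate point is the geometric claim that $\p_u$ and the far point $\p_v$ lie on opposite sides of \emph{every} line spanned by $X$. This requires no such line to be vertical and $\p_u$ to lie above all of them. The squeezing lemma is designed to give exactly this, so I expect this step to reduce to the bounded-intercept argument above. The secondary subtlety is making sure that restricting $\chi$ to small subsets keeps $u$ extreme, which is why the hull neighbours $u^+$ and $u^-$ are added to $Y'$.
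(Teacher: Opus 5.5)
Your proof is correct and has the same structure as the paper's. You realize $\chi$ so that $\p_u$ can be pushed to infinity without crossing any line spanned by $X$, put $\p_v$ at the opposite end so that it lies on the other side of every such line, and reduce the abstract case to restrictions on at most $8$ elements, which are realizable.

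For the geometric step, the paper uses \cref{lem:extreme_in_unbounded_cell} and places $\p_v$ in the antipodal unbounded cell $\overline{C}$. You instead use a squeezing and put $\p_v=(0,-T)$. Note that it is the choice $\delta<1/2$, not $\delta=1$, that puts $\p_u$ above all lines directly from \cref{def:deltaepsilonsqueezing}. Your version is also more careful than the paper's on two points: you keep $u$ extreme in the restriction by adding $u^+,u^-$, and you check that $v$ is extreme.
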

\begin{proof}
  We first prove the statement for a rooted {\em realizable} chirotope $(\chi,u)$ on $X \cup \{u\}$. By 
\cref{lem:extreme_in_unbounded_cell},
there exists a realization $\mathcal P \cup \{\mathfrak p_u\}$
of $\chi$ for which $\mathfrak p_u$ is in an unbounded cell $C$
of the line arrangement ${\mathcal L}_{\mathcal P}$ formed by the points in $\mathcal P$.
Then there exists another cell of ${\mathcal L}_{\mathcal P}$, which we denote $\overline{C}$,
such that any point $\mathfrak p_v$ in $\overline{C}$ satisfies 
$\chi(\mathfrak q,\mathfrak r,\mathfrak p_v) 
= -\chi(\mathfrak q,\mathfrak r,\mathfrak p_u)$ for $\mathfrak q,\mathfrak r$ in $\mathcal P$.
Indeed, taking any half-line not parallel to any line of ${\mathcal L}_{\mathcal P}$ and whose unbounded extremity lies in $C$,
reversing its direction yields a half-line  whose unbounded extremity lies in $\overline{C}$.
Then, for any $\mathfrak p_v$ in $\overline{C}$,
 the point set $\mathcal P \cup \{\mathfrak p_v\}$ 
 is a realization of the chirotope $\wchi$ on $X \cup \{v\}$.

 It remains to prove the first statement for a chirotope $\chi$ on $X \cup \{u\}$ that is not realizable. We proceed as in the proof of \cref{cor:join}. Recall that $\wchi$ is a chirotope if and only if for every five-element subset $Y \subseteq X \cup \{u\}$ the restriction ${\wchi}_{|Y}$ is a chirotope. If $u \notin Y$, then ${\wchi}_{|Y} = \chi_{|Y}$ and it is clearly a chirotope. If $u \in Y$, then ${\wchi}_{|Y}$ coincides with the twist $\widetilde{\chi_{|Y}}$ of the restricted rooted chirotope $\chi_{|Y}$. (In other words, the operator $\widetilde{\cdot}$ commutes with the restriction to a subset $Y$ containing the root $u$.) The small size of $\chi_{|Y}$ ensures that it is realizable, and therefore so is its twist by the previous argument. In particular, $\widetilde{\chi_{|Y}} = {\wchi}_{|Y}$ is a chirotope. Altogether, every restriction ${\wchi}_{|Y}$ to a five-element subset is a chirotope and consequently so is $\wchi$. 
\end{proof}

Let us stress that the twist of a chirotope on $X \cup \{u\}$ is a chirotope on $X \cup \{v\}$, but {\bf not} a chirotope on $X \cup \{u,v\}$. This is a design choice: it avoids choosing how the line $uv$ partitions $X$. We can now define the meet operator.

\begin{definition}
Let $(\chi_1, u_1)$ and  $(\chi_2, u_2)$ be two rooted chirotopes. 
We define the \emph{meet} of these two chirotopes by
\[
(\chi_1, u_1)\wedge (\chi_2, u_2) = \neg \big(\neg (\chi_1, u_1) \vee \neg (\chi_2, u_2)\big).
\]
\end{definition}

Let us spell this out. Starting with two chirotopes on $X_1 \cup \{u_1\}$ and $X_2 \cup \{u_2\}$, we first compute their twists, which are chirotopes on $X_1 \cup \{v_1\}$ and $X_2 \cup \{v_2\}$, we then apply a join operator, obtaining a chirotope on $X_3 \cup \{v_3\}$, and we twist again, finally obtaining a chirotope on $X_3 \cup \{u_3\}$. Alternatively, the meet of two chirotopes can be defined as the join, except that \eqref{eq:def_chi3} is replaced by:

\begin{equation}
 \label{eq:def_chi3_bis}
 \chi_3(x,y,z) = 
  \begin{cases}
   \chi_1(x,y,z) \text{ \quad if } x,y,z \in X_1, \\
   \chi_2(x,y,z) \text{ \quad if } x,y,z \in X_2, \\
   -\chi_1(x,y,u_1) \text{ \quad if } x,y \in X_1 \text{ and } z \in X_2\setminus\{{u_2}^+\}, \\
   -\chi_2(u_2,y,z) \text{ \quad if } x \in X_1\setminus\{{u_1}^-\} \text{ and } y,z \in X_2;
   \end{cases}
 \end{equation}

\noindent
We now obtain the following as a direct consequence of \Cref{prop:join_realizable,prop:twist_realizable}.

\begin{proposition}
\label{prop:meet_realizable}
The meet $(\chi_3,u_3)$ of two rooted chirotopes $(\chi_1, u_1)$ and  $(\chi_2, u_2)$
is a rooted chirotope.
Moreover, if $(\chi_1, u_1)$ and  $(\chi_2, u_2)$ are realizable, 
then $(\chi_3,u_3)$ is realizable as well.
\end{proposition}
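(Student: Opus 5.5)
The plan is to unfold the definition $(\chi_1,u_1)\wedge(\chi_2,u_2)=\neg\big(\neg(\chi_1,u_1)\vee\neg(\chi_2,u_2)\big)$ and apply the earlier results one operation at a time. Each application of \cref{prop:twist_realizable} or \cref{cor:join} (respectively \cref{prop:join_realizable}) must output a \emph{rooted} chirotope, so that the next operation is defined. So besides the chirotope axioms, I will track at every stage that the distinguished element is extreme.

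\textbf{Step 1: the inner twists.} By \cref{prop:twist_realizable}, $\widetilde{\chi_i}$ on $X_i\cup\{v_i\}$ is a chirotope, and it is realizable when $\chi_i$ is. I must also check that $v_i$ is extreme in $\widetilde{\chi_i}$.

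Since $u_i$ is extreme in $\chi_i$, there is $y$ such that $\chi_i(u_i,y,z)$ is constant in $z$. By definition of the twist, $\widetilde{\chi_i}(v_i,y,z)=\widetilde{\chi_i}(y,z,v_i)=-\chi_i(y,z,u_i)=-\chi_i(u_i,y,z)$. This is again constant in $z\in X_i\setminus\{y\}$, so $v_i$ is extreme with the same witness $y$.

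I also need to know the hull neighbours of $v_i$, since the join identifies $v_1^-$ with $v_2^+$. The same computation shows that the hull order flips: $v_i^{+}=u_i^{-}$ and $v_i^{-}=u_i^{+}$. This does not affect well-definedness, but it clarifies which elements get identified.

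\textbf{Step 2: the join and the outer twist.} Since $\neg(\chi_i,u_i)$ are rooted chirotopes, \cref{cor:join} shows that their join $(\xi,v_3)$ is a rooted chirotope. If the $\chi_i$ are realizable, so are the $\widetilde{\chi_i}$ by Step 1, and \cref{prop:join_realizable} makes $\xi$ realizable. Applying \cref{prop:twist_realizable} once more gives that $\chi_3=\widetilde{\xi}$ is a chirotope, realizable in the realizable case. The extremality computation of Step 1, applied to $\xi$, shows that $u_3$ is extreme. Hence $(\chi_3,u_3)$ is a rooted chirotope.

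\textbf{Main obstacle.} The only real care point is rootedness, namely extremality of the root after each twist. It is handled by the one-line sign computation in Step 1. As an optional consistency check, I can verify that composing the definitions yields exactly \eqref{eq:def_chi3_bis}: triples in $X_1$ alone or $X_2$ alone are untouched, and each mixed triple acquires exactly one sign flip from the twists.
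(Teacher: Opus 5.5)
Your proof is correct and is the paper's argument: the meet is $\neg\big(\neg(\chi_1,u_1)\vee\neg(\chi_2,u_2)\big)$, so the statement follows directly from \cref{prop:twist_realizable} together with \cref{cor:join} and \cref{prop:join_realizable}. Your sign computation showing that the twisted root remains extreme fills in a detail the paper leaves implicit. One caution, which concerns only your optional consistency check: since $v_i^+=u_i^-$, composing the definitions identifies $u_1^+$ with $u_2^-$ and gives $\chi_3(x,y,u_3)=-1$ for $x\in X_1$, $y\in X_2$ (non-identified). It therefore reproduces \eqref{eq:def_chi3_bis} (with \eqref{eq:def_chi3_u3}) only after swapping the two operands, not ``exactly''.
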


\begin{remark}\label{rmk:join=convexSum}
{Let $C$ be a chain in the sense of Rutschmann and Wettstein~\cite{rutschmann2023chains}, i.e.~a sequence of points
$(\mathfrak p_1,\dots,\mathfrak p_n)$ sorted in increasing $x$-coordinates 
such that edges of the form $\mathfrak p_i \mathfrak p_{i+1}$ belong to all triangulations of the corresponding point set.
Then we can take an additional point $\mathfrak p_u$ with a large $y$-coordinate
so that the triangle $\mathfrak p_i \mathfrak p_j \mathfrak p_u$ is in counterclockwise 
order for all $i,j$ with $i<j$.
We let $(\chi_C,u_C)$ be the corresponding rooted chirotope, where $u_C$
is the element indexing the point $\mathfrak p_u$.
This allows to see chains as particular cases of rooted chirotopes.}

{
Then it can be checked that if $C_1$ and $C_2$ are chains,
the join $(\chi_{C_1},u_{C_1}) \vee (\chi_{C_2},u_{C_2})$ is  the rooted
chirotope associated to the chain $C_1 \vee C_2$, that is the convex sum of $C_1$ and 
$C_2$ in the sense of~\cite{rutschmann2023chains}.
Indeed, it is straightforward to check that all triples are oriented in the same way 
in both chirotopes.
Similarly, the meet $(\chi_{C_1},u_{C_1}) \wedge (\chi_{C_2},u_{C_2})$
is the rooted chirotope associated to the concave sum $C_1 \wedge C_2$
of $C_1$ and $C_2$. In this sense, our join and meet operations are generalizations
to all chirotopes of the convex and concave sums of Rutschmann and Wettstein~\cite{rutschmann2023chains}.}
\end{remark}

\section{Counting triangulations of joins and meets}
\label{sec:triangulations_join}

We are interested in computing the number of triangulations of joins and meets of smaller chirotopes
(and of repeated iterations of joins and meets of smaller chirotopes).
To this end, counting triangulations recursively is not enough, we need to introduce a notion
of {\em weak triangulations}, and a bivariate polynomial, enumerating weak triangulations
with respect to the degree of some vertices.
In the following if $\Delta$ is a set of edges on a point set $X$ (often a triangulation),
 and $x$ an element of $X$,
we denote by $\deg_\Delta(x)$ the degree of $x$ in $\Delta$.

\subsection{Weak triangulations}

  Let $(\chi,u)$ be a rooted chirotope.
In the following, similarly to \cref{def:twist}, we consider an additional element $v$, 
and extend $\chi$ to triples in $X \cup \{v\}$
by setting $\chi(x,y,v)=-\chi(x,y,u)$ for $x,y$ in $X$. 
Note that $\chi$ defines chirotope structures on $X \cup \{u\}$ and on $X \cup \{v\}$,
but not on $X \cup \{u,v\}$, since $\chi(x,u,v)$ is undefined (for $x$ in $X$).

Nevertheless for a set $H$ of segments with endpoints in $X \cup \{u,v\}$ {\bf not containing} the segment $uv$,
one can decide whether it contains any crossing pair or not.
Indeed for $x,y$ in $X$, the equality $\chi(x,y,v)=-\chi(x,y,u)$
 implies that the segments $xu$ and $yv$ cannot be crossing.
Hence the potentially crossing pairs of segments contain
 at most one of the points $u$ or $v$,
and deciding whether they are crossing
 or not does not involve the undefined values $\chi(x,u,v)$.
Thus the following notion is well-defined.
\begin{definition}
  A weak triangulation of $(\chi,u)$ is a maximal non-crossing collection of segments with endpoints in $X \cup \{u,v\}$,
  {\bf not containing} the segment $uv$.
\end{definition}
We claim that any triangulation $\Delta$ of $\chi$ can be extended
in a unique way to a weak triangulation of $(\chi,u)$.
Indeed, it suffices to add to $\Delta$ all edges $vx$ with $x$ in $X$
that do not cross any edge in $\Delta$.
However, some weak triangulations are not extensions of triangulations of $\chi$.
This explains the terminology ``weak triangulations'': every triangulation is a weak triangulation, but the converse is not true.
See \cref{fig:weak_triangulations} for examples.
\begin{figure}[ht]
\[\includegraphics{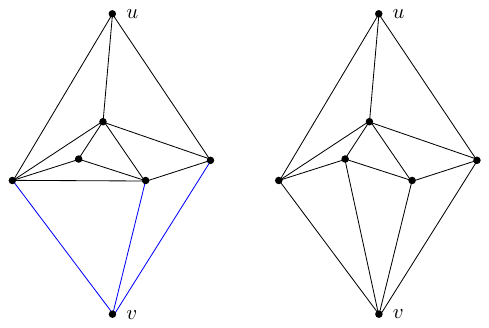}\]
\caption{Two weak triangulations of the same rooted chirotope $(\chi,u)$. The one on the left is an extension of a triangulation of $\chi$, but not the one on the right.}
\label{fig:weak_triangulations}
\end{figure}

The weak triangulations which are extensions of triangulations are easily identified 
through the following lemma.
\begin{lemma}\label{lem:identifying_true_triangulations}
Let $(\chi,u)$ be a rooted chirotope and let $m$ be the minimal degree of $v$ in a weak
triangulation of $(\chi,u)$. Then a weak triangulation $\Delta$ of $(\chi,u)$ is
the extension of a triangulation of $\chi$
if and only if $v$ has degree $m$ in $\Delta$.
\end{lemma}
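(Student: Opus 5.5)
The plan is to understand exactly which segments $vx$ can appear in a weak triangulation, and to show that the extensions of genuine triangulations are precisely those in which $v$ sees as few points as possible.

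\textbf{Step 1: the edges at $v$ in an extension.} Let $\Delta$ be a triangulation of $\chi$, i.e.\ of the chirotope on $X\cup\{u\}$, and let $\Delta'$ be its extension, obtained by adding every $vx$ that crosses no edge of $\Delta$. Since $v$ plays the role of a point ``opposite'' to $u$, the only segments $vx$ that can avoid all edges of $\Delta$ are those with $x$ visible from $v$. These are the points of the lower convex chain of $X$, meaning the chain of the convex hull of $X$ not facing $u$, and its hull edges belong to $\Delta$. I will show that $vx$ crosses no edge of $\Delta$ if and only if $x$ lies on this chain, by the following argument:
\begin{itemize}
\item If $x$ is on the chain, then $vx$ can cross only segments among points of $X\cup\{u\}$ that separate $v$ from $x$. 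No segment $yz$ with $y,z\in X$ does this, since $x$ is extreme on the side facing $v$. A segment $uy$ cannot cross $vx$, as observed in the text.
\item If $x$ is not on the chain, then $vx$ crosses some hull edge $yz$ of $X$ on the $v$-side. Such an edge is in every triangulation, because $yz$ is an edge of $\mathrm{conv}(X)$ not incident to $u$ that stays on the boundary of $\mathrm{conv}(X\cup\{u\})$.
\end{itemize}
Hence every extension gives $v$ the same degree $m_0$, namely the number of points of that chain. These notions must be phrased purely with $\chi$, using the extremality definition, or else transferred to a realization via Proposition~\ref{prop:twist_realizable} and small-subset restrictions.

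\textbf{Step 2: every weak triangulation has $\deg(v)\ge m_0$, with equality only for extensions.} Take any weak triangulation $\Delta$. The edges of $\Delta$ among $X\cup\{v\}$ form a non-crossing family in the twisted chirotope $\wchi$. In it, the chain points are exactly the neighbours of $v$ along $\mathrm{conv}(X)$ seen from $v$. I expect, though I would need to check, that every $vx$ with $x$ on the chain crosses no segment between two points of $X\cup\{u\}$ and no segment $uy$. If so, maximality forces all $m_0$ of these edges into $\Delta$, so $\deg_\Delta(v)\ge m_0$.

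If equality holds, $v$ is adjacent exactly to the chain. Then the edges of $\Delta$ not incident to $v$ form a non-crossing family on $X\cup\{u\}$. It should be maximal: any segment $s$ that could be added would still not cross the edges at $v$, since those are uncrossable by Step 1. That would contradict the maximality of $\Delta$. So $\Delta\setminus\{vx\}$ is a triangulation $T$ of $\chi$, and $\Delta$ is its extension. Conversely, extensions have degree exactly $m_0$. Thus $m=m_0$, and the equivalence follows.

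\textbf{Main obstacle.} The hard part is justifying that chain vertices are exactly those $x$ for which $vx$ is crossable by no segment of $X\cup\{u\}$, together with the hull-edge claim. This has to be done in the abstract chirotope setting. I would first prove it for realizable $(\chi,u)$ with $\mathfrak p_v$ placed as in the proof of Proposition~\ref{prop:twist_realizable}. Each crossing statement involves at most five elements, so I would then reduce to realizable restrictions as in Corollary~\ref{cor:join}. Defining the chain abstractly requires care, for instance via consecutive pairs $x,y$ with $\wchi(x,y,z)$ of constant sign for all $z$.
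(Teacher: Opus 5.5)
Your argument is correct, but it takes a different route from the paper's.

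\textbf{The paper's route.} It argues by counting. By Euler's formula every weak triangulation has the same number $M$ of edges. Hence $\deg_\Delta(v)=M-|\Delta^o|$, where $\Delta^o$ is $\Delta$ with the edges at $v$ removed. Then $\Delta^o$ is a triangulation of $\chi$ exactly when it has maximal size, i.e.\ exactly when $\deg_\Delta(v)$ is minimal.

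\textbf{Your route.} You determine the neighbourhood of $v$ directly:
\begin{itemize}
\item segments $vx$ with $x$ an extreme element of $\chi$ other than $u$ are crossed by nothing, so they lie in every weak triangulation;
\item every other $vx$ crosses a hull edge of $\chi$, so it lies in no extension;
\item when $v$ has only its forced neighbours, any segment addable to $\Delta^o$ would also be addable to $\Delta$, so $\Delta^o$ is a triangulation.
\end{itemize}
This avoids Euler's formula. The paper relies on it both for triangulations of an abstract chirotope and for weak triangulations, whose ground set $X\cup\{u,v\}$ is not even a chirotope. Your route also identifies $m$ as the number of extreme elements of $\chi$ minus one. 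The price is some convex-hull bookkeeping in the abstract setting.

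\textbf{The bookkeeping.} It is lighter than you fear.
\begin{itemize}
\item Unwinding the crossing condition with $\chi(\cdot,\cdot,v)=-\chi(\cdot,\cdot,u)$ shows that $vx$ crosses $yz$ iff $\chi(y,z,x)=\chi(y,z,u)$ and $\chi(x,y,u)=\chi(z,x,u)$. Using the extremality of $u$, this holds iff $x$ lies inside the triangle $uyz$; the other sign pattern would put $u$ inside $xyz$. This gives your first bullet of Step~1 directly, so the hedge in Step~2 is unnecessary: it is exactly that bullet.
\item Your second bullet is not a five-element statement, so restricting to small realizable subsets does not cover it. Instead, list the hull of $\chi$ in counterclockwise order as $u=h_0,h_1,\dots,h_k$. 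For non-extreme $x$, $\chi(u,h_1,x)=1$ and $\chi(u,h_k,x)=-1$, so there is an index with $\chi(u,h_i,x)=1$ and $\chi(u,h_{i+1},x)=-1$. Together with $\chi(h_i,h_{i+1},x)=1$, this places $x$ inside $uh_ih_{i+1}$. So $vx$ crosses the uncrossable edge $h_ih_{i+1}$.
\end{itemize}

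\textbf{A slip to fix.} Your suggested abstract definition of the chain, via pairs with $\wchi(x,y,z)$ of constant sign, picks out the hull edges of $\wchi$. That is the chain of $\mathrm{conv}(X)$ facing $u$, which is the wrong one. The chain you need is simply the set of extreme elements of $\chi$ other than $u$, as you said earlier in Step~1.
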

\begin{proof}
By Euler formula, all weak triangulations $\Delta$ of $(\chi,u)$ have the same number of edges, say $M$.
Taking a weak triangulation $\Delta$ and removing its edges incident to $v$ gives a (not necessarily maximal) set of non crossing edges of $(\chi,u)$, which we denote $\Delta^o$.
The number of edges in $\Delta^o$ is $M-\deg_\Delta(v)$. The edge-set $\Delta^o$
is a triangulation if and only if its number of edges is maximal i.e.~if and only if $\deg_\Delta(v)=m$.
\end{proof}

\subsection{Polynomials counting weak triangulations}

For a rooted chirotope $(\chi,u)$, we denote
 $\mathfrak{W}(\chi,u)$ the set of weak triangulations of $(\chi,u)$.
 Moreover, we associate with $(\chi,u)$ a bivariate polynomial $P_{(\chi,u)}(u,v)$ defined by 
\[
P_{(\chi,u)}(u,v) = \sum_{\Delta \in \mathfrak{W}(\chi,u)} u^{\deg_\Delta(u)} v^{\deg_\Delta(v)},\]
where we recall that $\deg_\Delta(u)$ and $\deg_\Delta(v)$ denote respectively the number of edges incident to $u$ and $v$ in $\Delta$. 
Here, and in what follows, we use the same symbol for the element of the chirotope $u$ or $v$, 
and the associated variable in the bivariate polynomial.

Our goal in this section is to prove the following recursive formula for these bivariate polynomials. As usual, for any bivariate polynomial $P(u,v)$, we denote by $[u^a]P(u,v)$ the coefficient of $u^a$ in $P(u,v)$, which is a polynomial in $v$.

\begin{proposition}
  \label{prop:triangulation_join}
Let ${(\chi_1,u_1)}$ and  ${(\chi_2,u_2)}$ be two rooted chirotopes on at least three elements each. It holds that 
\[
 P_{{(\chi_1,u_1)} \vee {(\chi_2,u_2)}}(u,v) = \sum_{d_1 \geq 2} \sum_{d_2 \geq 2} [u^{d_1}] P_{{(\chi_1,u_1)}}(u,v) \cdot [u^{d_2}] P_{{(\chi_2,u_2)}}(u,v) \cdot N_{d_1,d_2}(u) v^{-1},
 \]
 where
$ \displaystyle N_{d_1,d_2}(u) \eqdef u^{d_1+d_2-1} + \sum_{i_1=1}^{d_1-1} \sum_{i_2=1}^{d_2-1} \binom{d_1-i_1+d_2-i_2-2}{d_1-i_1-1} u^{i_1+i_2}$.
\end{proposition}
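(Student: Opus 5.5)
The plan is to build an explicit bijection between weak triangulations $\Delta$ of the join $(\chi_3,u_3)=(\chi_1,u_1)\vee(\chi_2,u_2)$ and tuples $(\Delta_1,\Delta_2,\text{extra data})$. Here $\Delta_i$ is a weak triangulation of $(\chi_i,u_i)$ in which $u_i$ has degree $d_i$, and the extra data is an element of a set whose generating polynomial in the final degree of $u_3$ is $N_{d_1,d_2}(u)$. Summing over $d_1,d_2$ then gives the formula.

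\emph{Geometric picture and degree bookkeeping.} I will use the realization from the proof of Proposition~\ref{prop:join_realizable} only as a guide. In it, $X_1$ is flattened to the left of $x_0=(0,0)$, $X_2$ to the right, $u_3$ lies far above and $v_3$ far below. The exactness of all claims will be checked purely through the sign rules \eqref{eq:def_chi3}, \eqref{eq:def_chi3_u3} together with $\chi_3(x,y,v_3)=-\chi_3(x,y,u_3)$; each crossing statement involves at most four elements.
\begin{itemize}
\item First I check that for $x\in X_1\setminus\{x_0\}$ and $y\in X_2\setminus\{x_0\}$ we have $\chi_3(x,y,v_3)=-1$. Hence the segment $v_3x_0$ crosses no admissible segment and lies in every weak triangulation. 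It is counted in both $\deg_{\Delta_1}(v)$ and $\deg_{\Delta_2}(v)$, which explains the factor $v^{-1}$.
\item Second, I show that no segment with both endpoints in $X_1\setminus\{x_0\}$ crosses a segment with both endpoints in $X_2\cup\{v_3\}$, and symmetrically. I also show that a segment $xy$ with $x\in X_1\setminus\{x_0\}$, $y\in X_2\setminus\{x_0\}$ crosses every segment $v_3z$ with $z\ne x_0$ only in the expected way. This is where the join rules (lines of $X_1$ do not separate $u_3$ from $X_2$) enter.
\end{itemize}

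\emph{Decomposition.} Given $\Delta$, I define $\Delta_1$ to be the edges of $\Delta$ inside $X_1\cup\{u,v\}$, completed by all segments $u x$ ($x\in X_1$) that cross no edge of $\Delta_1$. I define $\Delta_2$ similarly. I must prove that $\Delta_1$ is a weak triangulation of $(\chi_1,u_1)$. Let $a_1=u_1^+,\dots,a_{d_1}=x_0$ be the $u$-neighbours in $\Delta_1$ in angular order, and $b_1=u_2^-,\dots,b_{d_2}=x_0$ those in $\Delta_2$. The remaining edges of $\Delta$ are:
\begin{itemize}
\item the $u$-edges;
\item edges between $\{a_j\}$ and $\{b_k\}$.
\end{itemize}
The key claim is that these edges triangulate the ``pocket'' bounded by the two chains $a_1\dots a_{d_1}$ and $b_1\dots b_{d_2}$ and the point $u$. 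Moreover, a cross edge $xy$ with $x\in X_1$ and $y\in X_2$ can occur only if $x$ and $y$ are chain vertices.

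\emph{Counting the pocket.} Either $ux_0\in\Delta$, which gives degree $d_1+d_2-1$ and the term $u^{d_1+d_2-1}$. Otherwise $u$ keeps exactly its outer neighbours $a_1,\dots,a_{i_1}$ and $b_1,\dots,b_{i_2}$ with $i_1,i_2\ge1$, since hull edges are forced. The quadrilateral-like region bounded by $a_{i_1}\dots a_{d_1}=x_0=b_{d_2}\dots b_{i_2}$ and the edge $a_{i_1}b_{i_2}$ must then be triangulated using only diagonals between the two chains. This holds because the chains are reflex when seen from inside, again by the join sign rules. Such triangulations correspond to lattice paths with $d_1-i_1-1$ steps of one kind and $d_2-i_2-1$ of the other, giving $\binom{d_1-i_1+d_2-i_2-2}{d_1-i_1-1}$ and degree $i_1+i_2$ for $u$. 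Conversely, any choice of $(\Delta_1,\Delta_2)$ and a pocket filling glues to a crossing-free set, and maximality follows from an edge count via Euler's formula, as in Lemma~\ref{lem:identifying_true_triangulations}. The $v$-degree is $\deg_{\Delta_1}v+\deg_{\Delta_2}v-1$.

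\emph{Main obstacle.} The hard step is the claim that cross edges live only in the pocket and that $\Delta_1$ is maximal. Concretely, I must show that a cross edge $xy$ whose endpoint $x$ is not a $u$-neighbour of the completed $\Delta_1$ must cross some edge of $\Delta_1$. My first attempt would reduce this to a statement on few points (the edge of $\Delta_1$ separating $x$ from $u$ plus $y$). I would verify it by the sign rules, using realizability of small restrictions to argue geometrically as in Corollary~\ref{cor:join}. The hypothesis of at least three elements ensures $u_1^+\ne x_0$, so the chains are nondegenerate.
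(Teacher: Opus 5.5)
Your proposal is correct and is essentially the paper's own bijection. Your $\Delta_i$ (the part of $\Delta$ on $X_i\cup\{u,v\}$, completed by all non-crossing $u$-edges) coincides with the paper's $\pi_i(\Delta)$, which instead replaces the far endpoint of each cross edge by $u_i$ and adds $u_1u_1^-$ (resp.\ $u_2u_2^+$). The remaining ingredients are exactly as in the paper: the pocket count $\binom{|Y_1|+|Y_2|-2}{|Y_1|-1}$, the $ux_0$ term $u^{d_1+d_2-1}$, the shared edge $vx_0$ giving $v^{-1}$, and maximality by edge counting.

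The step you flag as the main obstacle is in fact a direct substitution, so you do not need realizability of small restrictions. By the third case of \eqref{eq:def_chi3}, every $y\in X_2\setminus\{x_0\}$ plays exactly the role of $u_1$ for triples with two elements in $X_1$. Hence for $x,a,b\in X_1$, the segments $xy$ and $ab$ cross in $\chi_3$ if and only if $xu_1$ and $ab$ cross in $\chi_1$.
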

Since the twist operation on rooted chirotopes simply switches the role of $u$ and $v$,
we have
\[P_{\neg (\chi,u)}(u,v) = P_{(\chi,u)}(v,u). \]
Hence Proposition~\ref{prop:triangulation_join} also gives a way to compute $P$ polynomials along meet operations. It suffices to switch the roles of $u$ and $v$ in Proposition~
\ref{prop:triangulation_join}.\medskip

For a rooted chirotope $(\chi,u)$, we also consider the generating polynomial $Q_{(\chi,u)}(u)$ of usual triangulations of $\chi$ with respect to the degree of $u$, 
i.e.~$Q_{(\chi,u)}(u)$ is defined by
\[Q_{(\chi,u)}(u)= \sum_{\Delta \in \mathfrak{T}(\chi)} u^{\deg_\Delta(u)},\]
where $\mathfrak{T}(\chi)$ is the set of triangulations of $\chi$.
The polynomial $Q_{(\chi,u)}(u)$ contains less information than $P_{(\chi,u)}(u,v)$.
Indeed, from Lemma~\ref{lem:identifying_true_triangulations}, one has
\[Q_{(\chi,u)}(u)=[v^m] P_{(\chi,u)}(u,v),\]
where $m$ is the minimal exponent of $v$ appearing in $P_{(\chi,u)}$. From \cref{prop:triangulation_join}, this yields a similar recursive formula for the polynomials $Q_{(\chi,u)}(u)$. 

\begin{corollary}
  \label{corol:Qjoin}
Let ${(\chi_1,u_1)}$ and  ${(\chi_2,u_2)}$ be two rooted chirotopes on at least three elements each. It holds that 
\[
 Q_{{(\chi_1,u_1)} \vee {(\chi_2,u_2)}}(u) = \sum_{d_1 \geq 2} \sum_{d_2 \geq 2} [u^{d_1}] Q_{{(\chi_1,u_1)}}(u) \cdot [u^{d_2}] Q_{{(\chi_2,u_2)}}(u) \cdot N_{d_1,d_2}(u),
\]
where, as in Proposition~\ref{prop:triangulation_join},
\[N_{d_1,d_2}(u) = u^{d_1+d_2-1} + \sum_{i_1=1}^{d_1-1} \sum_{i_2=1}^{d_2-1} \binom{d_1-i_1+d_2-i_2-2}{d_1-i_1-1} u^{i_1+i_2}.
\]
\end{corollary}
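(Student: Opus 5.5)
We will deduce Corollary~\ref{corol:Qjoin} from Proposition~\ref{prop:triangulation_join} by extracting the coefficient of the minimal power of $v$ on both sides. Write $(\chi_3,u_3)=(\chi_1,u_1)\vee(\chi_2,u_2)$ and let $m_1,m_2,m_3$ be the minimal degrees of $v$ in weak triangulations of $(\chi_1,u_1)$, $(\chi_2,u_2)$, $(\chi_3,u_3)$. By Lemma~\ref{lem:identifying_true_triangulations}, $Q_{(\chi_i,u_i)}(u)=[v^{m_i}]P_{(\chi_i,u_i)}(u,v)$ for $i=1,2,3$. So the task reduces to two things: show that $m_3=m_1+m_2-1$, and show that the coefficient of $v^{m_1+m_2-1}$ on the right-hand side of Proposition~\ref{prop:triangulation_join} equals the claimed sum.

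For the second point, note that $N_{d_1,d_2}(u)$ does not depend on $v$. Hence the right-hand side of Proposition~\ref{prop:triangulation_join} equals $v^{-1}$ times a sum of products $A_{d_1}(v)B_{d_2}(v)$ with $A_{d_1}=[u^{d_1}]P_{(\chi_1,u_1)}$ and $B_{d_2}=[u^{d_2}]P_{(\chi_2,u_2)}$. Every monomial of $A_{d_1}$ has $v$-degree at least $m_1$, and every monomial of $B_{d_2}$ has $v$-degree at least $m_2$. So the smallest $v$-exponent that can appear is $m_1+m_2-1$, and its coefficient is
\[\sum_{d_1,d_2}[u^{d_1}v^{m_1}]P_{(\chi_1,u_1)}\cdot[u^{d_2}v^{m_2}]P_{(\chi_2,u_2)}\cdot N_{d_1,d_2}(u)=\sum_{d_1,d_2}[u^{d_1}]Q_{(\chi_1,u_1)}\,[u^{d_2}]Q_{(\chi_2,u_2)}\,N_{d_1,d_2}(u).\]
This coefficient is nonzero: all coefficients involved are nonnegative, $N_{d_1,d_2}$ contains the monomial $u^{d_1+d_2-1}$, and both $Q_{(\chi_i,u_i)}$ are nonzero since each chirotope has a triangulation. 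Every triangulation has $\deg(u_i)\ge 2$, because $u_i$ is extreme and its two hull edges are present; this justifies restricting to $d_i\ge2$.

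It follows that the minimal $v$-exponent of $P_{(\chi_3,u_3)}$ is exactly $m_3=m_1+m_2-1$. The second point therefore also settles the first, with no separate geometric argument needed. Taking $[v^{m_3}]$ of both sides of Proposition~\ref{prop:triangulation_join} and applying Lemma~\ref{lem:identifying_true_triangulations} to $(\chi_3,u_3)$, which is a rooted chirotope by Corollary~\ref{cor:join}, gives the formula.

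The step I expect to need most care is this positivity/minimality argument. Lemma~\ref{lem:identifying_true_triangulations} requires $m$ to be the true minimum, so I must make sure the lowest-order coefficient on the right cannot cancel. Nonnegativity of all coefficients handles this.
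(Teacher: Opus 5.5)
Your proposal is correct and follows the paper's route. The paper also obtains the corollary from \cref{prop:triangulation_join} by extracting the coefficient of the minimal power of $v$, via $Q_{(\chi,u)}(u)=[v^m]P_{(\chi,u)}(u,v)$ from \cref{lem:identifying_true_triangulations}, and your nonnegativity argument for $m_3=m_1+m_2-1$ makes explicit a step the paper leaves implicit (there it is visible through the bijection, where $\deg_\Delta(v)=\deg_{\Delta_1}(v_1)+\deg_{\Delta_2}(v_2)-1$).
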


This corollary is useful when building larger chirotopes from smaller ones using only the join operation (as in \cref{sec:DC}); however, when combining joins and meets (as in \cref{sec:numeric}), the full generality of \cref{prop:triangulation_join} is necessary.

\begin{remark}\label{rmk:WeakTforChains=Upper*Lower}
{Let $C$ be a chain in the sense of Rutschmann and Wettstein~\cite{rutschmann2023chains},
and $(\chi_C,u_C)$ be the associated rooted chirotope, as defined in \cref{rmk:join=convexSum}. Then one can check that the edges $\mathfrak p_i \mathfrak p_{i+1}$ which by definition of chains belong to all triangulations of the chain, also belong to all weak triangulations of $(\chi_C,u_C)$.
It follows that a weak triangulation of $(\chi_C,u_C)$ can be decomposed into an upper part (which is a maximal set of non-crossing edges of the point set $\{\mathfrak p_1,\dots,\mathfrak p_n,\mathfrak p_u\}$ which are all above the chain)
and a lower part (which is a maximal set of non-crossing edges of the point set $\{\mathfrak p_1,\dots,\mathfrak p_n,\mathfrak p_v\}$ which are all below the chain).
Erasing $\mathfrak p_u$ and all incident edges from the upper part yields what 
Rutschmann and Wettstein call a {\em partial upper triangulation} of $C$,
while erasing $\mathfrak p_v$ and all incident edges gives a {\em partial lower triangulation} of $C$.
Moreover the degree of $u$ (resp.~$v$) corresponds to the number of visible edges in the partial upper (resp.~lower) triangulation of $C$, plus one.
Combining all these observations, we get that, for a chain with $n$ edges, the bivariate polynomial of the associated rooted chirotope $(\chi_C,u_C)$ (which has $n+2$ elements) writes as
\begin{equation}\label{eq:ForChians_PFactorizes}
P_{(\chi,u)}(u,v) = u^{n+1} T_C(u^{-1}) \times v^{n+1} T_{\bar C}(v^{-1}),
\end{equation}
where $T_C$ is the upper triangulation polynomial of $C$ (see~\cite[Definition 21]{rutschmann2023chains}), and $\bar C$ is the flip of the chain $C$.
With this formula, one can see that \cref{prop:triangulation_join} generalizes
the formula computing the upper/lower triangulation polynomials of convex sums \cite[Lemmas 22 and 24]{rutschmann2023chains}.}

{Equation~\eqref{eq:ForChians_PFactorizes} shows that, for chains, our bivariate polynomials factorize as the product of a polynomial in $u$ and a polynomial in $v$. 
Hence it suffices to compute two univariate polynomials instead of a bivariate polynomial.
This is not the case in general,
which explains why computations are more efficient in the case of chains.}
\end{remark}

\subsection{A bijection to prove \cref{prop:triangulation_join}}

We follow the same proof strategy as in \cite[Section 7]{sidmapapiercombi}. We first describe a decomposition of the weak triangulations of a chirotope of the form $(\chi_1,u_1)\vee  (\chi_2,u_2)$ into several pieces, and then we show that these pieces are exactly what is needed to reconstruct the weak triangulations of $(\chi_1,u_1)\vee  (\chi_2,u_2)$. The recursive formula will then follow from this bijection.

 We now describe the bijection and its inverse.
The bijection is illustrated in \cref{fig:join_decomposition}, and the reader is invited to follow the construction on the figure. We use the notation of of \cref{sec:Join_Meet}: in particular the points $u_1$ and $u_2$ (resp.~$v_1$ and $v_2$, and $u_1^-$ and $u_2^+$) are merged into a single point $u$ (resp.~$v$ and $x_0$) in $\chi$.
\begin{figure}[t]
\includegraphics[width=\linewidth]{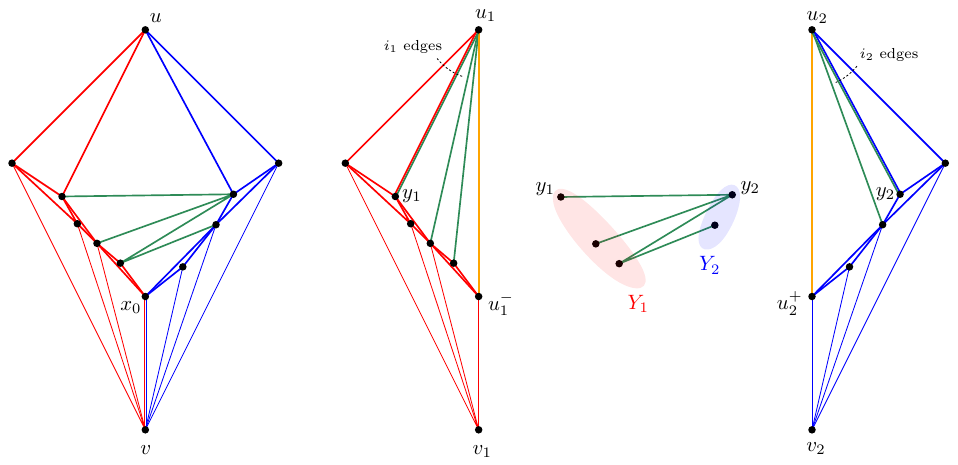}
\caption{The bijection used in the proof of \cref{prop:triangulation_join}. From left to right: $\Delta$, $\pi_1(\Delta)$, $C(\Delta)$ and $\pi_2(\Delta)$. \label{fig:join_decomposition}}
\end{figure} 

\paragraph*{Decomposition} 
Let $(\chi_1,u_1)$, $(\chi_2,u_2)$ be two rooted chirotopes of ground sets $X_1\cup \{u_1\}$, respectively $X_2\cup \{u_2\}$.
We consider the rooted chirotope $(\chi,u)=(\chi_1,u_1)\vee  (\chi_2,u_2)$, of ground set $X=X_1\cup X_2\cup \{u\}$,  and a weak triangulation $\Delta$ of $(\chi,u)$. We define $\pi_1(\Delta), \pi_2(\Delta), C(\Delta)$ as follows.
\begin{itemize}
	\item $C(\Delta)$ is the set of edges in $\Delta$ that have one endpoint in $X_1\setminus\{u_1^-\}$ and the other one in $X_2\setminus\{u_2^+\}$ (in green in the figure). Let us denote by $Y_1$ the endpoints of the edges of $C(\Delta)$ that are in $X_1$, and similarly for $Y_2$.
	\item $\pi_1(\Delta)$  contains three types of edges: 
	\begin{itemize}
	\item the set of edges from $\Delta$ with both endpoints in $X_1\cup \{u,v\}$ ,
	 where $u,v$ are relabeled $u_1,v_1$ in $\pi_1(\Delta)$ (in red in the figure);
	\item the edges of $C(\Delta)$ where every endpoint in $X_2\setminus\{u_2^+\}$ is replaced by $u_1$ (in green in the figure);
	\item the edge $u_1u_1^-$ (in orange in the figure).
		\end{itemize}
	Note that some edge can appear multiple times in the list above; in that case, we simply forget multiplicities.
	\item Similarly $\pi_2(\Delta)$  contains three types of edges: 
	\begin{itemize}
	\item the set of edges from $\Delta$ that have one endpoint in $X_2$ and the other one in $X_2\cup \{u,v\}$, where $u,v$ are relabeled $u_2,v_2$ in $\pi_2(\Delta)$ (in blue in the figure);
	\item the edges of $C(\Delta)$ where every endpoint in $X_1\setminus\{u_1^-\}$ is replaced by $u_2$ (in green in the figure);
	\item the edge $u_2u_2^+$ (in orange in the figure).
	\end{itemize}
\end{itemize}
As in \cite[Section 7]{sidmapapiercombi}, it can be proved that $\pi_1(\Delta), \pi_2(\Delta), C(\Delta)$ are sets of non-crossing edges (inherited from the non-crossing property of $\Delta$).
Furthermore, by a counting argument, it can be shown that  $\pi_1(\Delta)$ is  actually a weak triangulation of $(\chi_1,u_1)$, $\pi_2(\Delta)$ is a weak triangulation of $(\chi_2,u_2)$, and $C(\Delta)$ is a maximal set of non-crossing edges between $Y_1$ and $Y_2$.

\paragraph*{Converse construction}
To prove that this decomposition is a bijection, we describe the converse. Let $\Delta_1$ be a weak triangulation of $(\chi_1,u_1)$, $\Delta_2$ be a weak triangulation of $(\chi_2,u_2)$. Let us denote by $D_1$ (resp. $D_2$) the set of neighbors of $u_1$ in $\Delta_1$ (resp. of $u_2$ in $\Delta_2$). 

Note that $D_1$ can be equipped with a total order (the radial order with respect to $u_1$). Let $Y'_1\subseteq D_1$ be an interval for this order containing $u_1^-$, and set $Y_1=Y_1'\setminus \{u_1^-\}$. Furthermore, we denote by $y_1$ the other endpoint of $Y_1'$. Define $Y_2, Y_2', y_2$ symmetrically. Let $P$ be the set of such pairs $(Y_1,Y_2)$ such that either $Y_1=Y_2=\emptyset$, or none is empty.

 Starting from $\Delta_1$, $\Delta_2$, a pair $(Y_1,Y_2)\in P$  and a maximal set $C$ of non-crossing edges between $Y_1$ and $Y_2$, we build the following set $\Delta$ of edges in $(\chi, u)$:
\begin{enumerate}
	\item we add to $\Delta$ the edges of $\Delta_1$, except for the edges between $u_1$ and elements of $Y_1\setminus \{y_1\}$, and for the edge $u_1u_1^-$
	(replacing $u_1$, $u_1^-$, $v_1$ by $u$, $x_0$ and $v$ respectively);
	\item we add to $\Delta$ the edges of $\Delta_2$, except for the edges  between $u_2$ and elements of $Y_2\setminus \{y_2\}$, and the edge $u_2u_2^+$
	(replacing $u_2$, $u_2^+$, $v_2$ by $u$, $x_0$ and $v$ respectively);
	\item if $Y_1=Y_2=\emptyset$, we add the edge $ux_0$ (recall that $u_1^-$ and $u_2^+$ are identified as $x_0$ in $(\chi,u)$);\\
	otherwise we add the edges of $C$.
\end{enumerate}
We can check similarly to \cite[Section 7]{sidmapapiercombi} that $\Delta$ is indeed a weak triangulation of $(\chi,u)$, and that this construction is the converse of the previous decomposition.

\begin{remark}
\label{rem:multiple_edges}
The sets of edges added in steps (i) and (ii) of the above are disjoint except for the $vx_0$ which always appears in both.
Moreover, the edge $ux_0$ appears in $\Delta$ if and only if $Y_1=Y_2=\emptyset$.
\end{remark}

\paragraph*{Proof of \cref{prop:triangulation_join}}
 This bijection yields a way of counting the weak triangulations of a join.
Let ${(\chi_1,u_1)}$ and  ${(\chi_2,u_2)}$ be two rooted chirotopes. Recall that we want to prove that 
\[
 P_{{(\chi_1,u_1)} \vee {(\chi_2,u_2)}}(u,v) = \sum_{d_1 \geq 2} \sum_{d_2 \geq 2} [u^{d_1}] P_{{(\chi_1,u_1)}}(u,v) \cdot [u^{d_2}] P_{{(\chi_2,u_2)}}(u,v) \cdot N_{d_1,d_2}(u) v^{-1}, \text{ where}
\]
\[N_{d_1,d_2}(u) = u^{d_1+d_2-1} + \sum_{i_1=1}^{d_1-1} \sum_{i_2=1}^{d_2-1} \binom{d_1-i_1+d_2-i_2-2}{d_1-i_1-1} u^{i_1+i_2}.
\]

By definition, $[u^{d_1}] P_{{(\chi_1,u_1)}}(u,v)$ is a polynomial in $v$ counting, w.r.t. the number of neighbors of $v_1$, the weak triangulations of ${(\chi_1,u_1)}$ in which $u_1$ has $d_1$ neighbors (\emph{i.e.} such that $|D_1| =d_1$ with the notation above). Of course, we have a similar interpretation for $[u^{d_2}] P_{{(\chi_2,u_2)}}(u,v)$.

Once a pair $(Y_1,Y_2)$ of non-empty subsets of $D_1$ and $D_2$ has been chosen, the number of maximal sets of non-crossing edges between $Y_1$ and $Y_2$ is $\binom{|Y_1|+|Y_2|-2}{|Y_1|-1}$, as proved in~\cite[Lemma 7.2 and the sentence below it]{sidmapapiercombi}. In particular, it depends on $Y_1$ and $Y_2$ only through their sizes. 
The possible size of such a set $Y_1$ is at least $1$, and at most $d_1-1$ (as $u_1^- \notin Y_1$). Moreover, since $Y'_1=Y_1 \cup \{u_1^-\}$ must be an interval containing the extreme element $u_1^-$, there is exactly one possible set $Y_1$ of each size between $1$ and $d_1-1$.
We denote by $d_1-i_1$ the size of $Y_1$.  Then $i_1 \in \{1, \dots, d_1-1\}$ is the number of neighbors of $u_1$ that remain neighbors of $u$ when recombining the weak triangulations of ${(\chi_1,u_1)}$ and  ${(\chi_2,u_2)}$ into those of $(\chi,u)$. The same reasoning with $Y_2$ ensures that $u$ has degree $i_1+i_2$ in the recombined weak triangulation of $(\chi,u)$. This justifies the summation formula in the definition of $N_{d_1,d_2}(u)$.

For the pair $(Y_1,Y_2)$ of empty sets, \cref{rem:multiple_edges} ensures that $u$ has $d_1+d_2-1$ neighbors; indeed, the $-1$ accounts for the fact the edge $x_0u$ appears in $\Delta_1$ and $\Delta_2$ but should be counted only once in $\Delta$. This explains the first term of the definition of $N_{d_1,d_2}(u)$. 

As a consequence, for any weak triangulation $\Delta_1$ of $(\chi_1,u_1)$ is which $u_1$ has $d_1$ neighbors, and any weak triangulation $\Delta_2$ of $(\chi_2,u_2)$ is which $u_2$ has $d_2$ neighbors, $N_{d_1,d_2}(u)$ is a polynomial in the variable $u$ counting the number of weak triangulations $\Delta$ of $(\chi,u)$ that can be reconstructed from $\Delta_1$ and $\Delta_2$, where the degree of the variable $u$ indicates the number of neighbors of the element $u$ in $\Delta$. Note that this proves \cref{corol:Qjoin}. 

To complete the proof of \cref{prop:triangulation_join}, it remains to take into account the number of edges attached to vertex $v$ in $\Delta$. By construction, the edges attached to $v$ in $\Delta$ are exactly those attached to $v_1$ in $\Delta_1$ and those attached to $v_2$ in $\Delta_2$.
In the product $[u^{d_1}] P_{{(\chi_1,u_1)}}(u,v) \cdot [u^{d_2}] P_{{(\chi_2,u_2)}}(u,v)$ the degree of $v$ thus records the total number of such edges. However, the edge $vx_0$ appears in both $\Delta_1$ and $\Delta_2$ (see \cref{rem:multiple_edges}), explaining the factor $v^{-1}$ in the statement of \cref{prop:triangulation_join}. This concludes the proof of this proposition.

\section{Triangulations of the double circle}
\label{sec:DC}

We recall from the introduction that $\DC_k$ denotes the double circle with $k$ external (and thus $k$ internal) points.
The goal of this section is to establish a precise asymptotic estimate of its number of triangulations  $|\mathfrak{T}(\DC_k)|$.
\begin{theorem}\label{thm:triangulations_double_circle}
For large $k$, we have
\[ |\mathfrak{T}(\DC_k)|  \sim   
\frac{54} {7 \sqrt{\pi}} \, \frac{\sqrt{21}(5-\sqrt{21})}{(7-\sqrt{21})^2}\, 12^{k-2} \, k ^{-3/2}.\]
\end{theorem}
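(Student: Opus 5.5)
The plan follows the outline given in the introduction. First, relate $|\mathfrak{T}(\DC_k)|$ to triangulation polynomials of rooted variants of the double circle. Take a rooted chirotope $(\chi_k,u)$ obtained from $\DC_k$ by a local modification near one external point. Concretely, it is a ``chain of pockets'': $k-1$ reflex (internal) points, each lying just inside an edge of the outer convex polygon, plus the root $u$ on the hull. Such a chirotope should be an iterated join,
\[(\chi_{k},u) = (\chi_{k-1},u)\vee (\chi_1,u),\]
where $(\chi_1,u)$ is the small rooted chirotope consisting of one pocket: two hull points, an internal point near their edge, and the root. I would check this against the join definition, verifying that lines through two points of one part do not separate $u$ from the other part, as in the figure. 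The next step, the analogue of \cref{lem:nbTriDoubleCircle}, expresses $|\mathfrak{T}(\DC_k)|$ as a linear combination of $Q_{k-1}(1)$ and $Q_{k-1}'(1)$, with $Q_{k-1}=Q_{(\chi_{k-1},u)}$. The idea is to close up the last pocket at $u$. Triangulations of $\DC_k$ are obtained from those of $\chi_{k-1}$ by inserting the final internal point. That point connects to $u$, to its two hull neighbours, and to a contiguous run of the neighbours of $u$, so the count is weighted linearly in $\deg(u)$. The exact affine weights $a\,Q_{k-1}(1)+b\,Q'_{k-1}(1)$ come from a direct case analysis of how the new point sits in the fan at $u$.

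Second, apply \cref{corol:Qjoin} with a fixed small right factor $Q_1(u)$, which I compute by hand; it has only a couple of terms, in degrees $2$ and $3$. This gives a linear recurrence $Q_k(u)=\sum_d [u^d]Q_{k-1}(u)\,L_d(u)$, where $L_d(u)=\sum_{d_2}[u^{d_2}]Q_1\,N_{d,d_2}(u)$. Since $d_2$ is bounded, the binomials in $N_{d,d_2}$ are polynomial in $d$. Summing the geometric series in $i_1$ turns the recurrence into an operator of divided-difference type, for instance terms like $\frac{uQ_{k-1}(u)-uQ_{k-1}(1)}{u-1}$ together with a derivative or a second divided difference. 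This is \cref{lem:induction_Qk}. With $F(z,u)=\sum_{k\ge1}Q_k(u)z^k$ it becomes a linear functional equation
\[K(z,u)F(z,u)=A(z,u)+B(z,u)F(z,1)+C(z,u)F_u(z,1),\]
possibly with one extra unknown such as $F(z,0)$ or $[u^2]F$, as in \cref{lem:functional_equation_F}.

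Third, use the kernel method. Solve $K(z,u)=0$ for the root or roots $U(z)$ that are power series in $z$; this should be a quadratic or cubic in $u$. Substituting these roots cancels the left-hand side, giving as many linear equations as unknowns $F(z,1)$ and $F_u(z,1)$, and the system can be solved explicitly in terms of $U(z)$. The singularity is expected where the discriminant of the kernel vanishes, which should be at $z=1/12$. There $U(z)$ has a square-root singularity, and $F(z,1)$ and $F_u(z,1)$ inherit an expansion of the form $c_0+c_1(1-12z)+c_{3/2}(1-12z)^{3/2}+\cdots$. The $\sqrt{21}$ in the constant suggests a quadratic whose evaluation at $z=1/12$ gives $(5\pm\sqrt{21})/2$-type quantities. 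I also need to check that there are no other dominant singularities, namely zeros of the denominator of the solved system on $|z|\le 1/12$, and that the relevant singularity is unique on the circle (aperiodicity), so that the function is $\Delta$-analytic.

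Finally, the transfer theorem gives $Q_{k-1}(1)$ and $Q'_{k-1}(1)$ as constants times $12^{k}k^{-3/2}$. Combining them through the linear relation of the first step gives the stated constant $\frac{54}{7\sqrt\pi}\frac{\sqrt{21}(5-\sqrt{21})}{(7-\sqrt{21})^2}12^{k-2}$. I expect the main obstacle to be the analytic step: identifying the correct kernel roots, proving that the square-root singularity at $1/12$ is the unique dominant one, and confirming that the coefficient of $(1-12z)^{1/2}$ vanishes while that of $(1-12z)^{3/2}$ does not. The exponent $k^{-3/2}$ indicates this cancellation, and it must be verified carefully. I would first try to guess closed forms by computing $Q_k(1)$ for small $k$ from the recurrence and matching them with the predicted asymptotics numerically.
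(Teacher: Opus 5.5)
Your overall architecture matches the paper's: iterating $\vee$ with the one-pocket chirotope, applying \cref{corol:Qjoin} with $N_{d,3}$, a divided-difference recurrence, the kernel method, a square-root singularity at $z=1/12$, and transfer. But your first step, which is what fixes the constant, fails.

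There is no relation $|\mathfrak{T}(\DC_k)|=a\,Q_{k-1}(1)+b\,Q'_{k-1}(1)$. From $Q_1=u^3$ one gets $Q_2=u^5+u^4+2u^3+2u^2$ and $Q_3=u^7+2u^6+5u^5+9u^4+13u^3+13u^2$. So $(Q_m(1),Q'_m(1))=(1,3),(6,19),(43,145),(352,1230)$ for $m=1,\dots,4$, while $|\mathfrak{T}(\DC_k)|=4,30,250$ for $k=3,4,5$. No constants $a,b$ fit these three values, whether your $\chi_{k-1}$ has $k-1$ or $k-2$ pockets.

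Your geometric picture is also inconsistent. The iterated join has no pockets on the two hull edges at $u$, so it has two more extreme points than internal points. Inserting one internal point while $u$ stays extreme therefore never yields a double circle. The missing idea is that the root itself becomes the last internal point. Moving $u$ across the segment $u^+u^-$ (all other orientations unchanged) and adding the edge $u^+u^-$ is a bijection from the triangulations of $\chi_{k-1}$ avoiding $u^+u^-$, i.e.\ those with $\deg u\ge 3$, onto $\mathfrak{T}(\DC_k)$. Hence $|\mathfrak{T}(\DC_k)|=Q_{k-1}(1)-[u^2]Q_{k-1}(u)$. Equivalently this is $Q_{k-2}(1)+\tfrac12 Q''_{k-2}(1)$, a weight quadratic in $\deg u$, not affine. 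The term $[u^2]Q_{k-1}$ comes from extracting $[u^2]$ in the functional equation: $[u^2]F(z,u)=-zF(z,1)+z\,\partial_uF(z,1)$. Thus $|\mathfrak{T}(\DC_k)|=Q_{k-1}(1)+Q_{k-2}(1)-Q'_{k-2}(1)$, and the singular coefficients combine as $12+1-4=9$ to give $\tfrac{9c_2}{2\sqrt\pi}12^{k-2}k^{-3/2}$.

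Your analytic expectations would also mislead you. First, no root of the kernel is a power series in $z$. $K(z,u)=(u-1)^2(1-zu^2)-zu^3$ is quartic with $K(0,u)=(u-1)^2$, and the usable roots are the two conjugate branches $u_{1,2}=1\pm\sqrt z+\cdots$. You need both of them to get two equations for $F(z,1)$ and $\partial_uF(z,1)$. Their branch point at $0$ is why the paper works on slit domains and uses two different slits to reach $z=-1/12$.

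Second, and more seriously, the $\sqrt{1-12z}$ term does not cancel. From $u_1(z)=2-\sqrt{12/7}\,\sqrt{1-12z}+O(1-12z)$ one gets $F(z,1)=c_1-c_2\sqrt{1-12z}+O(1-12z)$ and $\partial_uF(z,1)=d_1-4c_2\sqrt{1-12z}+O(1-12z)$, with $c_2\neq 0$. This square-root term is exactly what produces $12^k k^{-3/2}$, since $[z^n](1-12z)^{1/2}\sim -12^n n^{-3/2}/(2\sqrt\pi)$. A leading $(1-12z)^{3/2}$ term would only give order $12^n n^{-5/2}$. So the cancellation you planned to verify does not occur, and assuming it would give the wrong exponent.

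A minor point: $Q_1(u)=u^3$ is a single monomial, because four points with one interior point have a unique triangulation.
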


\subsection{The double circle through a series of joins}
Let $(\chi_1,u_1)$ be the rooted chirotope formed by four points, one being inside the triangle formed by the other three,
with an external point taken as the root.
We recursively define $(\chi_k,u_k)$ by $(\chi_{k+1},u_{k+1})=(\chi_k,u_k) \vee (\chi_1,u_1)$, for $k \ge 1$.
We note that $\chi_k$ can also be seen as the double circle $\DC_{k+2}$, with two consecutive internal points removed, the root $u_k$ being the external point
with no internal neighbors; see \cref{fig:chik}.
Let us denote by $Q_k=Q_k(u)$ the generating polynomial $Q_{(\chi_k,u_k)}$ of triangulations of $\chi_k$, where the exponent of $u$ denotes the degree of the marked element $u_k$, as introduced in \cref{sec:triangulations_join}.
Although $(\chi_k,u_k)$ is not exactly a double circle,
it turns out that the number of triangulations of the double circle
can be recovered from $Q_k$.
\begin{figure}
  \[ \includegraphics[scale=.9]{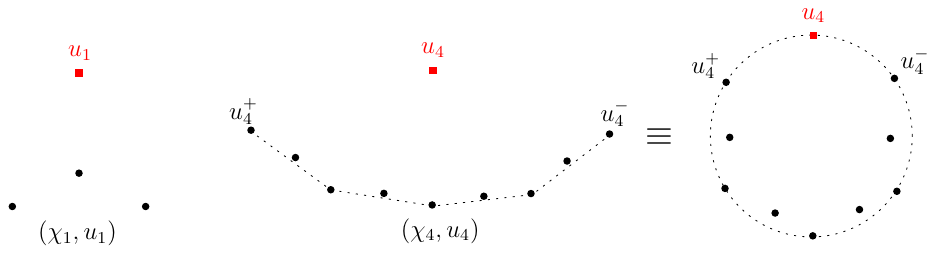}\]
  \caption{The chirotope $(\chi_k,u_k)$ is obtained from $(\chi_1,u_1)$ by iterating the join operation. It is the same chirotope as a double circle with two consecutive internal points removed.}
  \label{fig:chik}
\end{figure}
\begin{lemma}\label{lem:nbTriDoubleCircle}
  For $k \ge 2$, the number of triangulations of the double circle $\DC_k$ is given by
  \[ |\mathfrak{T}(\DC_k)| = |\mathfrak{T}(\chi_{k-1})|-|\mathfrak{T}(\chi_{k-1} \backslash \{u_{k-1}\})| = Q_{k-1}(1) - [u^2] Q_{k-1}(u). \]
\end{lemma}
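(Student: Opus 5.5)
The plan is to compare $\DC_k$ with $\chi_{k-1}$ directly, since both have $2k$ points. Let $a=u_{k-1}^+$ and $b=u_{k-1}^-$ be the hull neighbours of the root $u=u_{k-1}$. By the description of \cref{fig:chik}, $\chi_{k-1}$ is $\DC_{k+1}$ with two consecutive internal points removed, and $u$ has no internal neighbour. So deleting $u$ from $\chi_{k-1}$ makes $ab$ a convex hull edge, and $\chi_{k-1}\setminus\{u\}$ is $\DC_k$ with the internal point $w$ attached to the edge $ab$ removed. Equivalently, $\DC_k$ is obtained from $\chi_{k-1}$ by replacing the external point $u$, which lies beyond $ab$, by the internal point $w$, which lies just inside $ab$.

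The first step, and the one I expect to be the main obstacle, is a chirotope statement. I will show that, after identifying $w$ with $u$, the two chirotopes agree on every triple except those of the form $(a,b,\cdot)$ involving $u$ or $w$. That is, $\chi_{\DC_k}(x,y,w)=\chi_{k-1}(x,y,u)$ for all pairs $\{x,y\}\neq\{a,b\}$ of the remaining points.

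I would prove this geometrically in a concrete realization of the double circle. Take $w$ very close to the midpoint of $ab$. Then $w$ and a point $u$ placed slightly beyond $ab$ lie in adjacent cells of the line arrangement of the other points, separated only by the line $(ab)$. This holds because every other line through two of the points crosses the segment $ab$ away from its midpoint, or not at all. Moving $w$ across $(ab)$ to $u$ realizes $\chi_{k-1}$; this should be checked against the join definition, or directly against \cref{fig:chik}.

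It follows that the crossing relations among segments agree in both chirotopes, with exactly two exceptions:
\begin{itemize}
\item in $\chi_{k-1}$ the segment $ab$ crosses the segments $ux$ for $x\notin\{a,b\}$;
\item in $\DC_k$ the segment $ab$ is a hull edge and crosses nothing.
\end{itemize}

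Next I split $\mathfrak{T}(\chi_{k-1})$ according to whether the triangulation contains $ab$.

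If it contains $ab$, then $u$ can only be joined to $a$ and $b$, so $\deg(u)=2$. Removing $u$ gives a bijection with $\mathfrak{T}(\chi_{k-1}\setminus\{u\})$. Conversely, every triangulation has $\deg(u)\ge 2$, since the hull edges $ua,ub$ are always present. Degree exactly $2$ forces the triangle $uab$, hence forces $ab$. This shows both that there are $[u^2]Q_{k-1}(u)=|\mathfrak{T}(\chi_{k-1}\setminus\{u\})|$ such triangulations, and that the two expressions in the statement agree.

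For the triangulations not containing $ab$, I map $T\mapsto T\cup\{ab\}$ with $u$ renamed $w$. By the crossing comparison, $T\cup\{ab\}$ is non-crossing in $\DC_k$. Maximality follows from a count via Euler's formula: a triangulation of $n$ points with $h$ hull vertices has $3n-3-h$ edges. Here $n=2k$ for both, with $h=k+1$ for $\chi_{k-1}$ and $h=k$ for $\DC_k$, so $T\cup\{ab\}$ has exactly the right number of edges.

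Conversely, every triangulation of $\DC_k$ contains the hull edge $ab$. Removing it gives a non-crossing set in $\chi_{k-1}$ avoiding $ab$ with $6k-3-(k+1)$ edges, which is therefore a triangulation of $\chi_{k-1}$. These two maps are mutually inverse, so this part of $\mathfrak{T}(\chi_{k-1})$ is in bijection with $\mathfrak{T}(\DC_k)$.

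Subtracting the first part from $|\mathfrak{T}(\chi_{k-1})|=Q_{k-1}(1)$ gives the formula.
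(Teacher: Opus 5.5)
Your proposal is correct and follows essentially the same route as the paper. You split $\mathfrak{T}(\chi_{k-1})$ according to whether the edge $u^+u^-$ is present, identify the first class (equivalently $\deg u=2$) with $\mathfrak{T}(\chi_{k-1}\setminus\{u\})$, and biject the second class with $\mathfrak{T}(\DC_k)$ by pushing $u$ across $u^+u^-$ and adding that edge. You also spell out details the paper leaves implicit, namely which crossing pairs change and the Euler-formula edge count for maximality, while relying, as the paper does, on the geometric identification of $\chi_{k-1}$ with $u$ pushed inward as $\DC_k$.
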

\begin{proof}
The following argument is similar to those used by Hurtado and Noy
to count triangulations of almost-convex polygons (which include the double circle)~\cite{hurtado1997double-circle}.

Fix $k \ge 1$, and let us consider the set of triangulations of $\chi_k$, which are counted by 
$|\mathfrak{T}(\chi_{k})| = Q_k(1)$.
As usual, we call $u_k^+$ and $u_k^-$ the successor and predecessor of $u_k$ on the
convex hull of $\chi_k$ in counter-clockwise order.
Then $\mathfrak{T}(\chi_{k})$ can be partitioned into two subsets: 
triangulations containing the edge $u_k^+\, u_k^-$, or not.

The edge $u_k^+\, u_k^-$ is present if and only if $u_k$ has degree $2$ in the triangulation. 
Then removing $u_k$ and its incident edges 
(which are necessarily the edges $u_k\, u_k^-$ and  $u_k\, u_k^+$)
defines a bijection with triangulations of  $\chi_k \setminus \{u_k\}$.
Such triangulations are thus counted by 
$[u^2] Q_{k}(u)=| \mathfrak{T}(\chi_k \setminus \{u_k\}) |$.

Let us now consider a triangulation $T$ of $\chi_k$ which does not contain the edge $u_k^+\, u_k^-$. Then moving $u_k$ on the other side of the edge $u_k^+\, u_k^-$
(without changing orientations of other triples) and adding the edge $u_k^+\, u_k^-$
gives a triangulation of the resulting chirotope, which is  $\DC_{k+1}$.
This construction is a bijection, so that the triangulations $T$ of $\chi_k$ that do not contain the edge $u_k^+\, u_k^-$ are counted by $|\mathfrak{T}(\DC_{k+1})|$ .

Summing up, we have
\[|\mathfrak{T}(\chi_{k})| = Q_k(1) = [u^2] Q_{k}(u) + |\mathfrak{T}(\DC_{k+1})|
=| \mathfrak{T}(\chi_k \setminus \{u_k\}) | + |\mathfrak{T}(\DC_{k+1})|.\]
Changing $k$ to $k-1$, and reorganizing terms proves the lemma.
\end{proof}

\subsection{A functional equation}
Specializing the results of \cref{sec:triangulations_join},
we now obtain an induction formula for $Q_k$.
\begin{lemma}
  For $k \ge 1$ and $u \ne 1$, it holds that
  \[Q_{k+1}(u) = Q_k(u) \left( u^2+\frac{u^3}{(u-1)^2} \right)-Q_k(1) \, \frac{u^4-u^3+u^2}{(u-1)^2}-Q'_k(1) \, \frac{u^2}{u-1}.\]
  \label{lem:induction_Qk}
\end{lemma}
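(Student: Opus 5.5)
We apply \cref{corol:Qjoin} with $(\chi_1,u_1)$ as the second factor, since $(\chi_{k+1},u_{k+1})=(\chi_k,u_k)\vee(\chi_1,u_1)$. The first step is to compute $Q_1(u)=Q_{(\chi_1,u_1)}(u)$ explicitly. The chirotope $\chi_1$ is a triangle with one interior point, so it has a unique triangulation: the three hull edges together with the three edges to the interior point. In it the root $u_1$ (an external point) has degree $3$, so $Q_1(u)=u^3$. The corollary then gives
\[Q_{k+1}(u)=\sum_{d_1\ge 2}[u^{d_1}]Q_k(u)\, N_{d_1,3}(u),\]
so everything reduces to a closed form for $N_{d,3}(u)$ as a function of $d$.

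The second step computes $N_{d,3}(u)$. With $d_2=3$ we have $i_2\in\{1,2\}$. For $i_2=2$ the binomial coefficient is $\binom{d-i_1-1}{d-i_1-1}=1$. For $i_2=1$ it is $\binom{d-i_1}{d-i_1-1}=d-i_1$. Hence
\[N_{d,3}(u)=u^{d+2}+\sum_{i=1}^{d-1}\bigl(u^{i+2}+(d-i)u^{i+1}\bigr).\]
Both sums are geometric or arithmetico-geometric and evaluate in closed form for $u\neq1$:
\[\sum_{i=1}^{d-1}u^{i+2}=u^3\,\frac{u^{d-1}-1}{u-1},\qquad \sum_{i=1}^{d-1}(d-i)u^{i+1}=u^2\sum_{j=1}^{d-1}j\,u^{d-j-1}.\]
The second sum equals $u^2\,\dfrac{u^{d}-du+d-1}{(u-1)^2}$; this can be checked by multiplying through by $(u-1)^2$. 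The key structural point is that $N_{d,3}(u)$ has the form $A(u)\,u^d+B(u)+C(u)\,d$ with $A,B,C$ independent of $d$. Collecting terms gives
\[A(u)=u^2+\frac{u^2}{u-1}+\frac{u^2}{(u-1)^2},\]
which should equal $u^2+\dfrac{u^3}{(u-1)^2}$. For the remaining coefficients one expects
\[B(u)=-\frac{u^3}{u-1}-\frac{u^2}{(u-1)^2},\qquad C(u)=\frac{u^2(1-u)}{(u-1)^2}=-\frac{u^2}{u-1}.\]

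The final step substitutes this form into the sum, using
\[\sum_d [u^d]Q_k\, u^d=Q_k(u),\qquad \sum_d[u^d]Q_k=Q_k(1),\qquad \sum_d d\,[u^d]Q_k=Q_k'(1).\]
This yields
\[Q_{k+1}(u)=A(u)\,Q_k(u)+B(u)\,Q_k(1)+C(u)\,Q_k'(1),\]
and it remains to check that $B(u)=-\dfrac{u^4-u^3+u^2}{(u-1)^2}$. This holds, since
\[u^3(u-1)+u^2=u^4-u^3+u^2.\]

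The two points to double-check are the following. First, the corollary requires both factors to have at least three elements; this holds, as $\chi_1$ has four. Second, the sum over $d_1$ formally starts at $2$, and the closed form for $N_{d,3}$ must remain valid at $d=2$, where the ranges are small. The main, if modest, obstacle is the bookkeeping of the arithmetico-geometric sum and the verification that the three pieces combine exactly into the stated rational coefficients. I would verify this by checking the case $d=2$ directly: there $N_{2,3}(u)=u^4+u^3+2u^2$, and the closed form should reproduce it.
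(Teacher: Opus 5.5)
Your proposal is correct and follows the paper's proof essentially verbatim. You take $Q_1=u^3$, specialize \cref{corol:Qjoin} to $d_2=3$, write $N_{d,3}(u)$ in the form $A(u)u^d+B(u)+C(u)\,d$ (your $A$, $B$, $C$ all check out), and resum using $Q_k(u)$, $Q_k(1)$ and $Q_k'(1)$.

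There is one slip in your proposed $d=2$ sanity check. The only index there is $i=1$, with $d-i=1$, so $N_{2,3}(u)=u^4+u^3+u^2$ rather than $u^4+u^3+2u^2$. The value $u^4+u^3+u^2$ is exactly what your closed form produces.
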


\begin{proof}
Recall that $Q_1(u)=Q_{(\chi_1,u_1)}(u)=u^3$.
 From \cref{corol:Qjoin}, for $k \ge 1$, it holds that
\begin{equation}\label{eq:rec_Qk}
  Q_{k+1}(u)=\sum_{d \ge 2} \big( [u^d] Q_k(u) \big) \cdot N_{d,3}(u),
\end{equation}
where \[N_{d,3}(u)=u^{d+2}+\sum_{i=1}^{d-1} \big( (d-i)u^{i+1} + u^{i+2} \big).\]
Elementary manipulations (which can be performed automatically, e.g., by Maple\footnote{We provide a maple worksheet both in \openfilelink{NoyauDoubleCerle.mw}{\underline{mw}} and \openfilelink{NoyauDoubleCerle.pdf}{\underline{pdf}} format with all the computations of this section. The files are embedded in this pdf (visible using, e.g., firefox) or can be found in the source of the arXiv version.}) 
give the following summation-free expression for $N_{d,3}(u)$ whenever $u \ne 1$:
\[N_{d,3}(u)= u^d \left( u^2+\frac{u^3}{(u-1)^2} \right) - \frac{u^4-u^3+u^2}{(u-1)^2} -d \frac{u^2}{u-1}.\]
Substituting into \eqref{eq:rec_Qk} and using that 
\[\sum_{d \ge 2} \big([u^d] Q_k(u)\big) \cdot u^d =Q_k(u),\quad \sum_{d \ge 2} \big([u^d] Q_k(u)\big)=Q_k(1),
\quad \sum_{d \ge 2} \big([u^d] Q_k(u)\big) \cdot d =Q'_k(1)\]
yields the announced formula. 
\end{proof}

We now introduce the generating series $F(z,u)=\sum_{k \ge 1} Q_k(u) z^k$.
Note that the degree of $u$ in every triangulation of $\chi_k$ is at most $2k+1$.
Hence, for a number $u \in \C$, we have 
\begin{equation*} \label{eq:expNbTriangulations}
 |Q_k(u)| \le |\mathfrak T(\chi_k)| \max(|u|,1)^{2k+1} = \mathcal O\big( (900 \max(|u|^2,1))^k \big),
\end{equation*}
where we used in the last inequality that any chirotope with $n$ elements has at most $30^n$ triangulations \cite{countingT}
(recall that $\chi_k$ has $2k+2$ elements).
Hence the above sum is well defined and analytic on the domain
\begin{equation*} \label{eq:domainD}
D=\big\{(z,u) \in \mathbb C^2, |z \max(|u|^2,1)| < \tfrac{1}{900} \big\},
\end{equation*}
which contains the point $(0,1)$.
We denote by $\partial_u F$ the partial derivative of $F$ with respect to $u$ on $D$.
The induction formula for $Q_k$ yields the following functional equation for $F$.
\begin{lemma}
  For any $(z,u)$ in $D$, we have
  \begin{equation}\label{eq:kernel}
    F(z,u) \, K(z,u) = u^3 (u-1)^2 z - (u^4-u^3+u^2) \, z \, F(z,1) - u^2 (u-1)\, \partial_u F(z,1),
  \end{equation}
  where $K(z,u)=(u-1)^2(1-zu^2)-zu^3$.
  \label{lem:functional_equation_F}
\end{lemma}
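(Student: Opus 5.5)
The plan is to turn the recurrence of \cref{lem:induction_Qk} into an identity between generating series. We multiply it by $z^{k+1}$ and sum over $k \ge 1$, working throughout on the domain $D$, where all series converge absolutely. This convergence is what justifies the manipulations below: interchanging sums, evaluating at $u=1$, and differentiating termwise in $u$, so that $\sum_k Q_k'(1) z^k = \partial_u F(z,1)$.

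First, fix $(z,u)\in D$ with $u\neq 1$. The left-hand side sums to $\sum_{k\ge1} Q_{k+1}(u)z^{k+1} = F(z,u) - zQ_1(u) = F(z,u) - zu^3$, using $Q_1(u)=u^3$. On the right-hand side the three terms sum, respectively, to
\[
z F(z,u)\Big(u^2+\tfrac{u^3}{(u-1)^2}\Big), \qquad -z F(z,1)\,\tfrac{u^4-u^3+u^2}{(u-1)^2}, \qquad -z\, \partial_uF(z,1)\,\tfrac{u^2}{u-1}.
\]

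Second, multiply both sides by $(u-1)^2$ and collect all terms containing $F(z,u)$ on the left. The coefficient of $F(z,u)$ is $(u-1)^2 - zu^2(u-1)^2 - zu^3 = K(z,u)$. What remains on the right is $u^3(u-1)^2z - (u^4-u^3+u^2)\,z\,F(z,1)$, together with the derivative contribution $u^2(u-1)\,\partial_uF(z,1)$ weighted by the factor $z$ produced by the index shift. Finally, both sides of \eqref{eq:kernel} are analytic on $D$ (they are polynomial in $u$ times convergent series), so the identity extends by continuity to $u=1$.

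I expect the main obstacle to be the bookkeeping of the powers of $z$ in the derivative term, and I would check it directly against the statement before committing. The test is to compare the coefficients of $z^1$ on both sides, using $Q_1(1)=1$ and $Q_1'(1)=3$. The left-hand side contributes $u^3(u-1)^2$. On the right, the $F(z,1)$ term and the constant term each start at order $z^1$, so the derivative term must be normalized so that its $z^1$ coefficient is exactly what cancels against them. Concretely, $\partial_uF(z,1)$ has to be read with the same $z$-shift as $F(z,1)$ in \eqref{eq:kernel}, namely as $z\sum_k Q'_k(1)z^k$ coming from the summation. Once this normalization is fixed, the remaining steps above are routine rational-function algebra, which can be verified by expanding both sides (for instance in Maple, as for \cref{lem:induction_Qk}).
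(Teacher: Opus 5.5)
Your derivation follows the paper's proof. You sum \cref{lem:induction_Qk} against $z^{k+1}$ to reach \eqref{eq:functionalF}, multiply by $(u-1)^2$, and collect the $F(z,u)$ terms into $K(z,u)$. The paper checks $u=1$ by direct substitution (both sides equal $-zF(z,1)$), while you argue by continuity; either is fine.

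What you need to fix is how you handle the derivative term. Clearing denominators in \eqref{eq:functionalF} gives $-z\,u^2(u-1)\,\partial_uF(z,1)$, and that is the true identity. The lemma as printed lacks this factor $z$ and is false as written. Do not reinterpret $\partial_uF(z,1)$ as $z\sum_k Q_k'(1)z^k$: it is defined as the partial derivative, $\sum_k Q_k'(1)z^k$. Say instead that the statement has a misprint, and prove the corrected version.

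Your $z^1$ test is the right diagnostic, but your account of it is wrong. $zF(z,1)=z^2+6z^3+\cdots$ starts at order $z^2$, not $z^1$. So at order $z^1$ the left side $u^3(u-1)^2$ is matched by the constant term alone, and the derivative term must contribute nothing there. Without the extra $z$ it would contribute $-Q_1'(1)\,u^2(u-1)=-3u^2(u-1)\neq 0$, which refutes the printed formula. With the extra $z$ the check passes, and so does the $z^2$ check, with both sides equal to $-u^4-2u^3+2u^2$.

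The corrected form is also what the paper relies on afterwards. It gives $[u^2]F(z,u)=-zF(z,1)+z\,\partial_uF(z,1)$. It also gives the $x$-free formulas of \cref{lem:F_in_u1u2}, obtained after dividing the kernel-root system by $x$.
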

\begin{proof}
  For $u=1$, both sides reduce to $-zF(z,1)$ so that equality holds.
We now assume $u \ne 1$.
  We start from \cref{lem:induction_Qk}, multiply by $z^{k+1}$ and sum over $k \ge 1$.
  On the left-hand side, we have
  \[\sum_{k \ge 1} z^{k+1}  Q_{k+1}(u)= \left( \sum_{j \ge 1}  z^j Q_j(u) \right) -z Q_1(u) = F(z,u)-zu^3.\]
  On the right-hand side, we use the identities
  \[\sum_{k \ge 1} z^{k+1}  Q_{k}(u)=z F(z,u),\qquad
  \sum_{k \ge 1} z^{k+1}  Q_{k}(1)=z F(z,1),\qquad
   \sum_{k \ge 1} z^{k+1}  Q'_{k}(1)=z \partial_u F(z,1).\]
  Now, the identity given by \cref{lem:induction_Qk} yields
  \begin{equation}\label{eq:functionalF}
F(z,u)-zu^3 = z F(z,u)\, \left( u^2+\frac{u^3}{(u-1)^2} \right) 
  - z F(z,1)\, \frac{u^4-u^3+u^2}{(u-1)^2} - z\partial_u F(z,1) \frac{u^2}{u-1}.   
  \end{equation}
  The lemma follows by elementary manipulations.
\end{proof}

\subsection{Solving the functional equation}
Solving \cref{eq:kernel} can be done by the so-called kernel method,
the function $K$ being called the kernel.
The first step is to look for functions $u(z)$ such that $(z,u(z))$ is in $D$ for small $|z|$
and $K(z,u(z))=0$. We start by taking $z=x$ being a positive real number.
\begin{lemma}\label{lem:u1u2}
  For any real number $x$ in $(0,1/12)$, there exists unique real numbers $u_1=u_1(x)$ in $(1,2)$ and $u_2=u_2(x)$ in $(0,1)$
  such that $K(x,u_1(x))=K(x,u_2(x))=0$.
\end{lemma}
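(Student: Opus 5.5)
For fixed $x\in(0,1/12)$ I will treat $g(u)\eqdef K(x,u)=(u-1)^2(1-xu^2)-xu^3$ as a real polynomial in $u$. Existence in each interval will come from sign changes and the intermediate value theorem; uniqueness from a monotonicity or convexity argument on that interval.

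\textbf{Endpoint values.} We have $g(0)=1>0$ and $g(1)=-x<0$, so $g$ has a root in $(0,1)$. Also $g(2)=(1-4x)-8x=1-12x>0$ because $x<1/12$, so $g$ has a root in $(1,2)$. This is where the threshold $1/12$ enters, and it matches the expected radius of convergence.

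\textbf{Uniqueness.} The cleanest route is to count real roots of the quartic $g(u)=-xu^4+(2x-x)u^3+\dots$; precisely,
\[
g(u)=-xu^4+xu^3+(1-x)u^2-2u+1 .
\]
Its leading coefficient is $-x<0$, so $g(u)\to-\infty$ as $u\to\pm\infty$. The sign pattern then gives four sign changes:
\begin{itemize}
\item $g(-\infty)<0$ and $g(0)>0$ give a root in $(-\infty,0)$;
\item $g(0)>0$ and $g(1)<0$ give a root in $(0,1)$;
\item $g(1)<0$ and $g(2)>0$ give a root in $(1,2)$;
\item $g(2)>0$ and $g(+\infty)<0$ give a root in $(2,\infty)$.
\end{itemize}
A quartic has at most four roots, so each of these intervals contains exactly one root. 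In particular the roots in $(0,1)$ and in $(1,2)$ are unique, which gives $u_2(x)$ and $u_1(x)$.

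\textbf{Main obstacle.} The only potential subtlety is that the intermediate value theorem gives existence of a root in each of the four intervals, whereas uniqueness needs the count of four roots to be exhaustive. Since the degree is exactly $4$ (because $x\neq0$), no further root can exist, so the argument is complete.

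I would add a remark that $u_1,u_2$ depend continuously (even analytically) on $x$ by the implicit function theorem. These roots are simple, because a quartic with four distinct real roots has none repeated, so $\partial_u K\neq0$ there. This will be needed in the subsequent kernel-method steps.
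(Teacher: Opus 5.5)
Your proposal is correct and is essentially the paper's proof: it uses the same sign evaluations at $-\infty$, $0$, $1$, $2$, $+\infty$ (with $K(x,2)=1-12x>0$) and the fact that a quartic has at most four roots, which forces exactly one root in each of the four intervals. Your closing remark that these roots are simple is a valid bonus: it is the kind of fact the paper's later kernel-method steps rely on.
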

\begin{proof}
  Fix $x$ in $(0,1/12)$. Then $K(x,u)$ takes the following (limiting) values
  \begin{align*}
    \lim_{u \to-\infty} K(x,u)&=-\infty,\quad K(x,0)=1,\quad K(x,1)=-x<0,\\
    K(x,2)&=1-12x>0,\quad\lim_{u \to+\infty} K(x,u)=-\infty.
  \end{align*}
 Since $u\mapsto K(x,u)$ is a polynomial of degree 4,  
  it has exactly one root in each of these four intervals:
  $(-\infty,0)$, $(0,1)$, $(1,2)$ and $(2,+\infty)$. This proves the lemma.
\end{proof}

For small positive real numbers $x$, the quantities $F(x,1)$ and $\partial_u F(x,1)$ can be expressed in terms
of these implicit functions $u_1$ and $u_2$.
\begin{lemma}\label{lem:F_in_u1u2}
  For $x >0$ small enough, we have
  \begin{align}
    F(x,1)&=\frac{(u_1-1)(1-u_2)(u_1+u_2-1)}{u_1+u_2-u_1 u_2}; \label{eq:F_in_u1u2} \\ 
    \partial_u F(x,1)&=\frac{u_1u_2(u_1u_2-u_1-u_2+2) + {u_1}^2  + {u_2}^2 - 2u_1 - 2u_2 + 1}{u_1+u_2-u_1 u_2}. 
  \end{align}
  \label{lem:F_u1u2}
\end{lemma}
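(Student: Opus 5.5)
The plan is to apply the kernel method to \eqref{eq:kernel}, substituting $u=u_1(x)$ and $u=u_2(x)$. The first step is to check that $(x,u_i(x))$ lies in the domain $D$ for $x>0$ small.

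Since $K(0,u)=(u-1)^2$, the roots $u_1(x),u_2(x)$ tend to $1$ as $x\to 0^+$. Indeed, $K(x,u)=0$ gives $(u-1)^2 = xu^3/(1-xu^2)$, so $u_{1,2}=1\pm\sqrt{x}+O(x)$. In particular $|u_i|\le 2$ for small $x$, so $x\max(|u_i|^2,1)<1/900$ once $x<1/3600$. Then \eqref{eq:kernel} is valid at $(x,u_i)$. Its left-hand side vanishes because $F$ is finite there, and we obtain two linear equations in the two unknowns $A=F(x,1)$ and $B=\partial_u F(x,1)$:
\[
(u_i^4-u_i^3+u_i^2)\,x\,A + u_i^2(u_i-1)\,B = u_i^3(u_i-1)^2 x,\qquad i=1,2 .
\]
Dividing by $u_i^2\neq 0$ reduces these to $(u_i^2-u_i+1)xA+(u_i-1)B=u_i(u_i-1)^2x$.

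Next I eliminate $x$ using the kernel relation $x=(u_i-1)^2/\big(u_i^2(u_i-1)^2+u_i^3\big)$, which is just $K(x,u_i)=0$ solved for $x$. This is important because the claimed formulas contain no $x$: the system should be rewritten purely in terms of $u_1,u_2$, and $x$ is eliminated by expressing it through each $u_i$ separately. The two equations then become
\[
(u_i^2-u_i+1)A + (u_i-1)B/x_i = u_i(u_i-1)^2,
\qquad
1/x_i=\frac{u_i^2(u_i-1)^2+u_i^3}{(u_i-1)^2}.
\]
Since both $x_i$ equal the common value $x$, I may keep whichever form is convenient. The system is solved by Cramer's rule. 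Its determinant is proportional to $(u_1-1)(u_2^2-u_2+1)-(u_2-1)(u_1^2-u_1+1)=(u_1-u_2)(1-u_1u_2)$ (up to sign, after expansion), and $u_1\ne u_2$ guarantees solvability. One then simplifies, using the relation $x_1=x_2$ to cancel the symmetric factor, until the denominators $u_1+u_2-u_1u_2$ in the statement appear.

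The main obstacle is exactly this algebraic simplification. The raw Cramer solution is a rational function in $u_1,u_2,x$ which must be reduced modulo the two kernel equations to the stated compact forms. I would handle it as the paper does elsewhere, by a computer algebra verification (Maple). Concretely, one substitutes the claimed $A,B$ into both linear equations, clears denominators, and checks that the resulting polynomials lie in the ideal generated by $K(x,u_1)$ and $K(x,u_2)$, after dividing by the factor $u_1-u_2$. A more conceptual alternative is to write $K(x,u_1)-K(x,u_2)$ divided by $u_1-u_2$ as a symmetric relation, which expresses $x$ as a rational function of $u_1,u_2$; substituting it into the Cramer solution should produce the factor $u_1+u_2-u_1u_2$ directly. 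Finally, I would double-check the formula against the expansion $u_{1,2}=1\pm\sqrt x+\dots$, for which $F(x,1)\approx x\cdot Q_1(1)=x$, as a sanity test.
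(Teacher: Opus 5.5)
Your strategy is the paper's: substitute the two small roots into \eqref{eq:kernel} and solve the resulting $2\times 2$ linear system, by computer algebra if needed. Your check that $(x,u_i(x))\in D$ is also fine. But the execution you describe would not establish the statement, for two concrete reasons.

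\textbf{The determinant is mis-factored.} In fact
$(u_1-1)(u_2^2-u_2+1)-(u_2-1)(u_1^2-u_1+1)=(u_1-u_2)(u_1+u_2-u_1u_2)$,
not $(u_1-u_2)(1-u_1u_2)$; try $u_1=2$, $u_2=0$, which gives $4$ versus $2$. So the denominator of the statement comes straight out of Cramer's rule, and there is no extra symmetric factor to cancel with the kernel relations. Solvability also needs $u_1+u_2-u_1u_2\neq 0$, not just $u_1\neq u_2$. This holds, since $u_1+u_2-u_1u_2=1+(u_1-1)(1-u_2)>1$.

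\textbf{The functional equation you used has a missing factor.} As printed, \eqref{eq:kernel} has lost a factor $z$ in its last term. Multiplying \eqref{eq:functionalF} by $(u-1)^2$ gives $-u^2(u-1)\,z\,\partial_uF(z,1)$. The paper's later identity $[u^2]F(z,u)=-zF(z,1)+z\,\partial_uF(z,1)$ confirms this.

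Because you worked from the printed version, your unknown $B$ is really $x\,\partial_uF(x,1)$. Your ideal-membership check would then succeed for $A$. That step needs the true relation $x\,u_1^2u_2^2=(u_1-1)(1-u_2)$, which follows from $K(x,u)=u^4(s^2+xs-x)$ with $s=(u-1)/u^2$, the two small roots giving the two roots in $s$.

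The check would fail for the second formula, however. Your system forces $B=3x^2+O(x^3)$, while the claimed expression equals $3x+O(x^2)$. The latter is the correct value of $\partial_uF(x,1)$, since $Q_1=u^3$. A sanity check on $\partial_uF$, not only on $F$, would have exposed this.

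\textbf{The fix.} Restore the factor $z$ and divide the two equations by $x\,u_i^2$. This gives the $x$-free system $(u_i^2-u_i+1)A+(u_i-1)B=u_i(u_i-1)^2$ for $i=1,2$. Cramer's rule then yields both announced formulas exactly, with no elimination of $x$ and no reduction modulo the kernel.
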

\begin{proof}
  For $x$ sufficiently small, since $u_1$ and $u_2$ are bounded,
  the pairs $(x,u_1(x))$ and $(x,u_2(x))$ lie in $D$.
  Therefore we can substitute them for $(z,u)$ in \eqref{eq:kernel}, and we get
  \[\begin{cases}
     0=u_1^3 (u_1-1)^2 x - (u_1^4-u_1^3+{u_1}^2) \, x \, F(x,1) - {u_1}^2 (u_1-1)\, \partial_u F(x,1);\\
     0=u_2^3 (u_2-1)^2 x - (u_2^4-u_2^3+{u_2}^2) \, x \, F(x,1) - {u_2}^2 (u_2-1)\, \partial_u F(x,1).
  \end{cases}\]
  Solving this linear system with a computer algebra software yields the announced expressions.
\end{proof}

\begin{remark}
  For $x$ in $(0,1/12)$ the kernel $K(x,u)$ has two other real roots $u_0(x)>2$ and $u_3(x)<0$.
  However, since $u_0$ and $u_3$ are unbounded, we have no guarantee that $(x,u_0(x))$ or $(x,u_3(x))$ are in $D$ for some $x$,
  so we cannot substitute those in the functional equations.
  This is standard when using the kernel method, and $u_1$ and $u_2$ are often called the \enquote{small roots}. 
  
  The four roots of the kernel are represented in~\cref{fig:rootsOfTheKernel}. 
\end{remark}

\begin{figure}[ht]
\begin{center}
\includegraphics[scale=0.5]{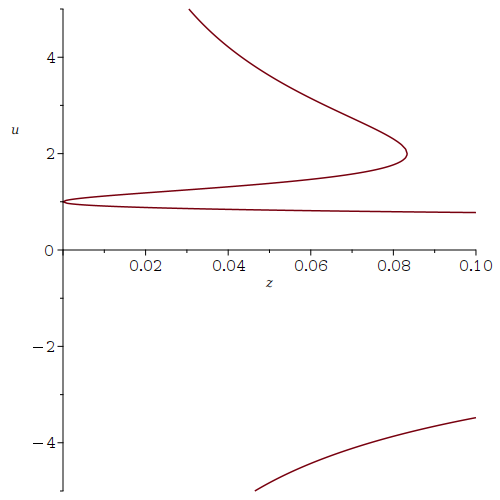}
\end{center}
\caption{A plot of the real solutions $u_0(x)$, $u_1(x)$, $u_2(x)$ and $u_3(x)$ (from top to bottom) of the kernel $K(x,u)=(u-1)^2(1-xu^2)-xu^3$, for $x$ real in $[0,0.1]$. The two roots $u_1$ and $u_2$ meet at the point $(x=0,u=1)$, while $u_0$ and $u_1$ meet at $(x=1/12,u=2)$. At these coalescence points, the two corresponding curves have squareroot singularities. \label{fig:rootsOfTheKernel}}
\end{figure}

\subsection{Analyticity and singular expansion of $F(z,1)$}

Our goal is to show that $F(z,1)$ is analytic on an appropriate domain
in order to apply analytic combinatorics techniques.
A comprehensive reference on analytic combinatorics with one variable
is the book of Flajolet and Sedgewick~\cite{Flajolet}.
 In our case, a difficulty arises because $u_1(z)$ and $u_2(z)$ have no analytic continuation at $z=0$: indeed, from the definition of $u_1(x)$ and $u_2(x)$ as the small roots of the kernel, for $i=1,2$ it can be seen that  $u_i(x) = 1 \pm \sqrt{x} (1+(o(1))$ for small positive $x$. 
Nevetheless, we have the following. 

\begin{lemma}\label{lem:u_i_analytic}
Fix $\theta \in (0,2\pi)$. Let us consider the domain 
\begin{align*}
 D(\theta) &= \{z : |z| < 4 \} \setminus \Big( \{r \exp(\mathrm{i} \theta) : r \geq 0\} \cup [1/12,+\infty) \Big).
\end{align*} 
For each $\theta$, it holds that $u_1$ and $u_2$ have analytic continuations on $D(\theta)$. Moreover, $u_2$ is analytic at $z=1/12$, and takes value $u_2(1/12)=(-3+\sqrt{21})/2$. On the other hand, for $z$ approaching $1/12$ in $D(\theta)$, we have 
\[
 u_1(z) = 2 - \sqrt{12/7} \cdot \sqrt{1-12z} + O(1-12z).
\]
\end{lemma}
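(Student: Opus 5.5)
The plan is to treat $u_1,u_2$ as branches of the algebraic function defined by $K(z,u)=(u-1)^2(1-zu^2)-zu^3=0$, and to continue them analytically along paths. In $u$, $K$ is a quartic with leading coefficient $-z$. So away from $z=0$ no root escapes to infinity, and a root can fail to continue analytically only at a branch point $z$, where $K(z,\cdot)$ has a multiple root.

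\textbf{Locating the branch points.} I would compute the discriminant $\Delta(z)$ of $K(z,u)$ with respect to $u$, for instance as the resultant of $K$ and $\partial_u K$, by computer algebra as elsewhere in this section. We already know two of its zeros:
\begin{itemize}
\item $z=0$, where $K(0,u)=(u-1)^2$ has the double root $1$ and the degree in $u$ also drops;
\item $z=1/12$, where the factorization
\[-12K(1/12,u)=(u-2)^2(u^2+3u-3)\]
shows the double root $u=2$.
\end{itemize}
The main obstacle, and the step needing actual checking, is to show that every other zero of $\Delta$ has modulus at least $4$, or at least is not a branch point of $u_1$ or $u_2$. This means isolating the remaining roots of an explicit polynomial, e.g.\ numerically with certified bounds, and possibly tracking which branches collide there.

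\textbf{Continuation on $D(\theta)$.} For real $x\in(0,1/12)$, $u_1(x)$ and $u_2(x)$ are simple roots by Lemma~\ref{lem:u1u2}, since the four real roots are distinct. The implicit function theorem therefore makes them analytic near this interval. The domain $D(\theta)$ is the disk of radius $4$ with two slits removed: one slit from $0$ to the boundary circle, and one from $1/12$ to the boundary circle. These slits are disjoint, so the removed set together with the exterior of the disk is connected, and $D(\theta)$ is simply connected. It contains $(0,1/12)$ and, by the previous step, no branch point. The monodromy theorem then gives single-valued analytic continuations of $u_1$ and $u_2$ to all of $D(\theta)$.

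\textbf{Behaviour near $z=1/12$.} For $u_2$: the root $(-3+\sqrt{21})/2\approx 0.79$ of $u^2+3u-3$ is a simple root of $K(1/12,\cdot)$. Hence $\partial_uK\neq0$ there, and the implicit function theorem shows $u_2$ is analytic in a full neighbourhood of $1/12$ with the stated value. It coincides with our branch because it lies in $(0,1)$ at $z=1/12$ and depends continuously on $z$.

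For $u_1$, I would expand $K$ around $(1/12,2)$. From the factorization,
\[\partial_u^2K(1/12,2)=-\tfrac{7}{6},\qquad \partial_zK(1/12,2)=-(u-1)^2u^2-u^3\big|_{u=2}=-12 .\]
Hence locally
\[K\approx-\tfrac{7}{12}(u-2)^2-12(z-1/12),\qquad\text{so}\qquad (u-2)^2=\tfrac{12}{7}(1-12z)+O\big((1-12z)^{3/2}\big).\]
The Weierstrass preparation theorem, or the standard square-root singularity lemma of Flajolet--Sedgewick, makes this rigorous and yields a Puiseux expansion $u=2\pm\sqrt{12/7}\sqrt{1-12z}+O(1-12z)$. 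We pick the minus sign because $u_1(x)<2$ for real $x<1/12$. This branch choice extends to all approaches to $1/12$ within $D(\theta)$, since $[1/12,\infty)$ is excluded and the square root is taken with its principal determination.
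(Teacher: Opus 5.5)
Your proposal follows the paper's proof essentially step for step: locate the singular points of the kernel (the paper solves $K=\partial_u K=0$ by computer algebra), continue $u_1,u_2$ to the simply connected domain $D(\theta)$ by the implicit function theorem and the monodromy theorem, deduce analyticity of $u_2$ at $1/12$ from the simple root $(-3+\sqrt{21})/2$, and obtain the expansion of $u_1$ from the square-root singularity lemma, with the sign fixed by $u_1(x)<2$.

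The step you flag as the main obstacle closes by a direct computation. The discriminant of $K$ in $u$ is $-z^2(12z-1)(z+4)^2$, and $K(-4,u)=(2u^2-u+1)^2$. So the only remaining branch point is $z=-4$, which lies on the circle $|z|=4$ and hence outside $D(\theta)$. This agrees with the paper's solutions $(-4,1/4\pm \mathrm{i}\sqrt{7}/4)$.
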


\begin{proof}
Recall the definition of the kernel $K(z,u)=(u-1)^2(1-zu^2)-zu^3$. 
We introduce the characteristic system of the kernel equation: 
\[\begin{cases}
K(z,u)=0 \\
\partial_u K(z,u) = 0.
\end{cases}\]
The solutions $(z,u)$ of this system are the points where the implicit functions defined by the equation $K(z,u(z))=0$ have singularities,
because the analytic implicit function theorem does not apply.
Solving this system (e.g.~with Maple) give four solutions $(z,u) \in \C^2$, which are $(0,1)$, $(1/12,2)$ and $(-4,1/4\pm \mathrm{i}\sqrt{7}/4)$. 
Note that the domain $D(\theta)$ is chosen to be simply connected and not containing
the first (complex) coordinate of any of these four solutions.

If $(z_0,u_0)$ satisfies $K(z_0,u_0)=0$ but is not one of these pairs, then the analytic implicit function theorem \cite[Appendix B.5]{Flajolet} ensures that there exists an analytic root $u(z)$ of the kernel defined on a neighborhood of $z_0$ such that $u(z_0) =u_0$. 
Consequently, the functions $u_1$ and $u_2$ of \cref{lem:u1u2} can be analytically continued  along any curve included in $\mathbb{C} \setminus \{-4,0,1/12\}$. 
By the monodromy theorem, $u_1$ and $u_2$ have analytic continuations on the simply connected domain $D(\theta)$ for any $\theta \in (0,2\pi)$.

Moreover, when $x$ tends to $1/12$ from below, $u_2(x)$ converges to the unique root of $K(1/12,u)$ in $(0,1)$, which is $(-3+\sqrt{21})/2$. As $(1/12,(-3+\sqrt{21})/2)$ is not one of the four solutions of the characteristic system, $u_2$ can also be analytically continued on a neighborhood of $z=1/12$.  
On the other hand, when $x\to 1/12$, $u_1(x)$ converges to $2$; as $(1/12,2)$ is a solution of the characteristic system, the analytic implicit function theorem does not apply here. 
However, the singular implicit function theorem~\cite[Lemma VII.3 p.469]{Flajolet} ensures that, on a neighborhood of $z=1/12$ {\em without the half-line} $(1/12,+
\infty)$, one has 
\[
 u_1(z) = 2 \pm \gamma \cdot \sqrt{1-12z} + O(1-12z), 
\]
where
\[\gamma = \sqrt{\frac{2/12 \cdot \partial_z K(1/12,2)}{\partial_{uu} K(1/12,2)}} = \sqrt{\frac{2/12 \cdot (-12)}{(-7/6)}} = \sqrt{\frac{12}{7}}.\] 
Finally, to determine the sign, it is enough to notice that, when $z$ is real and $z < 1/12$, we have $u_1(z) <2$, finishing the proof of the lemma. 
\end{proof}

\begin{proposition}\label{prop:F}
There exists $R>1/12$ such that the functions $F(z,1)$ and $\partial_u F(z,1)$ admit analytic continuations on 
\[D_F= \{z : |z| < R \} \setminus [1/12,+\infty). \]
Moreover, their behaviors when $z$ approaches $1/12$ in $D_F$ are given by 
\begin{align}
  F(z,1) &= c_1 - c_2 \sqrt{1-12z} + O(1-12z), \label{eq:dvptFz1} \\
  \partial_u F(z,1) &= d_1 - d_2 \sqrt{1-12z} + O(1-12z), \label{eq:dvpt_deriveeFz1}
\end{align}
for 
\[c_1=\frac{-13+3\sqrt{21}}{7-\sqrt{21}},  \quad c_2 = \frac{12}{7} \frac{\sqrt{21}(5-\sqrt{21})}{(7-\sqrt{21})^2}, \quad d_1 =\frac{53-11\sqrt{21}}{7-\sqrt{21}} , \quad \text{and } d_2=4\,c_2.\]
\end{proposition}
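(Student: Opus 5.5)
The approach is to use the explicit formulas of \cref{lem:F_in_u1u2} as the analytic continuation. Define on $D(\theta)$ the functions
\[
G(z)=\frac{(u_1-1)(1-u_2)(u_1+u_2-1)}{u_1+u_2-u_1u_2},\qquad
H(z)=\frac{u_1u_2(u_1u_2-u_1-u_2+2)+u_1^2+u_2^2-2u_1-2u_2+1}{u_1+u_2-u_1u_2},
\]
with $u_1,u_2$ the continuations given by \cref{lem:u_i_analytic}. By \cref{lem:F_in_u1u2}, $G$ and $H$ coincide with $F(x,1)$ and $\partial_uF(x,1)$ for small $x>0$. Since $F(z,1)$ and $\partial_uF(z,1)$ are analytic on the disc $|z|<1/900$, and $D(\theta)\cap\{|z|<1/900\}$ is connected and contains such $x$, the identity principle shows that $G$ and $H$ extend $F(z,1)$ and $\partial_uF(z,1)$ on $D(\theta)$, wherever the denominators do not vanish.

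The branch cut at $z=0$ is harmless. Take two values $\theta,\theta'$, say $\theta=\pi/2$ and $\theta'=3\pi/2$. Both continuations agree with the single-valued analytic function $F(\cdot,1)$ near $0$ in each connected component of $D(\theta)\cap D(\theta')$ adjacent to $0$, and hence on all of $D(\theta)\cup D(\theta')$. This union is the slit disc $\{|z|<4\}\setminus[1/12,\infty)$. An alternative, more intrinsic argument: $G$ and $H$ are symmetric in $(u_1,u_2)$, and going around $0$ swaps $u_1$ and $u_2$, since they coalesce there as $1\pm\sqrt z$. So $G$ and $H$ are single valued near $0$ and bounded, hence analytic there. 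One must also check that the denominator $u_1+u_2-u_1u_2=1-(u_1-1)(u_2-1)$ does not vanish on $\{|z|<R\}\setminus[1/12,\infty)$ for some $R>1/12$. I would do this by eliminating $u_1,u_2$ through resultants with $K$, obtaining a polynomial in $z$ whose roots are the only possible zeros, and then checking that these roots lie outside the disc of radius $R$ (or on the cut). I expect this to be the main obstacle. It is a finite but delicate computation, and one must ensure that the chosen branches really avoid the zeros and not merely that some root does. At $z=1/12$ itself the denominator evaluates to $2+u_2-2u_2=2-u_2\neq 0$, because $u_2(1/12)=(-3+\sqrt{21})/2\in(0,1)$. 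Near $z=1/12$ we can take $R$ slightly larger than $1/12$, so that the only singular point inside the disc is $1/12$; the points $-4$ and $0$ are handled as above.

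For the expansions at $1/12$, substitute $u_2=u_2(1/12)+O(1-12z)$, which holds by analyticity of $u_2$ at $1/12$, and $u_1=2-\sqrt{12/7}\sqrt{1-12z}+O(1-12z)$ from \cref{lem:u_i_analytic}. Then expand $G$ and $H$ to first order in $u_1-2$:
\[
G=G|_{u_1=2}+\partial_{u_1}G|_{u_1=2}\,(u_1-2)+O((u_1-2)^2).
\]
This gives $c_1=G(2,u_2^*)$ and $c_2=\sqrt{12/7}\,\partial_{u_1}G(2,u_2^*)$, and likewise $d_1,d_2$ from $H$. Simplifying with $u_2^*=(-3+\sqrt{21})/2$, which satisfies $u_2^{*2}+3u_2^*-3=0$ (a quick check: $c_1=(u_2^*+1)(1-u_2^*)/(2-u_2^*)$), yields the stated constants. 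The relation $d_2=4c_2$ then follows by comparing $\partial_{u_1}H$ and $\partial_{u_1}G$ at $(2,u_2^*)$ after reduction modulo the minimal polynomial of $u_2^*$. These final simplifications would be verified with computer algebra, as in the paper's Maple worksheet.
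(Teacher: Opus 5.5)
There is a genuine gap, and it sits exactly at the step you flag as the main obstacle. The rest is sound. The quotients $G,H$ are meromorphic on $D(\theta)$ and agree with $F(\cdot,1)$, $\partial_uF(\cdot,1)$ near $0$ by the identity principle. Gluing $\theta=\pi/2$ and $\theta=3\pi/2$ removes the cut at $0$. One checks $\partial_{u_1}G(2,u_2^*)=3(1-u_2^*)/(2-u_2^*)^2$ and $\partial_{u_1}H(2,u_2^*)=12(1-u_2^*)/(2-u_2^*)^2$, which gives $c_2$ and $d_2=4c_2$.

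What is missing is a proof that $u_1+u_2-u_1u_2$ has no zero (or that $G,H$ have no pole) on $\{|z|<R\}\setminus[1/12,\infty)$. The elimination you propose cannot supply it. Indeed $K(z,1+w)=-zw^4-3zw^3+(1-4z)w^2-3zw-z$ is palindromic in $w$. So for \emph{every} $z$ the four roots of the kernel split into two pairs with $(u_a-1)(u_b-1)=1$. Hence the system $K(z,u_1)=K(z,u_2)=0$, $u_1+u_2=u_1u_2$ has solutions above every $z$. Any resultant in $z$ is identically zero, and there is no finite candidate set to inspect. Everything hinges on which pair of branches satisfies the relation, and your proposal does not determine this.

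The paper closes this step by positivity instead. On $(0,1/12)$ one has $0<u_2<1<u_1<2$, so the denominator does not vanish there and the formula continues $F(\cdot,1)$ along $(0,1/12)$. Pringsheim's theorem then gives radius of convergence at least $1/12$. For $|z|<1/12$ the bound $|F(z,1)|\le F(|z|,1)\le F(1/12,1)<\infty$ holds, with finiteness coming from the expansion at $1/12$. This rules out a pole of the meromorphic formula at any point of the circle $|z|=1/12$ other than $1/12$ itself, where the denominator is $2-u_2^*\neq 0$. Compactness then yields $R>1/12$. The same works for $\partial_uF$, whose coefficients are also nonnegative.

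Alternatively, you can rescue your own route with the palindromic structure. Set $u_0:=1+1/(u_1-1)$; it is an analytic root of $K$ on $D(\theta)$, namely the one behaving like $1+x^{-1/2}$ as $x\to0^+$. Then $u_1+u_2-u_1u_2=(u_1-1)(u_0-u_2)$. The factor $u_1-1$ never vanishes because $K(z,1)=-z$. For $z\notin\{0,1/12,-4\}$, $K(z,\cdot)$ has four simple roots, and $u_0\neq u_2$ near $0^+$, so these two branches never meet on $D(\theta)$. Combined with your gluing, this gives the continuation on the whole slit disc $\{|z|<4\}\setminus[1/12,4)$, a stronger statement obtained without positivity. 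The paper's argument has the advantage of needing no explicit knowledge of how the roots pair up.
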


\begin{proof}
The general idea of the proof is to use the \eqref{eq:F_in_u1u2} of $F$
in terms of $u_1$ and $u_2$ together with the local behavior of $u_1$ and $u_2$ around $z=1/12$ (given in Lemma~\ref{lem:u_i_analytic}) to find the behavior of $F$. Nevertheless, there are some subtleties to justify that $F$ is analytic on an appropriate domain, due in particular to the fact that we needed to remove a half-line in the definition of $D(\theta)$ to construct $u_1$ and $u_2$.

Until further notice, we use the analytic continuations of $u_1$ and $u_2$ given by \cref{lem:u_i_analytic} for $\theta=\pi$. These continuations are defined in $D(\pi)$, which contains the open interval $(0,1/12)$. 

We first show that the radius of convergence of $F(z,1)$ is at least $1/12$, and that the expression of $F(z,1)$ given in~\cref{eq:F_in_u1u2} is valid on a complex neighborhood of $(0,1/12)$. 

We start by the following observations: 
\begin{itemize}
 \item the exponential bound on the number of triangulations (see~\cref{eq:expNbTriangulations}) implies that $F(z,1)$ has positive radius of convergence; 
 \item the expression of $F(z,1)$ in~\cref{eq:F_in_u1u2} is also valid for $z$ a complex number of small enough modulus and not a negative real number (indeed, the proof of \cref{lem:F_in_u1u2} also works in this domain). 
\end{itemize}

Next, for $x \in (0,1/12)$, we have $0<u_2(x)<1<u_1(x)<2$, which ensures that $u_1+u_2 - u_1u_2 \neq 0$ on $(0,1/12)$. By continuity, this remains valid on a complex neighborhood of $(0,1/12)$. Therefore, $F(z,1)$ can be analytically continued on this complex neighborhood using~\cref{eq:F_in_u1u2}. In particular, $F(z,1)$ has no singularity on the open interval $(0,1/12)$. As $F(z,1)$ has nonnegative coefficients, it follows from Pringsheim's theorem that its radius of convergence is at least $1/12$. 

As a result, $F(z,1)$ is analytic on the open disk of radius $1/12$, and~\cref{eq:F_in_u1u2} is valid in this domain except for $z$ a nonpositive real number (recall that $u_1$ and $u_2$ are defined on $D(\pi)$, so in particular, they are not defined when $z$ is a nonpositive real number). 

Recall that $u_1(1/12)=2$ and $0<u_2(1/12)<1$ (see \cref{lem:u_i_analytic}). Hence, $u_1+u_2 - u_1u_2 \neq 0$ also holds on a slit neighborhood of $1/12$, \emph{i.e.} on a neighborhood of $1/12$ without the half-line $[1/12,+\infty)$ (recall that $u_1$ is not defined on this half-line). Consequently, $F(z,1)$ can be analytically continued using \cref{eq:F_in_u1u2} on this slit neighborhood of $1/12$. On this domain, the asymptotic expansion given in~\cref{eq:dvptFz1} follows by elementary computations from the expansions of $u_1(z)$ and $u_2(z)$ in $z=1/12$  given in~\cref{lem:u_i_analytic}. 
(The computations are cumbersome and were performed with Maple).

It remains to prove that $F(z,1)$ is analytic on $D_F$. 
The right-hand side of~\cref{eq:F_in_u1u2} is analytic on $D(\pi)$, except on the (potentially empty) set $\{z : u_1(z)+u_2(z) - u_1(z)u_2(z) = 0\}$. In particular, it coincides with $F(z,1)$ on 
\[
 \{z : |z| < 1/12\} \setminus \Big( (-\infty,0] \cup \{z : u_1(z)+u_2(z) - u_1(z)u_2(z) = 0\} \Big).
\]

Assume for the sake of contradiction that there exists $z_0$ of modulus at most $1/12$ such that $u_1(z_0)+u_2(z_0) - u_1(z_0)u_2(z_0) = 0$. Then, there exists a sequence $(z_n)_{n\geq1}$ of complex numbers of modulus less than $1/12$ such that $z_n \to z_0$ and $|F(z_n,1)| \to +\infty$. 
But since $F(z,1)$ has nonnegative coefficients, we have $|F(z_n,1)| \leq F(|z_n|,1)$, which leads to a contradiction since $F(x,1)$ is bounded on $[0,1/12]$
(here, we can take the closed interval $[0,1/12]$ since $F(x,1)$ converges when $x$ tends to $1/12$, see \cref{eq:dvptFz1}). Therefore, $u_1(z)+u_2(z) - u_1(z)u_2(z) \neq 0$ on the slit disk $\{z : |z| \leq 1/12\} \setminus (-\infty,0]$. This allows to continue $F(z,1)$ analytically on a neighborhood of each $z_0$ of modulus $1/12$, except $z_0=-1/12$ and $z_0=1/12$. 

Changing the value of $\theta$ (from $\pi$ to $\pi/2$ for instance), the same reasoning ensures that $F(z,1)$ can be analytically continued on a neighborhood of $-1/12$.
Summing up it  can be analytically continued on a neighborhood of any $z_0$ of modulus $1/12$, except $z_0=1/12$. Since we proved earlier that $F(z,1)$ has an analytic continuation on a slit neighborhood of $1/12$, this shows that $F(z,1)$ has an analytic continuation on a domain $\{z : |z| < R \} \setminus [1/12,+\infty)$ for some well-chosen $R>1/12$. 

This concludes the proof for $F(z,1)$. The proof of $\partial_u F(z,1)$ follows the exact same steps. 
\end{proof}

\subsection{Asymptotic number of triangulations of the double circle}

From~\cref{lem:nbTriDoubleCircle}, to compute the asymptotic number of triangulations of the double circle, we are interested in asymptotic expansions of $Q_{k-1}(1)$ and $[u^2]Q_{k-1}(u)$. Recall that $F(z,u) = \sum_{k\geq 1} Q_k(u) z^k$. 

The quantity $Q_{k-1}(1)$ is immediate to estimate. Applying the transfer theorem (see~\cite[Corollary VI.1 p.392]{Flajolet}) to the expression established by \cref{prop:F}, we get
\[
 Q_{k-1}(1) = [z^{k-1}] F(z,1) \sim \frac{c_2}{2 \sqrt{\pi}} \, 12^{k-1} \, k ^{-3/2}.
\]

For the quantity $[u^2]Q_{k-1}(u)$, keeping in mind that $F(z,0)=0$, we observe that taking the coefficient of $u^2$ in~\cref{eq:functionalF} gives 
\[
 [u^2]F(z,u) = -zF(z,1) + z\partial_u F(z,1).
\]
Applying again the transfer theorem, we obtain 
\begin{align*}
 [u^2]Q_{k-1}(u) &= -[z^{k-2}] F(z,1) + [z^{k-2}]\partial_u F(z,1) \\
 &\sim \frac{d_2-c_2}{2 \sqrt{\pi}} \, 12^{k-2} \, k ^{-3/2} = \frac{3c_2}{2 \sqrt{\pi}} \, 12^{k-2} \, k ^{-3/2}.
\end{align*}

Finally, from \cref{lem:nbTriDoubleCircle}, 
\begin{align*}
|\mathfrak{T}(\DC_k)|  &= Q_{k-1}(1) - [u^2] Q_{k-1}(u) \sim \frac{9c_2}{2 \sqrt{\pi}} \, 12^{k-2} \, k ^{-3/2}.
\end{align*}
This completes the proof of \cref{thm:triangulations_double_circle}.

\section{About numerical computations}
\label{sec:numeric}
In this section, we shortly report on a (failed) attempt to find chirotopes with more triangulations than the Koch chain. Let us first recall the construction of the Koch chain by Rutschmann and Wettstein~\cite{rutschmann2023chains}, presented in our language of rooted chirotopes with joins and meets.
We start with the rooted chirotope $\bar K_0$ with three points (one being the root).
Then define successively for odd $i$'s $\bar K_i = \bar K_{i-1} \vee \bar K_{i-1}$,
and for even $i$'s $\bar K_i=\bar K_{i-1} \wedge \bar K_{i-1}$. The Koch chains $K_i$ of Rutschmann and Wettstein then corresponds to $\bar K_i$ without its root element.

For large size chirotopes, the Koch chains are the known chirotopes with the largest number of triangulations. However, for small size, we can find chirotopes with more triangulations. For example, for $i=3$, the Koch chain $K_3$ has $n=9$ elements and $424$ triangulations, but there exists a chirotope $\chi^{\text{MAX}}_9$ with 9 elements and $729$ triangulations~\cite{AAK02}. Our idea was therefore the following: if, in the recursive construction of the Koch chain, we replace $\bar K_3$ by a chirotope with the same number of elements (here, 10, because $\bar K_3$ is $K_3$ with an additional root), but more triangulations, we may obtain chirotopes of larger size with more triangulations than the Koch chains of the corresponding size.\medskip

We therefore made the following experiment. Using the small chirotopes database of Aichholzer and co-authors~\cite{AAK02}, we considered all chirotopes with 10 points with all possible roots (which must be an extreme point). For each such rooted chirotope $(\chi,u)$,
 we constructed recursively a sequence $\bar K_i^{(\chi,u)}$ by taking $\bar K_3^{(\chi,u)}=(\chi,u)$, and using the same induction as for the Koch chain
 (i.e.~for odd $i$'s $\bar K_i = \bar K_{i-1} \vee \bar K_{i-1}$,
and for even $i$'s $\bar K_i=\bar K_{i-1} \wedge \bar K_{i-1}$).
Using the formulas given in Proposition~\ref{prop:triangulation_join}, we compute the bivariate polynomial associated to $\bar K_i^{(\chi,u)}$, up to $i=7$, and the number of weak triangulations\footnote{A small computational trick allows to compute this number without fully computing the last bivariate polynomial.} of $\bar K_8^{(\chi,u)}$. It turns out that the largest number of weak triangulations is obtained when $(\chi,u)$ is the Koch chain $\bar K_3$, suggesting that the Koch chain is optimal, at least with this kind of construction. Data related to these computations are available on request.
We did similar types of experiments with other constructions allowing to compute recursively the numbers of triangulations, but never manage to find chirotopes with more triangulations than the Koch chain.

\bibliographystyle{bibli_perso}
\newcommand{\etalchar}[1]{$^{#1}$}

\end{document}